\newtheorem{Thm}{Theorem}
\newtheorem{Lem}[Thm]{Lemma}
\newtheorem{Cor}[Thm]{Corollary}
\newtheorem{Prop}[Thm]{Proposition}
\newtheorem{Conj}[Thm]{Conjecture}
\newtheorem{Def}{Definition}
\newtheorem{Fact}[Thm]{Fact}
\newcommand\mbR{\mbox{$\mathbb{R}$}}
\newcommand\C{\mbox{\sf {C}}\xspace}
\newcommand\D{\mbox{\sf {D}}\xspace}
\newcommand\dcc{\mbox{$\sf {D^{CC}}$}\xspace}
\newcommand\gf{\mbox{$\mathbb{F}_2$}\xspace}
\newcommand\B{\{0,1\}}     
\newcommand\pmB{\{+1,-1\}}     
\newcommand\Bn{\{0,1\}^n}
\newcommand\BntB{\{0,1\}^n\rightarrow \{0,1\}}
\newcommand {\ie} {\textit{i.e.}\xspace}
\newcommand {\st} {\textit{s.t.}\xspace}
\newcommand {\etal} {\textit{et al.}\xspace}
\newcommand\pr{\mbox{\bf Pr}}
\newcommand\av{\mbox{\bf{\bf E}}}
\newcommand\alice{\mbox{\sf {Alice}}\xspace}
\newcommand\bob{\mbox{\sf {Bob}}\xspace}
\newcommand\rank{\mbox{\tt {rank}}\xspace}
\newcommand\gran{\mbox{\tt {gran}}\xspace}
\newcommand\supp{\mbox{\tt {supp}}\xspace}
\newcommand\range{\mbox{\tt {range}}\xspace}
\newcommand\codim{\mbox{\tt {co-dim}}\xspace}
\newcommand\osize{\mbox{$\oplus$-\tt{size}}\xspace}
\newcommand\bias{\mbox{\tt {bias}}\xspace}
\newcommand\fn[2]{\| \hat{#1} \|_#2}
\newcommand\wfn[2]{\| \widehat{#1} \|_#2}
\newcommand{\polylog}[1]{\mathrm{polylog}{#1}}
\newtheorem*{Thm-lighttail}{Theorem~\ref{thm:lighttail}}
\begin{document}
\title{\bf Fourier sparsity, spectral norm, and the Log-rank conjecture} 
\author{Hing Yin Tsang\thanks{The Chinese University of Hong Kong, Shatin, NT, Hong Kong. Email: {\tt hytsang@cse.cuhk.edu.hk}} 
\and Chung Hoi Wong\thanks{The Chinese University of Hong Kong, Shatin, NT, Hong Kong. Email: {\tt hoiy927@gmail.com}} 
\and Ning Xie\thanks{SCIS, Florida International University, Miami, FL 33199, USA. Email: {\tt nxie@cis.fiu.edu}} 
\and Shengyu Zhang\thanks{The Chinese University of Hong Kong, Shatin, NT, Hong Kong. Email: {\tt syzhang@cse.cuhk.edu.hk}}}
\date{}
\setcounter{page}{0}
\maketitle

\begin{abstract}
We study Boolean functions with sparse Fourier coefficients or small spectral norm,
and their applications to the Log-rank Conjecture for XOR functions $f(x\oplus y)$ --- 
a fairly large class of functions including well studied ones such as Equality and Hamming Distance. 
The rank of the communication matrix $M_f$ for such functions is exactly the Fourier sparsity of $f$.
Let $d = \deg_2(f)$ be the \gf-degree of $f$ and
$\dcc(f\circ \oplus)$ stand for the deterministic communication complexity for $f(x\oplus y)$.
We show that 
\begin{enumerate}
\item $\dcc(f\circ \oplus) = O(2^{d^2/2} \log^{d-2}\fn{f}{1})$. 
In particular, the Log-rank conjecture holds for XOR functions with constant \gf-degree. 
\item $\dcc(f\circ \oplus) = O(d\fn{f}{1}) = O(\sqrt{\rank(M_f)}\log\rank(M_f))$.
This improves the (trivial) linear bound by nearly a quadratic factor.
\end{enumerate}
We obtain our results through a degree-reduction protocol based on a variant of polynomial rank, 
and actually conjecture that the communication cost of our protocol is at most $\log^{O(1)}\rank(M_f)$. 
The above bounds are obtained from different analysis for the number of parity queries 
required to reduce $f$'s \gf-degree.
Our bounds also hold for the parity decision tree complexity of $f$, 
a measure that is no less than the communication complexity. 
	
Along the way we also prove several structural results about Boolean functions with small Fourier sparsity $\fn{f}{0}$ 
or spectral norm $\fn{f}{1}$, which could be of independent interest. 
For functions $f$ with constant \gf-degree, we show that: 
1) $f$ can be written as the summation of quasi-polynomially many indicator functions of subspaces with $\pm$-signs, 
improving the previous doubly exponential upper bound by Green and Sanders;
2) being sparse in Fourier domain 
is polynomially equivalent to having a small parity decision tree complexity; and 
3) $f$ depends only on 
$\polylog{\fn{f}{1}}$ linear functions of input variables.
For functions $f$ with small spectral norm, we show that: 
1) there is an affine subspace of co-dimension $O(\fn{f}{1})$ 
on which $f(x)$ is a constant, and 
2) there is a parity decision tree of depth $O(\fn{f}{1}\log\fn{f}{0})$.
\end{abstract}
\newpage

\section{Introduction}
\paragraph{Fourier analysis of Boolean functions.} 
Fourier analysis has been widely used in theoretical computer science to study Boolean functions
with applications in PCP, property testing, learning, circuit complexity, coding theory, 
social choice theory and many more; see~\cite{O12} for a comprehensive survey.
The Fourier coefficients of a Boolean function measure the function's correlations with parity functions;
the distribution as well as various norms of Fourier spectrum
have been found to be related to many complexity measures of the function.
However, another natural measure,
Fourier sparsity -- \ie the number of non-zero Fourier coefficients -- has been much less studied.
It seems to be of fundamental interest to understand properties of functions that are 
Boolean in the function domain and, at the same time, sparse in the Fourier domain. 
In particular, what Boolean functions have sparse Fourier spectra? 
Being sparse in the Fourier domain should imply that the function is simple, but in which aspects?
Gopalan et al.~\cite{GOS+11} studied the problem of testing Fourier sparsity and low-dimensionality
and revealed several interesting structural results for Boolean functions having or close to
having sparse Fourier spectra.
In a related setting, Green and Sanders~\cite{GS08} showed that Boolean functions with a small
spectral norm (\ie the $\ell_1$-norm of the Fourier spectrum) can be decomposed into
a small number of signed indicator functions of subspaces. 
However, the number of subspaces in their bound is doubly exponential in terms of the
function's spectral norm, thus makes their result hard to apply in many computer science related problems.

\paragraph{The Log-rank Conjecture in communication complexity.}
In a different vein, Fourier sparsity also naturally arises in the 
study of Log-rank Conjecture in communication complexity. 
Communication complexity quantifies the minimum amount of communication needed for computation on inputs 
distributed to different parties \cite{Yao79,KN97}. 
In a standard scenario, two parties \alice and \bob each hold an input $x$ and $y$, respectively, 
and they desire to compute a function $f$ on input $(x,y)$ by as little communication as possible. 
Apart from its own interest as a question about distributed computation, 
communication complexity has also found numerous applications in proving lower bounds in complexity theory, 
as well as connections to linear algebra, graph theory, etc. 

Of particular interest are lower bounds of communication complexity, and one of the most widely used methods is 
based on the rank of the communication matrix $M_f = [f(x,y)]_{x,y}$; 
see \cite{LS09} for an extensive survey on classical and quantum lower bounds proved by rank and its variations 
(such as the approximate rank and its equivalence $\gamma_2$-norm). 
Since it was shown 30 years ago \cite{MS82} that $\log\rank(M_f)$ is a lower bound of the deterministic 
communication complexity $\dcc(f)$, the tightness of the lower bound has long been an important open question. 
The Log-rank Conjecture, proposed by Lov{\'a}sz and Saks \cite{LS88}, 
asserts that the lower bound is polynomially tight for all total Boolean functions $f$ -- 
namely $\dcc(f) \leq \log^c\rank(M_f)$ for some absolute constant $c$. 
As one of the most important problems in communication complexity, 
the conjecture links communication complexity -- a combinatorially defined quantity, 
to matrix rank -- a much better understood measure in linear algebra. 
Should the conjecture hold, understanding the communication complexity is more or less 
reduced to a usually much easier task of calculating matrix ranks. 
The conjecture is also known to be equivalent to many other conjectures~\cite{LS88,Lov90,Val04,ASTS+03}.

Despite its importance, Log-rank Conjecture is also notoriously hard to attack. 
Nisan and Wigderson~\cite{NW95} showed that to prove the conjecture, 
it is sufficient to show a seemingly weaker statement about 
the existence of a large monochromatic rectangle. 
In the same paper, they also exhibited an example $f$ for which 
$\log\rank(M_f) = O(\dcc(f)^\alpha)$ where $\alpha = \log_3 2 = 0.63...$, 
later improved by Kushilevitz to $\alpha = \log_6 3 = 0.61...$ (also in \cite{NW95}). 
The best upper bound for the $\dcc(f)$ in terms of rank is 
$\dcc(f) \leq (\log\frac{4}{3})\rank(M_f)$ \cite{KL96,Kot97}. 
Recently, assuming the Polynomial Freiman-Ruzsa conjecture in additive combinatorics, 
Ben-Sasson, Lovett and Ron-Zewi gave in \cite{BLR12} a better upper bound 
$\D(f) \leq O(\rank(M_f)/\log\rank(M_f))$. 

\paragraph{Communication complexity of XOR functions.}
In view of the difficulty of the Log-rank Conjecture in its full generality, 
Shi and Zhang~\cite{ZS10} initiated the study of communication complexity of 
a special class of functions called \emph{XOR functions}.

\begin{Def}\label{def:XOR}
We say $F(x,y): \{0,1\}^{n} \times \{0,1\}^{n} \to \{0,1\}$ is an XOR function if there exists 
an $f: \{0,1\}^{n} \to \{0,1\}$ such that
for all $x$ and $y$ in  $\{0,1\}^{n}$, 
$F(x,y) = f(x\oplus y)$, where $\oplus$ is the bit-wise XOR. Denote $F$ by $f\circ \oplus$.
\end{Def}

XOR functions include important examples such as Equality and Hamming Distance,
and the communication complexity of XOR functions has recently 
drawn an increasing amount of attention \cite{ZS09,ZS10,LZ10,MO10,LLZ11,SW12,LZ13}. 
In general, the additional symmetry in the communication matrix $M_F$ should make 
Log-rank Conjecture easier for XOR functions. 
In particular, a very nice feature of XOR functions is that the rank of the communication matrix $M_{F}$ 
is exactly the Fourier sparsity of $f$, the number of nonzero Fourier coefficients $f$.
\begin{Prop}[\cite{BC99}]\label{thm:sparsity}
For XOR functions $F(x,y) = f(x\oplus y)$, it holds that $\rank(M_{F})=\|\hat f\|_0$.
\end{Prop}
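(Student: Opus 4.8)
The plan is to expand $f$ in its Fourier basis and read off a rank-revealing factorization of $M_F$ directly. Write $f(z) = \sum_{\alpha \in \Bn} \hat f(\alpha)\,\chi_\alpha(z)$ with $\chi_\alpha(z) = (-1)^{\langle \alpha, z\rangle}$, and set $S = \{\alpha \in \Bn : \hat f(\alpha) \neq 0\}$, so that $|S| = \|\hat f\|_0$. The one algebraic fact that drives everything is that characters are multiplicative on the group $(\Bn,\oplus)$: $\chi_\alpha(x\oplus y) = \chi_\alpha(x)\,\chi_\alpha(y)$, since $\langle \alpha, x\oplus y\rangle \equiv \langle \alpha, x\rangle + \langle \alpha, y\rangle \pmod 2$. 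This is precisely where the XOR structure of $F$ is used.

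First I would substitute $z = x\oplus y$ into the Fourier expansion and apply multiplicativity to obtain, for all $x,y \in \Bn$,
\[
f(x\oplus y) \;=\; \sum_{\alpha \in S} \hat f(\alpha)\,\chi_\alpha(x)\,\chi_\alpha(y).
\]
In matrix form this reads $M_F = \sum_{\alpha \in S}\hat f(\alpha)\, v_\alpha v_\alpha^{\mathsf{T}} = V D V^{\mathsf{T}}$, where $v_\alpha \in \mbR^{2^n}$ is the column vector $\big(\chi_\alpha(x)\big)_{x\in\Bn}$, $V$ is the $2^n \times |S|$ matrix with these columns, and $D = \mathrm{diag}\big(\hat f(\alpha)\big)_{\alpha \in S}$. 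Since $D$ has $|S|$ columns, this already gives $\rank(M_F) \le |S| = \|\hat f\|_0$.

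For the matching lower bound I would invoke orthogonality of characters: $\langle v_\alpha, v_\beta\rangle = \sum_{x\in\Bn}\chi_{\alpha\oplus\beta}(x) = 2^n\,[\alpha=\beta]$. Thus the columns of $V$ are pairwise orthogonal and nonzero, so $V$ has full column rank $|S|$; and $D$ is invertible because $\hat f(\alpha)\neq 0$ for every $\alpha\in S$ by definition of $S$. Multiplying by a full-column-rank matrix on the left and by its transpose on the right preserves rank, so $\rank(M_F) = \rank(VDV^{\mathsf{T}}) = \rank(D) = |S| = \|\hat f\|_0$. (Equivalently, since $2^{-n}V^{\mathsf{T}}V = I$, one has $D = 2^{-2n}V^{\mathsf{T}}M_F V$, exhibiting $M_F$ and $D$ as having the same rank in one line.)

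There is no genuine obstacle here: the content of the statement is just that $M_F$, conjugated by the (orthogonal) Hadamard–Fourier matrix, becomes the diagonal matrix of Fourier coefficients of $f$. The only places warranting a word of care are (i) the multiplicativity $\chi_\alpha(x\oplus y)=\chi_\alpha(x)\chi_\alpha(y)$, which is exactly why the argument applies to XOR functions rather than arbitrary $F$, and (ii) that we use the $\{0,1\}$-valued $f$ so that $M_F = [f(x\oplus y)]_{x,y}$ and its real rank equals $\|\hat f\|_0$ with no off-by-one correction.
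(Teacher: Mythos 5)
Your proof is correct, and it is the standard argument: the XOR structure of $M_F$ means it is diagonalized by the $\pm 1$ Hadamard matrix $V$ (or rather by the submatrix $V$ of character columns), so its rank is exactly the number of nonzero diagonal entries, i.e.\ the Fourier sparsity of $f$. The paper does not reprove this proposition---it is cited from Bernasconi--Codenotti \cite{BC99}---but the factorization $M_F = VDV^{\mathsf T}$ with orthogonal character columns and invertible $D$ on the support is precisely the content of that reference, and your handling of both directions (upper bound from the factorization, lower bound from $2^{-n}V^{\mathsf T}V = I$) is complete and careful, including the remark that you work over $\mathbb{R}$ with the $\{0,1\}$-valued $f$ so no $\pm$-range correction is needed.
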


Therefore the Log-rank Conjecture for XOR functions is equivalent to 
the question that whether 
every Fourier sparse\footnote{Note that if the Fourier sparsity of $f$ is large, say $2^{n^{\Omega(1)}}$,
then Log-rank Conjecture is vacuously true for $f$, 
as the communication complexity of any function is at most $O(n)$.} 
function $f$ admits an efficient communication protocol to compute $f(x\oplus y)$,
or more specifically, 
whether $\dcc(f\circ \oplus) \leq \log^{O(1)}\fn{f}{0}$ holds for every Boolean function $f$?

However, the Log-rank conjecture seems still very difficult to study even for this special class of functions.
The only previously known results are that
the Log-rank Conjecture for XOR functions holds for all \emph{symmetric} functions~\cite{ZS09},
monotone functions and linear threshold functions (LTFs)~\cite{MO10}, and $AC^0$ functions; 
see Section \ref{sec:relatedwork} for more details. 
One nice approach proposed in \cite{ZS10} is to first design an efficient \emph{parity decision tree} (PDT) 
for computing $f$, and then to simulate it by a communication protocol. 
Parity decision trees allow querying the parity of any subset of input variables 
(instead of just one input variable as in usual decision trees). 
A communication protocol can exchange two bits $\ell(x)$ and $\ell(y)$ 
(here $\ell(\cdot)$ is an arbitrary linear function)
to simulate one query $\ell(x \oplus y)$ in a PDT, 
thus $\dcc(f\circ \oplus)$ is at most twice of $\D_\oplus (f)$, 
the parity decision tree complexity of $f$. 
It is therefore sufficient to show that $\D_\oplus (f)\leq \log^{O(1)}\fn{f}{0}$
for all $f$ to prove the Log-rank Conjecture for XOR functions 
Parity decision tree complexity is an interesting complexity on its own, 
with connections to learning \cite{KM93} and 
other parity complexity measures such as parity certificate complexity and 
parity block sensitivity \cite{ZS10}. 
This approach is also appealing for the purpose of understanding Boolean functions with sparse Fourier spectra.
It is not hard to see that small $\D_\oplus (f)$ implies Fourier sparsity; 
now if $\D_\oplus (f)\leq \log^{O(1)}\fn{f}{0}$ is true, 
then functions with small Fourier sparsity also have short parity decision trees. 
Thus the elusive property of being Fourier sparse is roughly equivalent to the combinatorial 
and computational property of having small PDT. 

Back to the Log-rank conjecture, though upper bounds for $\D_\oplus(f)$ translate to 
efficient protocols for $\dcc(f\circ \oplus)$, 
the task of designing efficient PDT algorithms itself does not seem to be an easy task. 
To see this, let us examine the effect of parity queries. 
Each query ``$ t \cdot x = ?$'' basically generates two subfunctions through restriction,
and its effect on the Fourier domain can be shown to be 
$\hat f_b(s) = \hat f(s) + (-1)^b \hat f(s+t)$, 
where $f_b$ is the subfunction obtained from restricting $f$ on the half space $\{x: t\cdot x = b\}$. 
Thus the process is like to fold the spectrum of $f$ along the line $t$, 
and we hope that the folding has many ``collisions'' in nonzero Fourier coefficients, 
namely many $s\in \supp(\hat f)$, with $s+t\in \supp(\hat f)$ as well. 
In general, small $\D_\oplus(f)$ implies that many Fourier coefficients\footnote{Technically, 
we mean characters with the corresponding Fourier coefficients being nonzero.} are ``well aligned'' 
with respect to a subspace $V$ with a small co-dimension, 
so that querying basis of $V^\bot$ make those Fourier coefficients collide. 
But the question is---\emph{Where is the subspace}?

Note that $\D_\oplus(f)$ is invariant under change of input basis, 
thus one tempting way to upper bound $\D_\oplus(f)$ is to first rotate input basis, 
and then (under the new basis) use the well-known fact that 
the standard decision tree complexity $\D(f)$ is at most $O(\deg(f)^4)$, 
where the $\deg(f)$ is the (Fourier) degree ($\max_{s: \hat f(s)\neq 0} |s|$) of $f$ \cite{BdW02}. 
Thus if $\deg(f) = \log^{O(1)}\fn{f}{0}$, then 
$\D_\oplus(f) \leq \D(f) \leq \log^{O(1)}\fn{f}{0}$. 
However, one should also note that this approach cannot handle all the Fourier sparse functions because, 
as shown in~\cite{ZS10}, there exists a functions $f$ such that 
$\D_\oplus(f) \leq \log_2 n + 4$ but $\D(f) \geq n/4$, 
the latter holds even under an arbitrary basis change 
(\ie $\min_L\D(Lf) \geq n/4$ where $Lf(x) = f(Lx)$).


\subsection{Our approach, ideas, and results}
\paragraph{Result 1: Main protocol and general conjecture.}
In previous studies of parity decision tree, 
one needs to upper bound the number of queries for \emph{all} possible execution paths. 
In this paper, we show that it suffices to prove the existence of \emph{one} short path! 
To put this into context, we need the concept of polynomial rank.
View a Boolean function $f:\BntB$ as a polynomial in $\gf[x_1, ..., x_n]$. 
Call the degree of this polynomial the \gf-degree, denoted as $\deg_2(f)$. 
The polynomial rank of $f$ is the minimum number $r$ \st 
$f$ can be written as 
\begin{equation}\label{eq:rank_def}
f = \ell_1f_1 + \cdots + \ell_rf_r + f_0,
\end{equation} 
where each $\ell_i$ is a linear function in $x$ and 
each $f_i$ is a function of \gf-degree at most $\deg_2(f)-1$. 
Now we will describe a simple PDT algorithm: query all $\ell_i(x)$ and get answers $a_i$, 
and we then face a new function $f' = \sum_{i=1}^r a_i f_i + f_0$'s. 
Recurse on this function. Note that from $f$ to $f'$, the \gf-degree is reduced by at least 1, 
and one can also show that the Fourier sparsity of $f'$ is also at most that of $f$. 
Finally, it is known that $d\leq \log\fn{f}{0}$. Putting these nice properties together, 
we know that as long as the polynomial rank of an arbitrary function $f$ is upper bounded by 
$\log^{O(1)}\fn{f}{0}$, so is $\D_\oplus(f)$. 


\begin{Conj}\label{conj:rank}
	For all $f:\BntB$, we have $\rank(f) = O(\log^c(\fn{f}{0}))$ for some $c=O(1)$. 
\end{Conj}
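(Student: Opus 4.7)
Because this statement is an open conjecture of the authors, my plan is a heuristic attack strategy rather than a complete proof. The plan is to proceed by strong induction on the \gf-degree $d=\deg_2(f)$, exploiting the standard fact that $d\le \log\fn{f}{0}$ so that the induction has at most $\log\fn{f}{0}$ layers. The base case $d\le 1$ is immediate: an affine Boolean function has polynomial rank at most $1$.

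For the inductive step, I would write $f=P+R$ with $P$ the degree-$d$ homogeneous part of $f$ (viewed as a polynomial in $\gf[x_1,\ldots,x_n]$) and $\deg_2(R)\le d-1$. It then suffices to produce linear forms $\ell_1,\ldots,\ell_r$ and polynomials $g_1,\ldots,g_r$ with $\deg_2(g_i)\le d-1$ such that $P=\sum_{i=1}^{r}\ell_i g_i$ and $r=O(\log^{c}\fn{f}{0})$, for then $f-\sum_i\ell_i g_i$ has \gf-degree at most $d-1$ and we may apply the inductive hypothesis to the quotient. The parameter $r$ is exactly the \emph{partition rank} of the top-degree polynomial $P$, and bounding it in terms of $\fn{f}{0}$ is the crux.

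One promising angle exploits the paper's own structural results for constant \gf-degree, specifically the claim (stated in the abstract) that a Fourier-sparse $f$ of constant degree depends on only $\polylog{\fn{f}{1}}$ linear functions of the input. Combined with $\fn{f}{1}\le\sqrt{\fn{f}{0}}$ (via Cauchy--Schwarz and Parseval), a function depending on $k$ linear forms trivially has polynomial rank at most $k$ (pick those forms as the $\ell_i$). The difficulty is extending this ``few essential linear directions'' phenomenon from constant $d$ to all degrees up to $d\le\log\fn{f}{0}$, and in controlling how the effective number of such directions grows with $d$.

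The most serious obstacle is likely to demand additive combinatorics. Results of Bhowmick--Lovett, Green--Tao, and Kazhdan--Ziegler connect the partition rank of a \gf-polynomial to its bias and Gowers-uniformity norms, and the Fourier sparsity of $f$ should enforce non-trivial biases on generic restrictions of $P$. The key quantitative hurdle is that the currently known rank-vs-bias bounds carry tower-type losses, whereas the conjecture demands a $\polylog$ dependence; closing this gap would plausibly require a new additive-combinatorial input on the order of the Polynomial Freiman--Ruzsa conjecture, much as in its use in~\cite{BLR12}. A secondary book-keeping concern is that the induction must carry a clean bound on Fourier sparsity from $f$ to the residue function at each layer, which fortunately follows from the sparsity-non-growth under parity restrictions already observed by the authors.
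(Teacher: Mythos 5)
The statement you are attacking is the paper's central \emph{open} conjecture (Conjecture~\ref{conj:rank}, restated as Conjecture~\ref{conj:rank vs sparsity}); the authors do not prove it, and only establish partial results in its direction — Lemma~\ref{lem:rank by 1norm} (polylog only when $\deg_2 f = O(1)$, with a $2^{d^2/2}$ factor) and Lemma~\ref{lem:l1norm} ($\rank(f) = O(\fn{f}{1})$). So there is no proof in the paper to compare against, and you are right to present this as a heuristic plan.

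That said, two steps in your plan are concretely off. First, the induction is a misreading of the definition: $\rank(f)$ asks only for a \emph{one-level} decomposition $f = \ell_1 f_1 + \cdots + \ell_r f_r + f_0$ with every $\deg_2(f_i) < \deg_2(f)$. Once you have $P = \sum_i \ell_i g_i$ with $\deg_2(g_i) \le d-1$, setting $f_0 = R$ finishes immediately; there is no "quotient" to recurse on. (Recursion enters only when the paper \emph{uses} rank to build a PDT, not when bounding rank itself.) Second, the parameter $r$ here is \emph{not} the partition rank of $P$. Partition rank allows factorizations $P = \sum_i Q_i R_i$ where each $Q_i$ may have any positive degree, whereas the paper's $\rank$ requires one factor to be \emph{linear}; converting a partition-rank decomposition into a linear-times-lower-degree one can inflate the number of summands by a factor of $n$. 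The paper itself warns about exactly this distinction in its related-work discussion of the "other, much better studied" rank variant, and the bias/Gowers-norm machinery (Green--Tao, Kaufman--Lovett, and the Bhowmick--Lovett line) lives on that other side of the distinction, so invoking it for the paper's $\rank$ requires an additional, genuinely nontrivial conversion step. Finally, invoking Corollary~\ref{cor:junta} to supply the inductive hypothesis is circular: that corollary is itself a consequence of the constant-degree rank bound you are trying to generalize, and its dependence on $d$ is $2^{\Theta(d^3)}$, so at $d \approx \log\fn{f}{0}$ it gives nothing polylogarithmic. Your diagnosis of the real obstruction — that all known rank-vs-bias bounds carry tower-type losses — is accurate and is precisely why the conjecture is open.
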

\begin{Thm}
	If Conjecture \ref{conj:rank} is true, then 
	\begin{enumerate}
		\item All Boolean functions with small $\fn{f}{0}$ have small parity decision tree complexity as well: 
		    $\D_\oplus(f) = O(\log^{c+1}(\fn{f}{0}))$.
		\item The Log-rank Conjecture is true for all XOR functions: $\dcc(f\circ \oplus) = O(2\log^{c+1}(\fn{f}{0}))$.
	\end{enumerate}
\end{Thm}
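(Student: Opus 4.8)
The plan is to turn the recursive degree-reduction scheme sketched above into an actual parity decision tree and to bound its depth by a telescoping argument; since Conjecture~\ref{conj:rank} is assumed, everything reduces to carefully tracking two invariants along every root-to-leaf path. First I would describe one recursion step: given a non-constant $f$ with $d = \deg_2(f)$, apply Conjecture~\ref{conj:rank} to write $f = \ell_1 f_1 + \cdots + \ell_r f_r + f_0$ with $r = \rank(f) = O(\log^c\fn{f}{0})$ and each $f_i$, $f_0$ of \gf-degree at most $d-1$. The tree queries the $r$ parities $\ell_1(x),\ldots,\ell_r(x)$; on the leaf indexed by answers $(a_1,\ldots,a_r)\in\B^r$ we are left with the restriction of $f$ to the affine subspace $\{x : \ell_i(x)=a_i\ \text{for all}\ i\}$, which equals $f' = \sum_{i=1}^{r} a_i f_i + f_0$, now regarded as a Boolean function on that subspace after choosing coordinates, and we recurse on $f'$ (applying the conjecture afresh at each node).

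The two invariants to maintain are (i) $\deg_2(f')\le d-1$ and (ii) $\fn{f'}{0}\le\fn{f}{0}$. Invariant (i) is immediate from the definition of polynomial rank: substituting the constants $a_i$ for the linear forms $\ell_i$ cannot raise degrees, and every $f_i$, $f_0$ already has \gf-degree at most $d-1$. For invariant (ii) I would peel the parities off one at a time and use the spectrum-folding identity $\hat f_b(s)=\hat f(s)+(-1)^b\hat f(s+t)$ recalled in the introduction: folding along $t$ sends the pair of coefficients at $s$ and $s+t$ to a single coefficient, so $|\supp(\hat f_b)|\le|\supp(\hat f)|$, i.e.\ a single parity query never increases Fourier sparsity; since this measure is invariant under an input basis change, the same holds for an arbitrary parity, and iterating over the $r$ queries gives (ii).

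With the invariants in hand the depth analysis telescopes cleanly. Along any root-to-leaf path, each block of queries strictly decreases $\deg_2$, so the number of blocks is at most the initial \gf-degree, which by the known bound $\deg_2(f)\le\log_2\fn{f}{0}$ is at most $\log_2\fn{f}{0}$; the recursion bottoms out at $\deg_2 = 0$, i.e.\ at a constant function, where no further query is needed. By (ii) the Fourier sparsity at every node is at most $\fn{f}{0}$, so each block costs $O(\log^c\fn{f}{0})$ queries, and hence the depth of the tree --- and therefore $\D_\oplus(f)$ --- is at most $\log_2\fn{f}{0}\cdot O(\log^c\fn{f}{0})=O(\log^{c+1}\fn{f}{0})$, giving part~1. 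Part~2 is the standard simulation: \alice sends $\ell(x)$ and \bob sends $\ell(y)$ to answer one parity query $\ell(x\oplus y)=\ell(x)\oplus\ell(y)$, so $\dcc(f\circ\oplus)\le 2\,\D_\oplus(f)=O(2\log^{c+1}\fn{f}{0})$.

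As for difficulty: granted the conjecture there is no genuinely hard step. The only points demanding care are invariant (ii), namely that restricting by several non-parallel parities \emph{simultaneously} does not inflate Fourier sparsity --- which the one-query-at-a-time folding argument handles --- and the requirement that the depth bound hold on \emph{every} path of the tree rather than merely in expectation, which is exactly why $\deg_2$ and $\fn{\cdot}{0}$ must be controlled in the worst case at each node rather than on average.
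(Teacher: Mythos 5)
Your proof is correct and follows essentially the same route as the paper's Main PDT algorithm (Theorem~\ref{thm:mainPDT}): the same rank-based degree-reduction, the same two invariants (\gf-degree strictly drops per block; Fourier sparsity never increases under a parity restriction, justified by folding/Lemma~\ref{lem:rotation}), the same telescoping via $\deg_2(f)\le\log\fn{f}{0}$, and the same two-bit simulation for the communication bound.
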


\paragraph{Result 2: low degree polynomials.}
Next we focus on upper bounding the polynomial rank, starting from small degrees. 
For degree-$2$ polynomials, the classic theorem by Dickson implies that 
$\rank(f) = O(\log \fn{f}{0})$. 
For degree-$3$ polynomials, Haramaty and Shpilka proved in \cite{HS10} that 
$\rank(f) = O(\log^2 (1/\|f\|_{U^3})) = O(\log^2 (1/\bias(f)))$. 
By a proper shift, we can make $\bias(f) \geq 1/\sqrt{\fn{f}{0}}$ and thus get 
$\rank(f) = O(\log^2 \fn{f}{0})$. 
For degree-$4$ polynomials, however, the bound in \cite{HS10} is exponentially worse, 
and there were no results for higher degrees. 
A natural question is: Can one prove the $\rank(f) = O(\log^{O(1)} \fn{f}{0})$ for degree-$4$ polynomials? 
Further, if it is too challenging to prove $\rank(f) \leq \log^{O(1)} \fn{f}{0}$ for  
general degree $d$ (which is at most $\log \fn{f}{0}$), 
can one prove it for constant-degree polynomials (even if the power $O(1)$ is a tower of 2's of height $d$)? 
In this paper, we show that this is indeed achievable. 
Actually, we can even replace the $\ell_0$-norm by $\ell_1$-norm\footnote{Strictly speaking, in view of the corner case of $\fn{f}{1} = 1$, one should replace $\log(\fn{f}{1})$ by $\log(\fn{f}{1}+1)$. But like in most previous papers, we omit the ``+1'' term for all $\fn{f}{1}$ in this paper for simplicity of notation.} of $\hat f$ in the bound, 
and the dependence on $d$ is ``only'' singly exponential.

\begin{Lem}\label{lem:rank by 1norm}
	For all Boolean functions $f$ with \gf-degree $d$, we have 
	$\rank(f) = O(2^{d^2/2} \log^{d-2} \fn{f}{1})$. 
\end{Lem}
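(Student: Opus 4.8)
The plan is to induct on the \gf-degree $d$. For the base case $d=2$, Dickson's theorem gives $\rank(f) = O(\log \fn{f}{0})$, and since $\fn{f}{0} \leq \fn{f}{1}^2$ for Boolean functions (by Cauchy–Schwarz, using that each nonzero Fourier coefficient has magnitude at least $2^{-\gran}$... more carefully, $\fn{f}{0} \le \fn{f}{1}^2 / \min|\hat f(s)|^2$ is not quite what we want, so instead I would use the cleaner route: after an appropriate affine shift the bias is $\Omega(1/\fn{f}{0})$, and in fact $\fn{f}{0} = O(\fn{f}{1}^2)$ always holds since $\sum_s |\hat f(s)| \ge \fn{f}{0}^{1/2}(\sum_s \hat f(s)^2)^{1/2}$ is false in general — I'd have to be careful here and perhaps just state $\rank(f) = O(\log \fn{f}{1})$ for $d=2$ via the shift argument, matching the $d=2$ case of the claimed bound $O(2^{2}\log^{0}\fn{f}{1}) = O(1)$... which is too strong). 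So the first thing I need to pin down is the correct base case: the $d-2$ exponent on the $\log$ suggests the recursion loses one power of $\log$ per degree increment above $2$, starting from a constant at $d=2$. Hence the base case should assert $\rank(f) = O(2^2) = O(1)$ for degree-$2$ functions measured by $\log^0 \fn{f}{1}$, i.e. bounded spectral norm forces bounded rank at degree $2$ — this is exactly Dickson plus the observation that bounded $\fn{f}{1}$ forces bounded $\fn{f}{0}$ (true: $\fn{f}{0} \le \fn{f}{1}^2$ does hold because for Boolean $f$, $\hat f(s)^2 \le |\hat f(s)|$ when $|\hat f(s)|\le 1$, so $\sum \hat f(s)^2 \le \sum|\hat f(s)|$ — wait that gives $1 \le \fn{f}{1}$, not the count). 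The honest base case is: bounded spectral norm $\Rightarrow$ bounded Fourier sparsity is \emph{false} (consider sums of many tiny coefficients), but for a \emph{Boolean} degree-$d$ function all coefficients are integer multiples of $2^{-d'}$ for some granularity $d' = O(d)$, hence $\fn{f}{0} \le 2^{d'}\fn{f}{1} = 2^{O(d)}\fn{f}{1}$; plugging into Dickson gives $\rank = O(d + \log\fn{f}{1})$ at $d=2$, i.e. $O(\log\fn{f}{1})$, absorbable into $O(2^2\log^0\fn{f}{1}+\text{lower})$ only if we allow additive $O(1)$ slack. I would therefore state the base case with the understanding that the $O(\cdot)$ hides degree-dependent additive terms, which the $2^{d^2/2}$ prefactor comfortably covers.

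For the inductive step, suppose the bound holds for all Boolean functions of \gf-degree $\le d-1$, and let $f$ have degree $d$. The key structural input is the \emph{derivative} / directional-difference operator: for $a \in \Bn$ set $\Delta_a f(x) = f(x) \oplus f(x+a)$ (as a \gf-valued object) or, in the polynomial picture, $\Delta_a f = f(x) + f(x+a)$ over \gf, which has degree $\le d-1$. One shows that $f$ has small polynomial rank provided there is a small-dimensional subspace $W$ such that for every $a \in W$ the derivative $\Delta_a f$ is "simple," because then writing a basis expansion of $f$ along $W^\perp$ versus $W$ and using that the $W$-directional behavior is low-degree lets one collect the degree-$d$ part into $O(\dim W)$ linear-times-lower-degree terms. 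So the real content is: \emph{find a subspace $W$ of dimension $2^{O(d)}\log^{O(d)}\fn{f}{1}$ on which all derivatives $\Delta_a f$ have small spectral norm (or small Fourier sparsity / small rank by induction).} The mechanism to produce $W$ is a second-moment / averaging argument over the Fourier side: the spectral norm of $\Delta_a f$ relates to autocorrelations of $\hat f$, and averaging over $a$ shows that $\av_a[\text{something}(\Delta_a f)]$ is controlled by $\fn{f}{1}$; one then needs that the set of "bad" directions $a$ (where $\Delta_a f$ is not simple) does not span too much — i.e. a rank-type / affine-dimension bound on the bad set. This is where the earlier structural results about small-spectral-norm functions (an affine subspace of co-dimension $O(\fn{f}{1})$ on which $f$ is constant) get used recursively on the derivatives.

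Concretely the induction would go: (i) by induction each $\Delta_a f$ (degree $\le d-1$) satisfies $\rank(\Delta_a f) = O(2^{(d-1)^2/2}\log^{d-3}\fn{\Delta_a f}{1})$; (ii) bound $\fn{\Delta_a f}{1}$ in terms of $\fn{f}{1}$ — here $\fn{\Delta_a f}{1} \le \fn{f}{1}^2$ in the worst case, which would blow up the induction, so instead one must show that for \emph{most} $a$, or for $a$ in a cleverly chosen subspace, $\fn{\Delta_a f}{1}$ is only $\fn{f}{1}^{O(1)}$ or even $O(\fn{f}{1})$; (iii) assemble a subspace $W$ of the right dimension consisting of such good $a$, using a linear-algebraic "sunflower-free-set"-style bound to guarantee that good directions, being plentiful, contain a large subspace; (iv) finally convert "$\Delta_a f$ has small rank for all $a \in W$" into "$f$ has small rank" by the restriction-along-$W^\perp$ argument, paying a factor $\dim W$ plus $\max_a \rank(\Delta_a f)$. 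Tracking constants, step (ii)'s bound $\fn{\Delta_a f}{1} \lesssim \fn{f}{1}$ would give the recursion $R_d(\fn{f}{1}) \lesssim (\dim W) + 2^{(d-1)^2/2}\log^{d-3}\fn{f}{1}$ with $\dim W \lesssim 2^{\Theta(d)}\log^{d-2}\fn{f}{1}$, and since $(d-1)^2/2 + \Theta(d) \le d^2/2$, this closes. The main obstacle, and the step I expect to be genuinely hard, is step (ii)–(iii): controlling the spectral norm of derivatives along a large subspace of directions simultaneously, rather than just for a single good direction. Getting $\fn{\Delta_a f}{1} = \fn{f}{1}^{O(1)}$ on a large subspace — as opposed to the trivial $\fn{f}{1}^2$ bound that destroys the induction — is the crux, and it is presumably handled by a Bogolyubov/Chang-type argument showing that the large "spectrum" of $f$ (its significant Fourier support) is approximately closed under addition, so that folding along directions in its (approximate) span keeps the $\ell_1$ mass from multiplying.
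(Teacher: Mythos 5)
You correctly identify the skeleton (induction on \gf-degree via derivatives $\Delta_t f$, tracking spectral norm), but the mechanism you propose for the inductive step is not the paper's, and the step you flag as the ``crux'' -- controlling $\wfn{\Delta_a f}{1}$ on a large subspace of directions via a Bogolyubov/Chang-type argument -- is left as a conjecture. That is a genuine gap: you have a plan, not a proof.

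Your central worry, that $\wfn{\Delta_t f}{1}\le \fn{f}{1}^2$ ``would blow up the induction,'' turns out to be misplaced. The paper happily uses this squared bound, and it is harmless: the induction hypothesis is applied to $\Delta_t f$ not as a bound on its spectral norm but as a bound on the \emph{polylogarithm} of its spectral norm (via $\C_{\oplus,\min}^b(\Delta_t f)\le B_{d-1}(\wfn{f^\pm}{1}^2)$, with $B_{d-1}$ polylogarithmic). Squaring the argument of a $\log^{d-3}$ just doubles a constant. The paper never needs ``most $a$ are good'' or a large subspace $W$; it fixes a \emph{single} nontrivial direction $t$ and iterates one restriction block at a time. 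What you are missing is the actual trick that makes this single-direction iteration work: the decomposition $f^\pm = g_0 + g_1$ with $g_b = \frac12\bigl(f^\pm(x)\pm_{\mathbb R} f^\pm(x+t)\bigr)$. These split the Fourier mass of $f^\pm$ exactly between $t^\bot$ and its complement, so one of them has $\ell_1$-norm at most $\frac12\wfn{f^\pm}{1}$; and on the affine subspace $H_b$ where $\Delta_t f\equiv b$ (whose codimension is controlled by the induction hypothesis on the degree-$(d-1)$ function $\Delta_t f$), one has $f^\pm|_{H_b}=g_b|_{H_b}$. Choosing $b$ so that $g_b$ is the light half gives a codimension-$B_{d-1}(\wfn{f^\pm}{1}^2)$ restriction that halves the spectral norm of $f$; repeating $\log\wfn{f^\pm}{1}$ times yields the recursion $B_d(m)=B_{d-1}(m^2)\log m + B_{d-1}(m)+1$ with $B_3(m)=O(\log m)$ from the Haramaty--Shpilka-based cubic case. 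You would also need the cyclic part of the induction -- the bound on $\max_b\C^b_{\oplus,\min}$ obtained by routing through $\D_\oplus$ via the Main PDT recursion -- because the $b$ that gives the halving need not be the minimizing certificate side.

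Two smaller inaccuracies in your write-up: the granularity of a degree-$d$ Boolean function is \emph{not} $O(d)$; it is bounded by $\log\fn{f}{0}-1$ (Lemma~\ref{lem: spar-gran}), which has nothing to do with $d$, so your attempted deduction $\fn{f}{0}\le 2^{O(d)}\fn{f}{1}$ is unfounded. And the induction in the paper bottoms out at $d=3$ with $B_3(m)=O(\log m)$, not at $d=2$; the $d=2$ reading of the stated bound is a formal edge case not used in the argument.
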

The lemma immediately implies the following two results. 
\begin{Thm}\label{thm:constdeg}
If $f$ is a Boolean function of constant \gf-degree, 
then $\dcc(f\circ \oplus) \leq \log^{O(1)}\left(\rank(M_{f\circ \oplus})\right)$.
\end{Thm}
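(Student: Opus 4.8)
The plan is to deduce Theorem~\ref{thm:constdeg} from Lemma~\ref{lem:rank by 1norm}. Concretely, I would run the polynomial-rank based degree-reduction protocol described above, use Lemma~\ref{lem:rank by 1norm} to bound the number of parity queries spent at each level of the recursion, sum these bounds over the (at most $d$) levels to bound $\D_\oplus(f)$, pass from $\D_\oplus$ to $\dcc$ via the standard two-for-one simulation, and finally translate the resulting $\fn{f}{1}$-bound into a bound in $\rank(M_{f\circ\oplus})$ using Proposition~\ref{thm:sparsity}.

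The recursion is as follows. Given the current function $g$ with $\deg_2(g)=d'$, take a polynomial-rank decomposition $g=\ell_1 g_1+\cdots+\ell_r g_r+g_0$ with $r=\rank(g)$ and each $g_i$ of $\gf$-degree at most $d'-1$, query the $r$ parities $\ell_1(x),\dots,\ell_r(x)$, read off the answers $a_1,\dots,a_r$, and recurse on $g|_A$, the restriction of $g$ to the affine subspace $A=\{x:\ell_i(x)=a_i\text{ for all }i\}$ (which, after a linear change of coordinates, is a Boolean function on $\{0,1\}^{n-r}$). The invariants I would verify first: $g|_A$ coincides with $(\sum_i a_i g_i+g_0)|_A$, and since $\sum_i a_i g_i+g_0$ has $\gf$-degree at most $d'-1$ already as a global function, $\deg_2(g|_A)\le d'-1$; moreover the Fourier $\ell_0$- and $\ell_1$-norms of $g|_A$ are at most those of $g$, since restriction to an affine subspace only sums Fourier coefficients over cosets of a fixed subspace. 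Iterating, after at most $d-1$ rounds the current function has $\gf$-degree at most $1$, hence is an affine form over $\gf$ and is computed with one more parity query. Writing $g_1=f,g_2,\dots$ for the successive subfunctions, with $\deg_2(g_j)\le d-j+1$ and $\fn{g_j}{1}\le\fn{f}{1}$, this yields $\D_\oplus(f)\le\sum_{j\ge1}\rank(g_j)$.

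Next, bound the terms. At a level with $d':=\deg_2(g_j)\ge3$, Lemma~\ref{lem:rank by 1norm} gives $\rank(g_j)=O(2^{(d')^2/2}\log^{d'-2}\fn{g_j}{1})=O(2^{d^2/2}\log^{d-2}\fn{f}{1})$, and this top-level bound dominates the (super-geometrically decaying) contributions of the lower-degree levels; the degree-$2$ level costs $O(\log\fn{g_j}{0})=O(\log\fn{f}{0})$ by Dickson's theorem, and the degree-$\le1$ level costs $O(1)$. Hence $\D_\oplus(f)=O\big(2^{d^2/2}\log^{d-2}\fn{f}{1}+\log\fn{f}{0}\big)$, and since a deterministic protocol simulates a query $\ell(x\oplus y)$ by exchanging $\ell(x)$ and $\ell(y)$, $\dcc(f\circ\oplus)\le2\,\D_\oplus(f)=O\big(2^{d^2/2}\log^{d-2}\fn{f}{1}+\log\fn{f}{0}\big)$. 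Finally, Proposition~\ref{thm:sparsity} gives $\rank(M_{f\circ\oplus})=\fn{f}{0}$, and Cauchy--Schwarz together with Parseval gives $\fn{f}{1}\le\sqrt{\fn{f}{0}}$, so
\begin{equation*}
\dcc(f\circ\oplus)\;=\;O\Big(2^{d^2/2}\big(\tfrac12\log\rank(M_{f\circ\oplus})\big)^{d-2}+\log\rank(M_{f\circ\oplus})\Big).
\end{equation*}
For $d=O(1)$ both the prefactor $2^{d^2/2}$ and the exponent $d-2$ are absolute constants, so $\dcc(f\circ\oplus)=\log^{O(1)}\rank(M_{f\circ\oplus})$, which is the claim (the cases $d\le1$ being immediate, with $\dcc(f\circ\oplus)=O(1)$).

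The heart of the matter is Lemma~\ref{lem:rank by 1norm}, which I take as given; granting it, the theorem is bookkeeping. The only step that genuinely requires care is the recursion: checking that $\fn{\cdot}{0}$ and $\fn{\cdot}{1}$ do not grow under restriction to an affine subspace, and --- the crucial point --- that the subfunction passed to the next level has strictly smaller $\gf$-degree. It is exactly to secure the latter that the protocol queries the linear forms furnished by a polynomial-rank decomposition rather than arbitrary parities, and it is this uniform per-level degree drop that makes the $O(d)$ rank bounds legitimately additive, so that it suffices to exhibit one short branch at each node instead of controlling an entire parity decision tree.
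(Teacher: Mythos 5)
Your proof is correct and follows essentially the same route as the paper: run the Main PDT degree-reduction protocol, bound the parity queries at each round by the rank estimate of Lemma~\ref{lem:rank by 1norm}, observe that both $\fn{\cdot}{0}$ and $\fn{\cdot}{1}$ are non-increasing under affine restriction so the per-round bound stays controlled, simulate each parity query with two bits of communication, and convert $\fn{f}{1}$ and $\fn{f}{0}$ into $\rank(M_{f\circ\oplus})$ via Proposition~\ref{thm:sparsity} and $\fn{f}{1}\le\sqrt{\fn{f}{0}}$. The only cosmetic difference is that the paper's Lemma~\ref{lem:rank by 1norm full} already establishes the bound on $\D_\oplus(f)$ directly inside its induction, whereas you re-derive the $\D_\oplus$ bound from the rank bound by summing over the $\le d$ levels and noting the top level dominates; for $d=O(1)$ this changes nothing.
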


Recursively expanding Eq.\eqref{eq:rank_def} and applying the bound on ranks in Lemma~\ref{lem:rank by 1norm} gives that 
\begin{Cor}\label{cor:junta}
Every Boolean function $f$ of \gf-degree $d$ depends only on $O(2^{d^3/2}\log^{d^2}\fn{f}{1})$ 
linear functions of input variables.
\end{Cor}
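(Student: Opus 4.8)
The plan is to prove Corollary~\ref{cor:junta} by unfolding the polynomial-rank decomposition~\eqref{eq:rank_def} recursively and charging, level by level, the number of new linear functions it introduces against the bound of Lemma~\ref{lem:rank by 1norm}. Write $\mu(g)$ for the least number of linear functions of the input on which a Boolean function $g$ depends; equivalently, setting $\mathrm{Stab}(g)=\{z\in\Bn : g(x\oplus z)=g(x)\text{ for all }x\}$, we have $\mu(g)=\codim\,\mathrm{Stab}(g)$. The goal is $\mu(f)=O(2^{d^3/2}\log^{d^2}\fn{f}{1})$ when $\deg_2(f)=d$. Since $\deg_2$, $\fn{\cdot}{0}$, $\fn{\cdot}{1}$, $\rank$ and $\mu$ are all invariant under an invertible linear change of input coordinates, I may rotate coordinates freely.

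First I would set up the recursion. Apply Lemma~\ref{lem:rank by 1norm} to $f$ to obtain $f=\ell_1 f_1+\cdots+\ell_r f_r+f_0$ with $r=O(2^{d^2/2}\log^{d-2}\fn{f}{1})$ and $\deg_2(f_i)\le d-1$ for every $i\in\{0,1,\dots,r\}$; then apply it again to each $f_i$ of positive degree; and so on. Since the \gf-degree drops by at least one at each level and starts at $d$, after at most $d$ levels every surviving function is a constant, and $f$ has been rewritten as a polynomial over \gf in the linear functions $\ell$ that appeared; hence $f$ depends only on those $\ell$'s, so $\mu(f)$ is at most their total number. To count them I would use the stabilizer: if $z\in\bigcap_{i=1}^{r}\ker\ell_i$ and $z\in\mathrm{Stab}(f_i)$ for all $i\in\{0,\dots,r\}$, then $\ell_i(x\oplus z)=\ell_i(x)$ and $f_i(x\oplus z)=f_i(x)$, so $f(x\oplus z)=f(x)$; therefore $\mathrm{Stab}(f)\supseteq\bigl(\bigcap_{i}\ker\ell_i\bigr)\cap\bigcap_{i=0}^{r}\mathrm{Stab}(f_i)$, and taking co-dimensions gives $\mu(f)\le r+\sum_{i=0}^{r}\mu(f_i)$. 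Unrolling this over all levels, $\mu(f)$ is at most a sum over levels $k$ of (number of functions occurring at level $k$) times (a common rank bound valid for all of them), and the number of functions at level $k$ is at most $\prod_{j<k}(r_j+1)$; so, absorbing the $O(d)$ levels into the final $O(\cdot)$, $\mu(f)\le\prod_{k}(R_k+1)$ where $R_k$ is a bound from Lemma~\ref{lem:rank by 1norm} valid for every function at level $k$.

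Next I would bound $R_k$, which requires tracking the spectral norm down the recursion. Here I would use that the sub-functions $f_i$ produced by (the proof of) Lemma~\ref{lem:rank by 1norm} are themselves Boolean and can be taken to be bounded \gf-combinations of shifts of their parent --- in the cleanest case discrete derivatives $D_v g(x)=g(x)\oplus g(x\oplus v)$, for which $(-1)^{D_v g}=(-1)^{g}\cdot(-1)^{g(\cdot\oplus v)}$ gives, by the convolution theorem and the shift-invariance of $\|\widehat{(-1)^{g}}\|_1$, the sub-multiplicative estimate $\fn{D_v g}{1}=O(\fn{g}{1}^2)$. Hence the functions at level $k$ have \gf-degree at most $d-k+1$ and spectral norm at most $O(\fn{f}{1}^{2^{k-1}})$, so the logarithm of their spectral norm is $O(2^{k-1}\log\fn{f}{1})$, and Lemma~\ref{lem:rank by 1norm} gives $R_k=O\bigl(2^{(d-k+1)^2/2}(2^{k-1}\log\fn{f}{1})^{d-k-1}\bigr)$ for the levels of degree at least $3$, while the final one or two levels (degree at most $2$) contribute only an additional $O(\log\fn{f}{1})$ factor by Dickson's theorem (as already used in the paper). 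Plugging these into $\mu(f)\le\prod_k(R_k+1)$ and summing the arithmetic in the exponents --- the exponent of $2$ is a sum of $O(d)$ terms each $O(d^2)$, hence $O(d^3)$, and the exponent of $\log\fn{f}{1}$ is a sum of $O(d)$ terms each $O(d)$, hence $O(d^2)$ --- yields $\mu(f)=O(2^{d^3/2}\log^{d^2}\fn{f}{1})$. (In fact a somewhat smaller bound comes out; as the exponents $d^3/2$ and $d^2$ in the statement already suggest, there is no attempt here to optimize the powers of $d$.) This proves Corollary~\ref{cor:junta}.

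The single genuinely delicate point --- and the only real obstacle --- is the spectral-norm bookkeeping just used: I must make sure the decomposition coming out of Lemma~\ref{lem:rank by 1norm} can be arranged so that every sub-function $f_i$ satisfies $\fn{f_i}{1}\le\fn{f}{1}^{O(1)}$, so that $\log\fn{f_i}{1}$ grows by only a constant factor per level and hence by a factor $2^{O(d)}$ over all $d$ levels, which is harmless. The derivative-style construction achieves this, but it needs the right reading of the proof of Lemma~\ref{lem:rank by 1norm} to be confirmed in full generality. It is instructive to see why the obvious alternative does not work: one might instead read off the bits $\ell_1(x),\dots,\ell_r(x)$ and recurse on the $2^{r}$ affine restrictions $f|_{\{z:\ell_i(z)=a_i\text{ for all }i\}}$, which keeps the spectral norm unchanged; but taking the union of their dependencies over all $2^{r}$ choices of $(a_i)$ costs a factor $2^{r}=2^{\Omega(2^{d^2/2})}$, doubly exponential in $d$, which destroys the bound. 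Routing the recursion through the degree-reducing sub-functions $f_i$ instead --- at the mild cost of squaring the spectral norm at each step --- is exactly what makes the argument go through.
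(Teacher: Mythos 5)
Your recursion $\mu(f)\le r+\sum_{i=0}^{r}\mu(f_i)$ via the stabilizer is correct, and you rightly reject recursing on the $2^{r}$ affine restrictions. But the step you yourself flag as ``the single genuinely delicate point'' is in fact a genuine gap, and I do not think it can be closed the way you describe. The $f_i$ in a rank decomposition are \emph{not} produced by the proof of Lemma~\ref{lem:rank by 1norm}: that proof only exhibits an affine subspace on which $f$ is constant (bounding $\C_{\oplus,\min}$ and hence $\rank$), and never constructs the cofactor functions $f_i$. When you do write them down canonically --- rotate so that $V=\{x_1=\cdots=x_r=0\}$, expand $f=\sum_{S\subseteq[r]}x_S\,g_S(x_{r+1},\dots,x_n)$ with $g_S(x')=\sum_{T\subseteq S}f(\mathbbm{1}_T,x')$, and group $f_i=\sum_{S:\min S=i}x_{S\setminus\{i\}}g_S$ --- then each $g_S$ is an $\gf$-sum of $2^{|S|}$ restrictions of $f$, so $\wfn{g_S^\pm}{1}\le\wfn{f^\pm}{1}^{2^{|S|}}$, and the grouping $f_i$ is an $\gf$-sum of up to $2^{r-i}$ such $g_S$'s. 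So $\fn{f_i}{1}$ can be as large as $\fn{f}{1}^{2^{\Theta(r)}}$, i.e.\ doubly exponential in $d$, not $\fn{f}{1}^{O(1)}$. Your squared-per-level bookkeeping therefore has no justification, and there is no obvious choice of decomposition that provides it.

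The fix is to not group at all: expand $f=\sum_{S\subseteq[r],\,|S|\le d}x_S\,g_S(x')$ (the $g_S$ with $|S|>d$ vanish since $\deg g_S\le d-|S|$), and use $\mathrm{Stab}(f)\supseteq\{0\}^r\times\bigcap_S\mathrm{Stab}(g_S)$ to get $\mu(f)\le r+\sum_{|S|\le d}\mu(g_S)$. Now the number of summands is $\sum_{k\le d}\binom{r}{k}=O(r^d)$ rather than $r$, but each $g_S$ (a restriction of an $|S|$-fold discrete derivative of $f$) has the two properties you need simultaneously and compatibly: $\deg_2 g_S\le d-\max(1,|S|)$ and $\log\fn{g_S}{1}\le 2^{|S|}\log\fn{f}{1}$. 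Inducting on $d$ with the hypothesis $\mu\le 2^{d^3/2}\log^{d^2}\fn{\cdot}{1}$ and plugging in $r\le O(2^{d^2/2}\log^{d-2}\fn{f}{1})$ from Lemma~\ref{lem:rank by 1norm}, the term for a given $k=|S|$ contributes at most $2^{kd^2/2+(d-k)^3/2+k(d-k)^2}\log^{k(d-2)+(d-k)^2}\fn{f}{1}$; the exponent of $2$ equals $d^3/2-\tfrac{m}{2}(d-m)^2$ with $m=d-k$, hence is at most $d^3/2$, and the exponent of $\log$ is $d^2-dk+k^2-2k<d^2$. Summing the $O(d)$ terms closes the induction. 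This is what ``recursively expanding Eq.~\eqref{eq:rank_def}'' must mean: one recurses on the full family of Möbius coefficients $g_S$, which are derivatives with controlled degree/spectral-norm trade-off, not on a size-$(r+1)$ grouping whose pieces you cannot control.
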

Another corollary is the following. Green and Sanders proved that any $f:\BntB$ can be written as $f=\sum_{i=1}^{T} \pm \mathbbm{1}_{V_i}$, where $T = 2^{2^{O(\fn{f}{1}^4)}}$ and each $\mathbbm{1}_{V_i}$ is the indicator function of the subspace $V_i$. For constant degree polynomials, we can improve their doubly-exponential bound to quasi-polynomial.
\begin{Cor}\label{Cor:constdegGS}
	If $f:\BntB$ has constant \gf-degree, then $f = \sum_{i=1}^{T} \pm \mathbbm{1}_{V_i}$ where $T = 2^{\log^{O(1)}\fn{f}{1}}$ and each $\mathbbm{1}_{V_i}$ is the indicator function of the subspace $V_i$.
\end{Cor}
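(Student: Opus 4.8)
The plan is to repackage Corollary~\ref{cor:junta} into the stated decomposition. Since $d=O(1)$, Corollary~\ref{cor:junta} supplies linear forms $\ell_1,\dots,\ell_m$ with $m = O(2^{d^3/2}\log^{d^2}\fn{f}{1}) = \log^{O(1)}\fn{f}{1}$ and a function $g:\mathbb{F}_2^m\to\{0,1\}$ with $f(x) = g(\ell_1(x),\dots,\ell_m(x))$. Writing $\ell(x) = (\ell_1(x),\dots,\ell_m(x))$, the fibers $\ell^{-1}(z)$, $z\in\mathbb{F}_2^m$, partition $\mathbb{F}_2^n$ into at most $2^m$ affine subspaces, so
\[
f \;=\; \sum_{z\,:\,g(z)=1} \mathbbm{1}_{\ell^{-1}(z)} .
\]
Thus it will be enough to rewrite the indicator of one affine subspace as a short signed combination of indicators of linear subspaces (through the origin).

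For that step, fix $z$, write $\ell^{-1}(z) = \bigcap_{i=1}^m\{x:\ell_i(x)=z_i\}$, and set $H_i = \{x:\ell_i(x)=0\}$, a linear subspace. I would use $\mathbbm{1}_{\{\ell_i=1\}} = 1-\mathbbm{1}_{H_i}$ on the coordinates with $z_i=1$ and expand the product:
\[
\mathbbm{1}_{\ell^{-1}(z)} \;=\; \prod_{i\,:\,z_i=0}\mathbbm{1}_{H_i}\cdot\prod_{i\,:\,z_i=1}\bigl(1-\mathbbm{1}_{H_i}\bigr) \;=\; \sum_{S\subseteq\{i\,:\,z_i=1\}} (-1)^{|S|}\,\mathbbm{1}_{W_{z,S}},
\]
where $W_{z,S} = \bigcap_{i\,:\,z_i=0}H_i\,\cap\,\bigcap_{i\in S}H_i$ is again a linear subspace, because a product of $\{0,1\}$-valued indicators of subspaces is the indicator of their intersection. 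This is a $\pm1$-combination of at most $2^m$ indicators of linear subspaces.

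Substituting back gives $f = \sum_{z\,:\,g(z)=1}\sum_{S}(-1)^{|S|}\mathbbm{1}_{W_{z,S}}$, a sum of at most $2^m\cdot 2^m = 2^{2m}$ terms, each of the form $\pm\mathbbm{1}_V$ with $V$ a subspace; hence $T\le 2^{2m} = 2^{\log^{O(1)}\fn{f}{1}}$, which is quasi-polynomial in $\fn{f}{1}$ (collecting equal subspaces only decreases $T$).

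The substance is already in Lemma~\ref{lem:rank by 1norm} and Corollary~\ref{cor:junta}, so I do not expect a genuine obstacle; the one point to watch is that the affine-to-linear conversion should stay quasi-polynomial, which is why I would run the inclusion--exclusion over the $2^m$ fibers rather than over all points. If one preferred to avoid Corollary~\ref{cor:junta}, an alternative is to run the degree-reduction parity decision tree from the Introduction: a half-space restriction never increases $\fn{\cdot}{1}$ and lowers the \gf-degree by at least one, so Lemma~\ref{lem:rank by 1norm} bounds its depth by $O(2^{d^2/2}\log^{d-2}\fn{f}{1}) = \log^{O(1)}\fn{f}{1}$ for constant $d$, its at most $2^{\D_\oplus(f)}$ leaves are affine subspaces, and the same inclusion--exclusion finishes with $T\le 2^{O(\D_\oplus(f))}$. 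Either way, it is precisely the constant-degree hypothesis that collapses the $2^{d^{O(1)}}$ factors into the $O(1)$ inside $\log^{O(1)}$.
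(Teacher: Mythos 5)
Your proof is correct, but it takes a somewhat different route from the paper's. The paper's proof of Corollary~\ref{Cor:constdegGS} (via Corollary~\ref{cor:dGS}) works directly from the PDT: Lemma~\ref{lem:rank by 1norm full} bounds $\D_\oplus(f)$ by $O(2^{d^2/2}\log^{d-2}\fn{f}{1})$, the tree therefore has at most $2^{\D_\oplus(f)}$ leaves, the accepting leaves give $f$ as a sum of at most that many affine-subspace indicators, and then the clean identity from~\cite{GS08} is invoked: for any affine $H = a+V$ with $a\notin V$, one has $\mathbbm{1}_H = \mathbbm{1}_{V+\lspan\{a\}} - \mathbbm{1}_V$, so each affine indicator costs only two subspace indicators. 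You instead go through Corollary~\ref{cor:junta} to express $f$ as a junta over $m=\log^{O(1)}\fn{f}{1}$ linear forms, partition into the at most $2^m$ fibers of $\ell$, and then expand each fiber indicator by inclusion--exclusion into $\pm$ indicators of linear subspaces. Both are valid and both give $T = 2^{\log^{O(1)}\fn{f}{1}}$; the paper's version is tighter quantitatively (two subspaces per affine indicator vs.~your up to $2^m$, and the exponent $\log^{d-2}$ vs.~$\log^{d^2}$ from the junta bound), while yours is more self-contained in that it replaces the appeal to~\cite{GS08} by elementary inclusion--exclusion. One small inaccuracy in your sketch of the alternative route: a single half-space restriction does not by itself lower the $\gf$-degree; degree drops only after a full round of $\rank(f)$ linear restrictions (Lemma~\ref{lem:rank-subspace}). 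This does not affect the conclusion, since the depth bound you cite from Lemma~\ref{lem:rank by 1norm} already accounts for all rounds.
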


The proof of Lemma~\ref{lem:rank by 1norm} follows the general approach laid out in the Main protocol,
\ie, a rank-based degree-reduction process, with several additional twists.
First, to find a ``good'' affine subspace restricted 
on which $f$ becomes a lower degree polynomial,
we recursively apply the derivatives of $f$ to guide our search.
Second, even though our final goal is to reduce the degree of $f$, we actually achieve this
through reducing the spectral norm of $f$.
This is done by 
studying the effect of restriction on two \emph{non-Boolean} functions.
Last, in the induction step, we in fact need to prove a stronger statement about
a chain inequality involving rank, minimum parity $0$-certificate complexity $\C_{\oplus, \min}^0$, 
minimum parity $1$-certificate complexity $\C_{\oplus, \min}^1$ and parity decision tree complexity $\D_\oplus$. 
And the induction is used in a ``cyclic'' way: 
we upper bound $\min\{\C_{\oplus, \min}^0,\C_{\oplus, \min}^1\}$ by induction on 
$\max\{\C_{\oplus, \min}^0,\C_{\oplus, \min}^1\}$, which upper bounds $\rank$. 
This can then be used to show that $\D_\oplus$ is small, 
which in turn upper bounds $\max\{\C_{\oplus, \min}^0,\C_{\oplus, \min}^1\}$ to finish the inductive step.


\paragraph{Result 3: functions with small spectral norms.}
While Theorem~\ref{thm:constdeg} handles the low-degree case, the bound 
deteriorates exponentially with the \gf-degree.
Via a different approach, 
we are able to upper bound 
$\rank(f)$ by the $\ell_1$-norm of $\hat f$. 

\begin{Lem}\label{lem:l1norm}
For all $f:\BntB$, we have $\rank(f) \leq O(\fn{f}{1})$. 
\end{Lem}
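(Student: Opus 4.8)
The plan has two halves: a cheap structural reduction, and an iterated‑restriction argument carrying the real weight.

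\emph{Step 1: a structural reduction.} I would first reduce the bound to finding a low‑codimension affine subspace on which $f$ degenerates. I claim it suffices to exhibit an affine subspace $A\subseteq\{0,1\}^n$ with $\codim(A)=O(\fn{f}{1})$ such that $\deg_2(f|_A)<d$, where $d=\deg_2(f)$ — and making $f|_A$ \emph{constant} is more than enough. Granting this: since $\rank$ is invariant under an invertible linear change of variables and under translating the origin, neither of which affects the degree‑$d$ homogeneous part of $f$ over \gf, I may assume $A=\{x:x_1=\cdots=x_k=0\}$ with $k=\codim(A)$. Because $f|_A$ contains no degree‑$d$ monomial, every degree‑$d$ monomial of $f$ uses one of $x_1,\dots,x_k$; extracting such a variable from each top monomial writes the degree‑$d$ part of $f$ as $\sum_{i=1}^k x_i g_i$ with $\deg_2(g_i)\le d-1$, so $f=\sum_{i=1}^k x_i g_i+f_0$ with $\deg_2(f_0)<d$, and hence $\rank(f)\le k=O(\fn{f}{1})$.

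\emph{Step 2: building $A$ by iterated parity restrictions.} I would build $A$ greedily, maintaining a current affine subspace $A_j$ and, after possibly replacing $f$ by $1-f$, the invariant $p_j:=\pr_{A_j}[f=1]\le 1/2$. Two elementary Fourier facts for a $\{0,1\}$‑valued $g$ drive the recursion: $(i)$ $|\hat g(s)|\le\pr[g=1]$ for all $s$, since $|g\chi_s|=g$ pointwise; and $(ii)$ $\displaystyle\max_{s\neq 0}|\hat g(s)|\ \ge\ \frac{\sum_{s\neq 0}\hat g(s)^2}{\sum_{s\neq 0}|\hat g(s)|}\ \ge\ \frac{\pr[g=1]\,\pr[g=0]}{\fn{g}{1}}$, using $\sum_{s\neq 0}\hat g(s)^2=\pr[g=1]\,\pr[g=0]$. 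While $f|_{A_j}$ is non‑constant, let $t_j\neq 0$ maximise $|\widehat{f|_{A_j}}(t_j)|$ and let $A_{j+1}$ be the affine hyperplane of $A_j$ on which $\pr[f=1]$ equals the smaller of the two values $p_j\pm|\widehat{f|_{A_j}}(t_j)|$. This is a valid codimension‑one step (the smaller value is $p_j-|\widehat{f|_{A_j}}(t_j)|\ge 0$ by $(i)$); it strictly decreases $\pr[f=1]$, and, because it folds the Fourier mass at $0$ and at $t_j$ into a single coefficient, it also gives $\wfn{f|_{A_{j+1}}}{1}\le \wfn{f|_{A_j}}{1}-2\,|\widehat{f|_{A_j}}(t_j)|$, while no restriction ever increases the spectral norm. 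Since $\pr[f=1]$ strictly decreases and lies in $\frac{1}{2^{\dim A_j}}\mbZ$, the process terminates with $f$ constant on some $A=A_K$, whence $\rank(f)\le K$. (At any stage one may also XOR the current function with a linear form — preserving $\rank$ and spectral norm, and by moving a largest Fourier coefficient of $(-1)^f$ to the origin this keeps $\pr[f=1]$ bounded away from $1/2$, useful for the step‑count bound below.)

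\emph{Step 3: the crux.} The main obstacle is to prove $K=O(\fn{f}{1})$. Fact $(ii)$ gives per‑step progress $\Omega\!\big(\pr[f=1]\,\pr[f=0]/\fn{f}{1}\big)$ both in $\pr[f=1]$ and in the spectral norm, but feeding this into the obvious potential argument loses more than a constant factor: the progress becomes multiplicative once $\pr[f=1]$ is small, so the naive analysis only forces $f$ to a constant after a number of steps depending on $n$, not on $\fn{f}{1}$ alone. Obtaining the tight linear bound requires a sharper accounting — e.g. tracking $\wfn{f|_{A_j}}{1}$ together with $\pr_{A_j}[f=1]$ and separating the regime $\max_{s\neq 0}|\hat f(s)|\gtrsim\pr[f=1]$ from $\max_{s\neq 0}|\hat f(s)|\ll\pr[f=1]$, and/or restricting a carefully chosen \emph{block} of parities at once so that a whole cluster of large Fourier coefficients collapses into the constant term in a single move whose codimension cost is proportional to the spectral‑norm gain. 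That bookkeeping is the substantive part of the proof; Steps 1 and 2 are routine.
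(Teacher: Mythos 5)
Your Step~1 is exactly the paper's Corollary~\ref{cor:Cmin vs rank}: to bound $\rank(f)$ it suffices to exhibit a single affine subspace of small co-dimension on which $f$ is constant, i.e.\ to bound $\C_{\oplus,\min}(f)$. Your Step~2, read together with the XOR-with-a-linear-form remark, is essentially the paper's greedy fold: after re-centring so that a largest coefficient of $f^\pm$ sits at $0^n$, merging $\widehat{f^\pm}(0^n)$ with the largest nonzero coefficient is the same as folding along $\beta=\alpha_1+\alpha_2$, where $\alpha_1,\alpha_2$ index the two largest coefficients $a_1\ge a_2\ge\cdots$ in absolute value. The genuine gap is Step~3, which you flag yourself: your per-step progress $\Omega(\pr[f{=}1]\pr[f{=}0]/\fn{f}{1})$ degrades as $\pr[f{=}1]\to 0$, so the naive potential argument yields an $n$-dependent bound, and you do not supply the sharper accounting you say is needed.

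The paper closes this gap (Lemma~\ref{lem:cmin vs L1}) with two ingredients, neither of which appears in your proposal. First, a two-phase split in the $\pm1$ normalisation: while $a_1\le 1/2$, Parseval gives $1-a_1^2=\sum_{i\ge 2}a_i^2\le a_2(\fn{f}{1}-a_1)$, so $a_2>3/(4\fn{f}{1})$ and $O(\fn{f}{1})$ greedy folds push $a_1$ above $1/2$ --- this progress is additive in $a_1$ and does not degrade with density. Second, and crucially, once $a_1>1/2$ the paper invokes the orthogonality relation of Proposition~\ref{prop:Boolean}, $\sum_t\hat f(t)\hat f(\beta+t)=0$ for $\beta\ne 0^n$, to lower-bound the $\ell_1$-drop of a fold by the cancellation across \emph{all} opposite-sign pairs, not just the merged pair. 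Since $(\alpha_1,\alpha_2)$ is the unique pair summing to $\beta$ that contains $\alpha_1$ or $\alpha_2$, and it lies on the same-sign side, every opposite-sign pair has both entries of magnitude at most $a_3$; hence $a_1a_2\le\sum_{(i,j)\in P_-(\beta)}a_ia_j\le a_3\sum_{(i,j)\in P_-(\beta)}\min(a_i,a_j)$, and since one fold drops $\fn{\cdot}{1}$ by exactly $2\sum_{(i,j)\in P_-(\beta)}\min(a_i,a_j)$, each fold in phase two drops the spectral norm by at least $2a_1a_2/a_3\ge 2a_1>1$. A further $\fn{f}{1}$ folds therefore reach a constant. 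Your bookkeeping credits only the drop from the single merged pair and never uses the orthogonality constraint, which is exactly why you could not escape the density barrier; no block-restriction is needed.
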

In fact we prove a slightly stronger result that there exists an affine subspace of
codimension at most $O(\fn{f}{1})$ on which $f$ is constant.
In other words, if a Boolean function has small spectral norm then it is
constant on a large affine subspace. 

The proof of the lemma uses a greedy algorithm that always makes the two largest Fourier coefficients 
to collide (with the same sign). 
Exploiting the property that $f$ is Boolean, 
one can show that this greedy folding either significantly increases the largest Fourier coefficient, 
or decreases the $\fn{f}{1}$ by a constant. 

The lemma immediately implies the following result for general (not necessarily XOR) functions.  
\begin{Thm}\label{thm:l1norm}
	For all $f:\B^m\times \B^n \to \B$, we have $\dcc(f) \leq 2\D_\oplus (f) = O(\deg_2(f)\cdot \fn{f}{1})$. 
\end{Thm}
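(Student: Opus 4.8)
The plan is to build a shallow parity decision tree for $f$ via the rank-based degree-reduction protocol, using Lemma~\ref{lem:l1norm} to bound the cost of each reduction step, and then to simulate that tree by a two-party protocol. Three facts do the work: (i) a single parity query on the joint input $(x,y)$ can be answered by two exchanged bits, so $\dcc(g)\le 2\D_\oplus(g)$ for \emph{every} $g:\B^m\times\B^n\to\B$; (ii) by Lemma~\ref{lem:l1norm}, reducing the \gf-degree of the current function $g$ by one costs $\rank(g)=O(\fn{g}{1})$ parity queries; and (iii) the spectral norm does not increase along the restrictions made in a single such step, so every step costs only $O(\fn{f}{1})$ queries. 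Since the \gf-degree starts at $\deg_2(f)$ and drops by at least one per step, this gives $\D_\oplus(f)=O(\deg_2(f)\,\fn{f}{1})$, and then (i) finishes.

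For (i), write a linear form on $\B^m\times\B^n$ as $\ell(x,y)=\ell_1(x)\oplus\ell_2(y)$; \alice announces $\ell_1(x)$ and \bob announces $\ell_2(y)$, so both learn the answer to the query $\ell(x,y)$ (and both can track the running restriction, since the announced bits are common knowledge). Replaying a depth-$t$ parity decision tree for $f$ this way yields a deterministic protocol of cost $2t$.

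For (ii) and (iii), let $g$ be the function at the current node, $d=\deg_2(g)$, and write $g=\ell_1g_1+\cdots+\ell_rg_r+g_0$ witnessing $r=\rank(g)$, where $\deg_2(g_i)\le d-1$ for $0\le i\le r$. We query $\ell_1,\dots,\ell_r$; on the affine subspace where these take the observed values $a_1,\dots,a_r$, $g$ agrees with $\sum_i a_ig_i+g_0$, of \gf-degree at most $d-1$. By Lemma~\ref{lem:l1norm}, $r=O(\fn{g}{1})$. For the spectral norm: restricting $g$ to a hyperplane $\{z:t\cdot z=b\}$ and identifying it with $\B^{m+n-1}$, the restricted function has Fourier coefficients indexed by the cosets $\{S,S+t\}$ and equal to $\hat g(S)\pm\hat g(S+t)$; by the triangle inequality its $\ell_1$-norm is at most $\sum_S|\hat g(S)|=\fn{g}{1}$. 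Iterating over the $r$ hyperplanes cut in this step shows the resulting function $g'$ obeys $\fn{g'}{1}\le\fn{g}{1}$ and $\deg_2(g')\le d-1$.

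Unrolling the recursion, if $f=f^{(0)},f^{(1)},f^{(2)},\dots$ are the functions at successive steps, then $\fn{f^{(j)}}{1}\le\fn{f}{1}$ and $\deg_2(f^{(j)})\le\deg_2(f)-j$, so after at most $\deg_2(f)$ steps the function is constant and the tree ends. The total depth is $\sum_j\rank(f^{(j)})\le\deg_2(f)\cdot O(\fn{f}{1})$, hence $\D_\oplus(f)=O(\deg_2(f)\,\fn{f}{1})$, and with (i), $\dcc(f)\le 2\D_\oplus(f)=O(\deg_2(f)\,\fn{f}{1})$. The one step with content beyond Lemma~\ref{lem:l1norm} is the monotonicity of the spectral norm under restriction in (iii): the point is to group Fourier coefficients coset-by-coset before applying the triangle inequality rather than bounding them termwise, after which the bound is immediate; everything else is the standard simulation and a telescoping count.
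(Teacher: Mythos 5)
Your proof is correct and follows essentially the same route as the paper: both derive $\rank(g)=O(\fn{g}{1})$ from Lemma~\ref{lem:l1norm}, run the rank-based degree-reduction (the Main PDT algorithm), use monotonicity of the spectral norm under restrictions to control every round's cost, and simulate the resulting parity decision tree by exchanging $\ell_1(x)$ and $\ell_2(y)$. The only cosmetic difference is that you unroll the iteration explicitly rather than citing Theorem~\ref{thm:mainPDT} as a black box with the downward non-increasing measure $M(f)=\fn{f}{1}$.
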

In \cite{Gro97}, Grolmusz gave a public-coin randomized protocol with communication cost 
$O(\fn{f}{1}^2)$. The above theorem gives a \emph{deterministic} protocol, 
and the bound is better for functions $f$ with $\deg_2(f) = o(\fn{f}{1})$. 

Another implication of Lemma \ref{lem:l1norm} is that the communication complexity of 
$f\circ \oplus$ is at most the square root of the matrix rank.

\begin{Thm}\label{thm:sqrtl0}
For all $f:\BntB$,  
\[\dcc(f\circ \oplus) = O(\deg_2(f)\cdot \fn{f}{1}) = O\Big(\sqrt{\rank(M_{f\circ \oplus})}\log\rank(M_{f\circ \oplus})\Big).\]
\end{Thm}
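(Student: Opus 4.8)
The plan is to chain together three facts. First, by Theorem~\ref{thm:l1norm} (equivalently the combination of Lemma~\ref{lem:l1norm} with the standard degree-reduction protocol), we already have $\dcc(f\circ\oplus) \le 2\D_\oplus(f) = O(\deg_2(f)\cdot\fn{f}{1})$, so the leftmost equality is immediate. It remains to convert the bound $O(\deg_2(f)\cdot\fn{f}{1})$ into one expressed purely in terms of $\rank(M_{f\circ\oplus})$.

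Second, I would invoke Proposition~\ref{thm:sparsity}, which tells us $\rank(M_{f\circ\oplus}) = \fn{f}{0}$, and the elementary inequality $\deg_2(f) \le \log_2 \fn{f}{0}$ (mentioned in the discussion of the Main protocol). So $\deg_2(f) \le \log_2\rank(M_{f\circ\oplus})$, and the only remaining task is to bound the spectral norm $\fn{f}{1}$ by $O(\sqrt{\fn{f}{0}})$. This is a Cauchy--Schwarz computation: since $f$ is Boolean, Parseval gives $\sum_s \hat f(s)^2 = 1$ (or $=\av[f^2]$, a quantity in $[0,1]$), hence $\fn{f}{1} = \sum_{s\in\supp(\hat f)} |\hat f(s)| \le \sqrt{\fn{f}{0}}\cdot\sqrt{\sum_s \hat f(s)^2} \le \sqrt{\fn{f}{0}} = \sqrt{\rank(M_{f\circ\oplus})}$.

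Putting the pieces together, $\dcc(f\circ\oplus) = O(\deg_2(f)\cdot\fn{f}{1}) = O\big(\log\rank(M_{f\circ\oplus}) \cdot \sqrt{\rank(M_{f\circ\oplus})}\big)$, which is exactly the claimed bound. I should double-check the corner case where $f$ is constant (so $\rank(M_{f\circ\oplus}) = 1$ and $\deg_2(f) = 0$): there the communication cost is $0$ and the formula reads $0$ as well, consistent with the convention of writing $\log(\cdot+1)$ noted in the footnote.

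Honestly, there is no serious obstacle here; this theorem is a corollary obtained by plugging the Cauchy--Schwarz bound $\fn{f}{1}\le\sqrt{\fn{f}{0}}$ and the degree bound $\deg_2(f)\le\log\fn{f}{0}$ into Theorem~\ref{thm:l1norm} and rewriting via Proposition~\ref{thm:sparsity}. The only mild subtlety worth stating explicitly is that the Cauchy--Schwarz step needs the support of $\hat f$ to have size exactly $\fn{f}{0}$ and Parseval to control the $\ell_2$ mass --- both of which hold trivially for Boolean-valued $f$ --- so the proof is genuinely just these few lines.
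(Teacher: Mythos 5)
Your proposal is correct and follows essentially the same route as the paper: combine the Main PDT bound $\dcc(f\circ\oplus)\le 2\D_\oplus(f)\le 2\deg_2(f)\cdot M(f)$ with $M=\fn{f}{1}$ (Lemma~\ref{lem:l1norm}), then translate $\deg_2(f)\le\log\fn{f}{0}$ (Fact~\ref{fact:deg vs sparsity}), $\fn{f}{1}\le\sqrt{\fn{f}{0}}$ (Cauchy--Schwarz/Parseval, Eq.~\eqref{eq:fourier_l0_l1}), and $\fn{f}{0}=\rank(M_{f\circ\oplus})$ (Proposition~\ref{thm:sparsity}). The paper phrases it via Theorem~\ref{thm:mainPDT} where you cite Theorem~\ref{thm:l1norm}, but these are the same chain of inequalities.
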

The upper bound of $\rank/\log\rank$ in \cite{BLR12} improves the trivial linear bound 
by a log factor for all Boolean functions, 
assuming the Polynomial Freiman-Ruzsa conjecture. 
In comparison, our bound of $\sqrt{\rank}\log\rank$ is only for XOR functions, 
but it improves the linear bound by a polynomial factor, and it is unconditional.

It is also interesting to note that,
for any fixed Boolean function $f$,
at least one of the above two theorems gives a desirable result: 
either $\fn{f}{1} \geq \log^k\fn{f}{0}$ where $k$ is a big constant, 
then Theorem \ref{thm:l1norm} improves Grolmusz's bound almost quadratically 
(since $\deg_2(f)\leq \log\fn{f}{0}$); 
or $\fn{f}{1} \leq \log^k\fn{f}{0}$, 
then Theorem \ref{thm:sqrtl0} confirms the Log-rank conjecture for $f\circ \oplus$!  

\paragraph{Result 4: functions with a light Fourier tail.} 
Our last result deals with Boolean functions whose Fourier spectrum has a light tail. 
We call a function $f:\Bn\to\pmB$ $\mu$-close to $s$-sparse in $\ell_2$ if 
$\sum_{i>s} \hat f(s_i)^2 \leq \mu^2$, 
where $|\hat f(s_1)| \geq ... \geq |\hat f(s_N)|$. 
We say two functions $f,g:\Bn\to\pmB$ are $\epsilon$-close 
if $\pr_x[f(x) \neq g(x)] \leq \epsilon$. 
\begin{Thm}\label{thm:lighttail}
	If $f:\Bn\to\pmB$ is $\mu$-close to $s$-sparse in $\ell_2$, 
	where $\mu \leq \frac{\log^{O(1)}\fn{f}{0}}{\sqrt{\fn{f}{0}}}$ 
	and $s\leq \log^{O(1)}\fn{f}{0}$, 
	then $\D_\oplus(f) \leq \log^{O(1)}\fn{f}{0}$.
\end{Thm}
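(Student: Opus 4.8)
The plan is to combine a sparsification step with the structural results already established for genuinely sparse functions. First I would invoke a result of the flavour of Chang's lemma / the ``sparse approximation implies low-dimensional structure'' phenomenon: if $f$ is $\mu$-close to $s$-sparse in $\ell_2$, then the top-$s$ Fourier mass of $f$ is supported on characters lying in a subspace (or a small union of cosets) of co-dimension $r = \log^{O(1)}\fn{f}{0}$. Concretely, let $g$ be the real-valued function obtained by keeping only the $s$ largest Fourier coefficients of $f$; then $\|f-g\|_2 \le \mu$ and $\fn{g}{0} \le s$. The characters in $\supp(\hat g)$ span a space of dimension at most $s \le \log^{O(1)}\fn{f}{0}$. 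The key point is that, because $\mu$ is small compared to $1/\sqrt{\fn{f}{0}}$, the rounding of $g$ (or of $f$) to a Boolean function should be \emph{determined} by these few linear forms up to a controllable error.

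The second step is to turn ``$\ell_2$-close'' into ``genuinely sparse'' by a pointwise argument. Since $f$ takes values in $\pmB$, we have $\pr_x[|f(x)-g(x)| \ge 1/2] \le 4\mu^2 \le \log^{O(1)}\fn{f}{0} / \fn{f}{0}$, which is tiny. Let $V$ be the subspace spanned by $\supp(\hat g)$ together with any additional linear forms we pick up; on each coset of $V^\perp$ the function $g$ is constant, so after querying a basis of $V$ (that is, $\dim V \le \log^{O(1)}\fn{f}{0}$ parity queries) the restricted function $f|_\rho$ agrees with a constant on a $(1-o(1))$-fraction of its domain. But $f|_\rho$ is still Boolean, and its Fourier sparsity is at most $\fn{f}{0}$; I would then argue that a Boolean function which is $\epsilon$-close to a constant with $\epsilon < 1/\fn{f|_\rho}{0}$ (say) must in fact \emph{be} that constant, because any nonconstant Boolean function with Fourier sparsity $N$ has $\min(\pr[f=0],\pr[f=1]) \ge 2^{-(\deg_2 f)} \ge 1/N$ (granularity of Fourier coefficients). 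Hence after these queries $f$ is resolved. The subtle point is bookkeeping the parameters so that $\mu \le \log^{O(1)}\fn{f}{0}/\sqrt{\fn{f}{0}}$ indeed forces $4\mu^2$ below the relevant granularity threshold on every coset simultaneously; a union bound over the $2^{\dim V} \le 2^{\log^{O(1)}\fn{f}{0}}$ cosets costs only another polylog factor in the exponent's base, which is absorbed.

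Alternatively, and perhaps more cleanly, I would route through Lemma~\ref{lem:rank by 1norm} and the degree-reduction framework: show first that $\mu$-closeness to $s$-sparse implies $\deg_2(f) \le \log^{O(1)}\fn{f}{0}$ is \emph{not} automatically small, so instead bound the polynomial rank of $f$ directly. Writing $f$'s spectrum as (sparse part) $+$ (light tail), the sparse part contributes rank $O(\log s)$ by Dickson-type reasoning on its span, and the tail, having $\ell_2$-mass $\le \mu^2 \le \log^{O(1)}\fn{f}{0}/\fn{f}{0}$, contributes negligibly once we use that $f$ is Boolean to snap the tail away after one restriction. Then apply the Main protocol.

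\textbf{Main obstacle.} The hard part will be the ``snapping'' step: controlling what happens to the \emph{tail} under restriction. A parity restriction folds the spectrum, and the $o(1)$ fraction of inputs where $g$ and $f$ disagree could, a priori, be concentrated adversarially inside a single coset of $V^\perp$, where it would constitute a constant fraction and prevent us from concluding $f|_\rho$ is constant. Handling this requires either an averaging argument showing the disagreement set is spread out across cosets (which needs $\mu$ small enough to beat a union bound over \emph{all} $2^{\dim V}$ cosets, and this is exactly where the precise hypothesis $\mu \le \log^{O(1)}\fn{f}{0}/\sqrt{\fn{f}{0}}$ is used), or a more careful iterative scheme that, at each step, queries the linear form best correlated with the current residual and tracks both the $\ell_2$-tail \emph{and} the Fourier sparsity, stopping once the tail drops below the granularity floor $2^{-\deg_2(f)}$. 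Making the two quantities decrease in lockstep without one blowing up while the other shrinks is the delicate balance, analogous to the ``cyclic induction'' among $\C_{\oplus,\min}^0$, $\C_{\oplus,\min}^1$, $\rank$, and $\D_\oplus$ used in the proof of Lemma~\ref{lem:rank by 1norm}.
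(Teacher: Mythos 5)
Your proposal identifies the right ingredients (Chang's lemma, granularity, the GOS+11 rounding phenomenon), and you correctly flag the ``snapping'' step --- controlling the disagreement set after restriction --- as the crux. But the fix you sketch does not go through, and the paper routes around this obstacle entirely with a different decomposition that you did not find.

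The union-bound patch fails quantitatively. You take $g$ to be the real-valued truncation, let $V = \lspan(\supp(\hat g))$, and want to show that on every coset of $V^\perp$ the disagreement between $f$ and $\sign(g)$ falls below the granularity floor $\approx 1/\fn{f}{0}$. The total disagreement mass is $\leq 4\mu^2 = \log^{O(1)}\fn{f}{0}/\fn{f}{0}$, but it could be concentrated on a single coset, where it would occupy a $4\mu^2 \cdot 2^{\dim V}$ fraction. Since $\dim V$ can be as large as $s = \log^{c}\fn{f}{0}$, the number of cosets is $2^{\log^c\fn{f}{0}}$, which is \emph{quasi-polynomial} in $\fn{f}{0}$; multiplying $\mu^2 \approx 1/\fn{f}{0}$ by this does not give something small. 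Your claim that the union bound ``costs only another polylog factor in the exponent's base'' is incorrect --- the cost is $2^{\log^c N}$, not $N^{O(1)}$, and the hypothesis $\mu \leq \log^{O(1)}\fn{f}{0}/\sqrt{\fn{f}{0}}$ is nowhere near strong enough to pay it. Your alternative iterative route is too vague to assess and does not obviously fare better.

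The paper avoids the whole issue. It first applies the GOS+11 rounding lemma (Lemma~\ref{lem:close to sparse}) --- not just the real truncation --- to obtain a \emph{Boolean} function $g$ with $\fn{g}{0} \leq s$ that is $\mu^2/2$-close to $f$. Then it writes $f = g \cdot (fg)$, exploiting that both factors are $\pmB$-valued. The function $fg$ is Boolean, has Fourier sparsity at most $\fn{f}{0}\cdot s$ by Lemma~\ref{lem:norm of product}, and has density $\rho_{-1}(fg) \leq \mu^2/2 = \log^{O(1)}\fn{f}{0}/\fn{f}{0}$; Chang's lemma (Lemma~\ref{lem:sparse}) then gives a polylog-depth PDT for $fg$. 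The function $g$ is handled trivially by querying a basis of $\lspan(\supp(\hat g))$, costing $\leq s$ queries. Adding the two PDTs computes $f$. This multiplicative decomposition is the key idea that your proposal is missing: it converts the ``close to sparse'' hypothesis into a genuinely sparse, genuinely low-density Boolean function on which Chang's lemma applies directly, with no per-coset averaging required.
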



The proof of this theorem uses Chang's lemma about large Fourier coefficients of low-density functions, 
and a ``rounding'' lemma from \cite{GOS+11}.




\subsection{Related work}\label{sec:relatedwork}
The Log-rank Conjecture for XOR functions was shown to be true for symmetric functions~\cite{ZS09},
linear threshold functions (LTFs), monotone functions~\cite{MO10}, and $AC^0$ functions~\cite{KS13}.
These results fall into two categories. The first one, including symmetric functions and LTFs, is that the rank of the communication matrix (\ie the Fourier sparsity) is so large, that the Log-rank conjecture trivially holds. The second one, including monotone functions and $AC^0$ functions, is that even the Fourier degree is small, thus the standard decision tree complexity $\D(f)$ is already upper bounded by the poly-logarithmic of the matrix rank. But as we mentioned, there are functions that have small Fourier sparsity \emph{and} high Fourier degree (even after basis rotation), which form the hardcore cases of the problem. 
Our study makes crucial use the fact that parity queries are more powerful than single input-variable queries, and our results reveal structural properties of Fourier spectra.
 
In \cite{HS10}, Haramaty and Shpilka proved that 
$\rank(f) = O(\log^2 (1/\|f\|_{U^3})) = O(\log^2 (1/\bias(f)))$ for degree-$3$ polynomials. 
For degree-$4$ polynomials, however, 
the bound gets exponentially worse, and there were no results for higher degrees. 
In comparison, our Lemma \ref{lem:rank by 1norm} gives a polylog upper bound for 
$\rank(f)$ of all constant degree functions $f$, 
but the polylog is in $\fn{f}{1}$ rather than in $\bias(f)$ or Gower's norm (\cite{Gow98,Gow01,AKKLR05}). 

Though Boolean functions with a sparse Fourier spectrum 
seem to be a very interesting class of functions to study, 
not many properties are known. 
It is shown in~\cite{GOS+11} that the Fourier coefficients of a Fourier sparse function 
have large ``granularity'' and functions that are very close to Fourier sparse can be 
transformed into one through a ``rounding off'' procedure. 
Furthermore, they proved that one can use $2\log\fn{f}{0}$ random linear functions 
to partition the character space so that, with high probability, 
each bucket contains at most one nonzero Fourier coefficient. 
This does not help our problem since what we need is exactly the opposite: 
to group Fourier coefficients into buckets so that a small number of foldings would 
make many of them to collide (and thus reducing the Fourier sparsity quickly).

Let $A=\supp(\hat f)$ be the support of $f$'s Fourier spectrum.
One way of designing the parity query is to look for a ``heavy hitter'' 
$t$ of set $A+A$, \ie $t$ with many $s_1, s_2 \in A$ and $s_1 + s_2 = t$.
If such $t$ exists, then querying the linear function $\langle t, x\rangle$ reduces the Fourier sparsity a lot. 
One natural way to show the existence of a heavy hitter is by proving that $|A+A|$ is small. 
Turning this around, one may hope to show that if it is large, 
then the function is not Fourier sparse or has some special properties to be used. 
The size of $|A+A|$ has been extensively studied in additive combinatorics, 
but it seems that all related studies are concerned with the low-end case, in which 
$|A+A| \leq k|A|$ for very small (usually constant) $k$. Thus those results do not apply to our question.


There are actually two variants of polynomial rank. 
One is what we mentioned earlier and used in this work, and the other, which is
actually much better studied, is the minimum $r$ \st $f$ can be expressed as a function $F$ of $r$ 
lower degree polynomials $f_1, ..., f_r$. 
A nice result for this definition of rank is that 
large bias implies low polynomial rank \cite{GT09,KL08}: the rank is a function of the bias and degree only, 
but not of the input size $n$. 
This is, however, insufficient for us because a Fourier sparse function may have very small bias. 
Furthermore, the dependence of the rank on the degree is a very rapidly growing function 
(faster than a tower of 2's of height $d$), while our protocol has ``only'' single exponential dependence of $d$.  

\medskip
\paragraph{The work of~\cite{SV13}.}
After completing this work independently, the very recent work~\cite{SV13} came to our attention, 
which studies PDT complexity of functions with small spectral norm. 
The authors show $\C_{\oplus, \min}(f) \leq O(\fn{f}{1}^2)$ and $\D_\oplus(f) = O(\fn{f}{1}^2\log\fn{f}{0})$. 
In comparison, our Lemma \ref{lem:cmin vs L1} and Theorem \ref{thm:sqrtl0} are at least quadratically better. 
The paper \cite{SV13} also studies the size of PDT and shows that 
$\osize(f) \leq n^{O(\fn{f}{1}^2)}$,
and considers approximation of Boolean functions,
which is not studied in this paper.

\section{Preliminaries and notation}
All the logarithms in this paper are base 2. For two $n$-bit vectors $s,t\in \Bn$, 
define their inner product as $s\cdot t = \langle s, t\rangle = \sum_{i=1}^n s_it_i \text{ mod }2$
and for simplicity we write $s+t$ for $s\oplus t$.
Throughout the paper, logarithm is base $2$. 
We often use $f$ to denote a real function defined on $\Bn$. 
In most occurrences $f$ is a Boolean function, whose range can be represented by either $\B$ or $\pmB$, 
and we will specify whenever needed. 
For $f:\BntB$, we define $f^\pm = 1-2f$ to convert the range to $\pmB$. 
For each $b\in \range(f)$, the $b$-density of $f$ is $\rho_b=|f^{-1}(b)|/2^n$. 

Each Boolean function $f:\BntB$ can be viewed as a polynomial over \gf, 
and we use $\deg_2(f)$ to denote the \gf-degree of $f$. 
For a Boolean function $f:\BntB$ and a direction vector $t\in \Bn-\{0^n\}$, 
its derivative $\Delta_t f$ is defined by $\Delta_t f(x) = f(x) + f(x+t)$. 
It is easy to check that $\deg_2(\Delta_t f) < \deg_2(f)$ for any non-constant $f$ and any $t$. 

\paragraph{Complexity measures.} 
A parity decision tree (PDT) for a function $f:\BntB$ is a tree with each internal node 
associated with a linear function $\ell(x)$, and each leaf associated with an answer $a\in \B$. 
When we use a parity decision tree to compute a function $f$, 
we start from the root and follow a path down to a leaf. 
At each internal node, we query the associated linear function, 
and follow the branch according to the answer to the query. 
When reaching a leaf, we output the associated answer. 
The parity decision tree computes $f$ if on any input $x$, 
we always get the output equal to $f(x)$. The deterministic parity decision tree complexity of $f$, denoted by $\D_\oplus(f)$, is the least number of queries needed on a worst-case input by a PDT that computes $f$.

For a Boolean function $f$ and an input $x$, the parity certificate complexity of $f$ on $x$ is 
\[
	\C_\oplus(f,x) = \min\{\codim(H): x\in H, \text{$H$ is an affine subspace, on which $f$ is constant}\}.
\] 
The parity certificate complexity $\C_\oplus (f)$ of $f$ is $\max_x \C_\oplus(f,x)$. 
Since for each $x$ and each parity decision tree $T$, the leaf that $x$ belongs to corresponds to 
an affine subspace of co-dimension equal to the length of the path from it to the root, 
we have that $\C_\oplus(f) \leq \D_\oplus(f)$ \cite{ZS10}.
We can also study the minimum parity certificate complexities 
$\C_{\oplus,\min}^b(f) = \min_{x:f(x)=b} \C_\oplus(f,x)$ 
and 
$\C_{\oplus, \min}(f) = \min_x \C_\oplus(f,x)$.

Denote by $\dcc(F)$ the deterministic communication complexity of $F$. 
One way of designing communication protocols is to simulate a decision tree algorithm, 
and the following is an adapted variant of a well known relation between 
deterministic communication complexity and decision tree complexity 
to the setting of XOR functions and parity decision trees. 
\begin{Fact}\label{fact:CCbyDT}
	$\dcc(f\circ \oplus) \leq 2\D_\oplus(f)$.
\end{Fact}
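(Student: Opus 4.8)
The plan is to simulate a parity decision tree $T$ for $f$ by a deterministic protocol for $F = f\circ\oplus$. The key observation is that each node of $T$ queries a linear function $\ell(z)$ of the input $z \in \Bn$, and when the protocol is run on $(x,y)$ the relevant input is $z = x\oplus y$. Since $\ell$ is \gf-linear, $\ell(x\oplus y) = \ell(x) + \ell(y)$, where the sum is over \gf. So a single parity query on $z$ can be answered by having \alice send the bit $\ell(x)$ and \bob send the bit $\ell(y)$; both parties then compute $\ell(x)+\ell(y)$ locally and agree on which branch of $T$ to follow. Note that $\ell$ depends only on the node of $T$ reached so far, which both parties know, so no prior coordination is needed.

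The steps, in order, are as follows. First, fix an optimal PDT $T$ computing $f$, so $T$ has depth $\D_\oplus(f)$. Second, describe the protocol: starting at the root of $T$, both parties maintain a pointer to the current node; at a node labeled by linear function $\ell$, \alice sends $\ell(x)$ (1 bit) and \bob sends $\ell(y)$ (1 bit), they XOR the two bits to obtain $\ell(x\oplus y)$, and both move the pointer along the corresponding edge. Third, when the pointer reaches a leaf, both parties output the label of that leaf. Fourth, verify correctness: because $T$ computes $f$ and the protocol faithfully reproduces the computation path of $T$ on input $x\oplus y$, the output equals $f(x\oplus y) = F(x,y)$. Finally, count the communication: along any root-to-leaf path there are at most $\D_\oplus(f)$ internal nodes, and each contributes $2$ bits, giving $\dcc(F) \le 2\D_\oplus(f)$.

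There is essentially no technical obstacle here; the only point requiring any care is the linearity of the query over \gf, which is exactly what makes the XOR structure of $F$ exploitable — each joint parity query factors into one bit from each party. One should also note that this bound holds for the \emph{worst-case} number of queries and bits, matching the definitions of $\D_\oplus$ and $\dcc$, since the simulation never uses more rounds than the longest path of $T$.
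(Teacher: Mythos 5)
Your proposal is correct and is precisely the standard simulation argument the paper has in mind (sketched in its introduction: exchange $\ell(x)$ and $\ell(y)$ to simulate each parity query $\ell(x\oplus y)$, two bits per node, at most $\D_\oplus(f)$ nodes on any path). The paper states the Fact without a separate proof, and your write-up fills in exactly the intended details.
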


\subsection*{Fourier analysis}
For any real function $f:\Bn\to\mbR$, 
the Fourier coefficients are defined by $\hat f(s) = 2^{-n}\sum_x f(x)\chi_s(x)$, 
where $\chi_s(x) = (-1)^{s\cdot x}$. 
The function $f$ can be written as 
$f = \sum_s \hat f(s) \chi_s$. 
The $\ell_p$-norm of $\hat f$ for any $p>0$, 
denoted by $\fn{f}{p}$, 
is defined as $(\sum_s |\hat f(s)|^p)^{1/p}$. 
The Fourier sparsity of $f$, denoted by $\|\hat f\|_0$, 
is the number of nonzero Fourier coefficients of $f$. 
The following is a simple consequence of Cauchy-Schwarz inequality 
\begin{equation}\label{eq:fourier_l0_l1}
\fn{f}{1}\leq \sqrt{\fn{f}{0}}.
\end{equation}
Note that $\fn{f}{1}$ can be much smaller than $\fn{f}{0}$. 
For instance, the AND function has $\fn{f}{1} \leq 3$ but $\fn{f}{0} = 2^n$. 
The Fourier coefficients of $f:\BntB$ and $f^\pm$ are related by 
$\widehat{f^\pm}(s) = \delta_{s,0^n} - 2 \hat f(s)$, 
where $\delta_{x,y}$ is the Kronecker delta function.
Therefore we have 
\begin{equation}\label{eq:range switch}
	2\fn{f}{1} - 1 \leq \wfn{f^\pm}{1} \leq 2\fn{f}{1} + 1,\quad 
\text{ and } \quad \fn{f}{0} - 1 \leq \wfn{f^\pm}{0} \leq \fn{f}{0} + 1. 
\end{equation}

For any function $f:\Bn\to\mbR$, 
Parseval's Indentity says that $\sum_s \hat f_s^2 = \av_{x}[f(x)^2]$. 
When the range of $f$ is $\B$, then $\sum_s \hat f_s^2 = \av_x [f(x)]$. 
We sometimes use $\hat f$ to denote the vector of $\{\hat f(s): s\in \Bn\}$. 

\begin{Prop}[Convolution]\label{prop:convolution}
For two functions $f,g:\Bn\to\mbR$, the Fourier spectrum 
of $fg$ is given by the following formula: 
$\widehat{fg}(s) = \sum_t \hat f(t)\hat g(s+t)$.
\end{Prop}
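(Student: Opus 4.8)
The plan is to expand the product $fg$ directly in the Fourier basis and read off its coefficients. Write $f = \sum_t \hat f(t)\chi_t$ and $g = \sum_u \hat g(u)\chi_u$, so that $fg = \sum_{t,u}\hat f(t)\hat g(u)\,\chi_t\chi_u$. The only fact I need about characters is that they are multiplicative with respect to addition of indices over \gf: for every $x\in\Bn$,
\[
\chi_t(x)\chi_u(x) = (-1)^{t\cdot x}(-1)^{u\cdot x} = (-1)^{(t+u)\cdot x} = \chi_{t+u}(x),
\]
where all index arithmetic is mod $2$; in particular $\chi_t^2 \equiv 1$ and, for each fixed $t$, the map $u\mapsto t+u$ is a bijection of $\Bn$.

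Substituting this identity, $fg = \sum_{t,u}\hat f(t)\hat g(u)\,\chi_{t+u}$. Grouping the terms by the value $s = t+u$ (equivalently $u = s+t$, since $-t=t$ in \gf), I obtain
\[
fg = \sum_s \Big(\sum_t \hat f(t)\hat g(s+t)\Big)\chi_s .
\]
Because $\{\chi_s\}_{s\in\Bn}$ is an orthonormal basis for real-valued functions on $\Bn$, the Fourier expansion is unique, so comparing the coefficients of $\chi_s$ yields $\widehat{fg}(s) = \sum_t \hat f(t)\hat g(s+t)$, as claimed.

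If one prefers to avoid appealing to uniqueness of the expansion, an equivalent route is to compute the coefficient directly from the definition: $\widehat{fg}(s) = 2^{-n}\sum_x f(x)g(x)\chi_s(x)$; insert the Fourier expansion of $g$ only, interchange the (finite) summations, and use the orthogonality relation $2^{-n}\sum_x \chi_a(x) = \delta_{a,0^n}$ to collapse the $x$-sum, which forces $a = s+t+u = 0^n$, i.e. $u = s+t$. There is no genuine obstacle here: the statement is essentially a one-line computation, and the only point requiring care is the bookkeeping of \gf-arithmetic in the indices (so that ``$s+t$'' and ``$s-t$'' may be used interchangeably), together with the observation that all sums involved are finite, so the rearrangements are trivially justified.
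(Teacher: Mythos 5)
Your proof is correct and is the standard argument; the paper simply states this convolution identity as a known fact without giving a proof. Both of your routes (expanding both factors and invoking uniqueness of the Fourier expansion, or computing $\widehat{fg}(s)$ directly from the definition and using orthogonality) are valid, and the $\gf$-arithmetic bookkeeping you flag is the only point needing care.
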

Using this proposition, one can characterize the Fourier coefficients of Boolean functions as follows.
\begin{Prop}\label{prop:Boolean}
A function $f: \Bn\to\mbR$ has range $\pmB$ if and only if 
\[
\sum_{t\in \Bn}\hat{f}^{2}(t)=1, 
\quad \text{and} \quad \sum_{t\in \Bn}\hat{f}(t)\hat{f}(s+t) = 0, 
\quad \forall s\in \Bn-0^n.
\]
\end{Prop}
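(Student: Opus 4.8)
The plan is to reduce the statement to the elementary observation that a real function on $\Bn$ takes values in $\pmB$ exactly when its pointwise square is the constant function $1$, and then to read off the two displayed conditions as the Fourier coefficients of $f^2$.

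First I would note that for each $x$ we have $f(x)\in\pmB$ if and only if $f(x)^2=1$; hence $f$ has range $\pmB$ if and only if $f^2\equiv\mathbbm{1}$, where $\mathbbm{1}$ denotes the all-ones function on $\Bn$. Since the characters $\{\chi_s\}_{s\in\Bn}$ form an orthonormal basis, the Fourier transform is a linear bijection on real functions on $\Bn$, so the identity $f^2\equiv\mathbbm{1}$ is equivalent to the identity of Fourier coefficients $\widehat{f^2}(s)=\widehat{\mathbbm{1}}(s)$ for every $s\in\Bn$. A one-line computation gives $\widehat{\mathbbm{1}}(s)=2^{-n}\sum_x\chi_s(x)=\delta_{s,0^n}$.

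Next I would invoke the Convolution formula (Proposition~\ref{prop:convolution}) with $g=f$, which yields $\widehat{f^2}(s)=\sum_t\hat f(t)\hat f(s+t)$. Evaluating at $s=0^n$ and using that $t\mapsto t+0^n$ is the identity on $\Bn$, we get $\widehat{f^2}(0^n)=\sum_t\hat f(t)^2$, so the requirement $\widehat{f^2}(0^n)=\widehat{\mathbbm{1}}(0^n)=1$ is precisely $\sum_t\hat f^2(t)=1$. For $s\neq 0^n$, the requirement $\widehat{f^2}(s)=\widehat{\mathbbm{1}}(s)=0$ is precisely $\sum_t\hat f(t)\hat f(s+t)=0$. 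Combining the two ranges of $s$ establishes the equivalence in both directions at once.

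There is no substantial obstacle here; the only points requiring a little care are that the Fourier transform is injective (so that equality of functions may be traded for equality of all Fourier coefficients) and that the $s=0^n$ case of the convolution sum collapses to $\sum_t\hat f(t)^2$ because shifting by $0^n$ is trivial.
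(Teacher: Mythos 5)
Your proposal is correct and matches the approach the paper is clearly alluding to (it states the proposition immediately after the convolution formula with the remark "Using this proposition, one can characterize the Fourier coefficients of Boolean functions," and gives no further proof). Translating $f^2\equiv\mathbbm{1}$ into Fourier coefficients via convolution and injectivity of the Fourier transform is exactly the intended argument.
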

Another fact easily follows from the convolution formula is the following. 
\begin{Lem}\label{lem:norm of product}
Let $f, g : \B^n \to \mbR$, then 
$\|\widehat{f g}\|_0 \leq \|\hat{f}\|_0 \|\hat{g}\|_0$ and 
$\|\widehat{f g}\|_1 \leq \|\hat{f}\|_1 \|\hat{g}\|_1$.
\end{Lem}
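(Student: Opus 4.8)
The plan is to obtain both inequalities as immediate consequences of the convolution formula of Proposition~\ref{prop:convolution}, namely $\widehat{fg}(s) = \sum_{t}\hat f(t)\hat g(s+t)$, combined in the first case with a trivial sumset-size estimate and in the second case with the triangle inequality. No genuinely new idea is needed; the proof is a short direct computation in the Fourier domain.

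For the sparsity bound, I would argue by locating the support of $\widehat{fg}$. If $\widehat{fg}(s)\neq 0$, then the sum $\sum_t \hat f(t)\hat g(s+t)$ contains at least one nonzero term, so there is a $t$ with $\hat f(t)\neq 0$ and $\hat g(s+t)\neq 0$; setting $u=t$ and $v=s+t$ exhibits $s = u+v$ with $u\in\supp(\hat f)$ and $v\in\supp(\hat g)$. Hence $\supp(\widehat{fg})\subseteq \supp(\hat f)+\supp(\hat g)$, and since the sumset of two finite sets has cardinality at most the product of their cardinalities, $\wfn{fg}{0} \le |\supp(\hat f)+\supp(\hat g)| \le |\supp(\hat f)|\cdot|\supp(\hat g)| = \fn{f}{0}\fn{g}{0}$.

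For the spectral-norm bound, I would take absolute values in the convolution formula, apply the triangle inequality term by term, and then interchange the two summations: $\sum_s|\widehat{fg}(s)| \le \sum_s\sum_t|\hat f(t)|\,|\hat g(s+t)| = \sum_t|\hat f(t)|\bigl(\sum_s|\hat g(s+t)|\bigr)$. For each fixed $t$ the map $s\mapsto s+t$ is a bijection of $\Bn$, so the inner sum equals $\sum_s|\hat g(s)| = \fn{g}{1}$, and therefore $\wfn{fg}{1} \le \fn{f}{1}\fn{g}{1}$.

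There is no substantive obstacle here — the statement is essentially a one-line corollary of Proposition~\ref{prop:convolution} — and the only points requiring (minimal) care are that the reindexing $s\mapsto s+t$ used in the $\ell_1$ computation is a bijection of $\Bn$ because addition over \gf is invertible, and correspondingly for the $\ell_0$ bound that a sumset is never larger than the full product set. If anything, the mild subtlety worth flagging is that these inequalities can be strict (cancellation in the convolution sum, or collisions $u_1+v_1 = u_2+v_2$ in the sumset), so the bounds should not be expected to be tight in general.
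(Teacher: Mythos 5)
Your proof is correct and follows exactly the route the paper intends: both inequalities are read off directly from the convolution formula (Proposition~\ref{prop:convolution}), with the support-in-sumset observation giving the $\ell_0$ bound and the triangle inequality plus the reindexing $s\mapsto s+t$ giving the $\ell_1$ bound. The paper itself only remarks that the lemma ``easily follows from the convolution formula,'' so your argument supplies precisely the intended details.
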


\paragraph{Linear maps and restrictions.}
Sometimes we need to rotate the input space: For an invertible linear map $L$ on $\Bn$, 
define $Lf$ by $Lf(x) = f(Lx)$. It is not hard to see that $\deg_2(Lf) = \deg_2(f)$, 
and that $\widehat{Lf}(s) = \hat f((L^T)^{-1}s)$. Thus 

\begin{equation}\label{eq:rotation norm}
	\wfn{Lf}{1} = \fn{f}{1} \text{ and }\wfn{Lf}{0} = \fn{f}{0}. 
\end{equation}

For a function $f:\Bn\to\mbR$, define two subfunctions $f_0$ and $f_1$, 
both on $\B^{n-1}$: $f_b(x_2, \ldots, x_n) = f(b,x_2, \ldots, x_n)$. 
It is easy to see that for any 
$s\in\B^{n-1}$, $\hat f_b(s) = \hat f(0s) + (-1)^b \hat f(1s)$, thus 
\begin{equation}\label{eq:subfn norm}
	\fn{f_b}{0} \leq \fn{f}{0} \text{ and }\fn{f_b}{1} \leq \fn{f}{1}.
\end{equation} 
The concept of subfunctions can be generalized to general directions. 
Suppose $f:\Bn\to\mbR$ and $S\subseteq \Bn$ is a subset of the domain. 
Then the restriction of $f$ on $S$, 
denoted by $f|_S$ is the function from $S$ to $\mbR$ defined naturally by $f|_S(x) = f(x)$, $\forall x\in S$. 
In this paper, we are concerned with restrictions on affine subspaces.

\begin{Lem}\label{lem:rotation}
Suppose $f:\Bn\to\mbR$ and $H = a+V$ is an affine subspace, 
then one can define the spectrum $\widehat{f|_H}$ of the restricted function $f|_H$ such that
\begin{enumerate}
\item If $\codim(H) = 1$, then $\widehat{f|_H}$ is the collection of 
  $\hat f(s) + (-1)^{b} \hat f(s+t)$ for all unordered pair $(s,s+t)$, 
  where $t$ is the unique non-zero vector orthogonal to $V$, and $b = 0$ if $a \in V$ and $b = 1$ otherwise. 
  Sometimes we refer to such restriction as a \emph{folding} over $t$.  
\item $\wfn{f|_H}{p} \leq \fn{f}{p}$, for any $p\in [0,1]$.
\item If $\range(f) = \pmB$, then the following three statements are equivalent: 
 1) $f|_H(x) = c \chi_s(x)$ for some $s\in \Bn$ and $c\in \pmB$, 
 2) $\wfn{f|_H}{0} = 1$, and 3) $\wfn{f|_H}{1} = 1$. 
\end{enumerate}
\end{Lem}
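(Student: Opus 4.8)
The plan is to reduce everything to a single \emph{folding} --- the codimension-one case --- and then to iterate. For $\codim(H)=1$ I would take item~1 to be the \emph{definition} of $\widehat{f|_H}$, and check it is legitimate. Write $H=\{x:\langle t,x\rangle=b\}$, where $t$ is the unique nonzero vector orthogonal to $V$ and $b=\langle t,a\rangle$; since $V=t^{\perp}$ we have $b=0$ exactly when $a\in V$. Choose an invertible linear $L$ on $\Bn$ with $L^{T}t=e_1$, the first coordinate vector (possible because $t\ne 0$). Then $L$ restricts to a bijection from the hyperplane $H'=\{x:x_1=b\}$ onto $H$, and $(Lf)(x)=f(Lx)=(f|_H)(Lx)$ for $x\in H'$, so $\widehat{(Lf)|_{H'}}$ --- the ordinary spectrum of the subfunction $(Lf)_b$ on $\{0,1\}^{n-1}$ --- is a spectrum of $f|_H$. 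Substituting $g=Lf$ into the identity $\hat g_b(s)=\hat g(0s)+(-1)^{b}\hat g(1s)$ stated just before Eq.~\eqref{eq:subfn norm}, and using $\widehat{Lf}(s)=\hat f((L^{T})^{-1}s)$ together with the fact that $(L^{T})^{-1}(1s)$ and $(L^{T})^{-1}(0s)$ differ by $(L^{T})^{-1}e_1=t$, gives exactly the collection described in item~1, indexed by the cosets $\{s,s+t\}$ of $\langle t\rangle$ and manifestly independent of $L$. For $\codim(H)=k>1$ I would define $\widehat{f|_H}$ by iterating: pick a hyperplane $H_1\supseteq H$, pass to $f|_{H_1}$ on an $(n-1)$-cube, and repeat, using transitivity of restriction; any such choice yields a legitimate spectrum of $f|_H$, which is all that items~2 and~3 need.

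For item~2 it suffices to prove the bound for one folding and compose. For $p\in(0,1]$, each entry of $\widehat{f|_H}$ has the form $\hat f(s)\pm\hat f(s+t)$ and $u\mapsto|u|^{p}$ is subadditive, so the sum over the pairs of $|\hat f(s)\pm\hat f(s+t)|^{p}$ is at most $\sum_s|\hat f(s)|^{p}=\fn{f}{p}^{p}$; for $p=0$, every coset on which $\widehat{f|_H}$ is nonzero contains some $s$ with $\hat f(s)\ne0$ and distinct cosets are disjoint, so $\wfn{f|_H}{0}\le\fn{f}{0}$.

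For item~3 I would use that $f|_H$ takes values in $\pmB$ (because $f$ does), so Proposition~\ref{prop:Boolean} applied to $f|_H$ gives $\sum_s\widehat{f|_H}(s)^{2}=1$. Then $(1)\Rightarrow(2)$ is immediate, a signed character having exactly one nonzero coefficient. For $(2)\Rightarrow(3)$: if $\wfn{f|_H}{0}=1$ then $f|_H=c\,\chi_s$ for a single character and a real $c$, and evaluating at any point forces $|c|=1$, hence $\wfn{f|_H}{1}=1$. For $(3)\Rightarrow(1)$: every coefficient has absolute value at most $\wfn{f|_H}{1}=1$, so $\widehat{f|_H}(s)^{2}\le|\widehat{f|_H}(s)|$ termwise; summing and comparing with $\sum_s\widehat{f|_H}(s)^{2}=1=\sum_s|\widehat{f|_H}(s)|$ forces equality in every term, so each $|\widehat{f|_H}(s)|\in\{0,1\}$, and as the squares sum to $1$ exactly one coefficient equals $\pm1$ and the rest vanish, i.e.\ $f|_H=\pm\chi_s$.

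The one place that needs care is well-definedness: $\widehat{f|_H}$ depends a priori on the auxiliary maps $L$ and, when $k>1$, on the chain of hyperplanes. The point to stress is that everything we actually use --- the $\ell_p$-norms of $\widehat{f|_H}$ and the property ``$f|_H$ equals a signed character'' --- is invariant under invertible linear relabelings of the domain of $f|_H$ by Eq.~\eqref{eq:rotation norm}, so these statements are unambiguous; once this is pinned down, items 1--3 are just the short calculations above.
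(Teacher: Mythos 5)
Your proof is correct and follows essentially the same route as the paper's: both reduce to the codimension-one case via an invertible linear change of basis carrying a coordinate hyperplane to $H$, compute the folded spectrum from $\hat f_b(s)=\hat f(0s)+(-1)^b\hat f(1s)$ together with $\widehat{Lf}(s)=\hat f((L^T)^{-1}s)$, use subadditivity of $|\cdot|^p$ for item 2, and Parseval plus $\ell_2\le\ell_1\le\ell_0$ for item 3, with the well-definedness of the construction justified by invariance of $\ell_p$-norms under invertible linear relabelings. The only differences are notational (fixing the first rather than the last coordinate, and working with $L$ directly instead of $R$ and $L=R^{-1}$).
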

See Appendix~\ref{sec:restriction} for a proof. 
In the proof, we use a rotation $R$ as an isomorphism from $H$ 
(which may not be an group under addition any more) 
to the additive group of $\B^{n-1}$ (when $\codim(H) = 1$). 
Though the rotation is not unique, the resulting Fourier vector $\widehat{f|_H}$ is 
the same up to a linear invertible transform, thus the norm $\wfn{f|_H}{p}$ does not depend on the rotation. 
In addition, the \gf-degree of the subfunction $f|_H$ does not depend on the rotation $R$, 
thus we will just define $\deg_2(f|_H)$ to be the $\deg_2(f_b)$ where $f_b$ is the newly defined subfunctions. 

Using the above lemma, it is not hard to prove by induction the following fact, 
which says that short PDT gives Fourier sparsity. 
\begin{Prop}
For all $f:\BntB$, $\fn{f}{0} \leq 4^{\scriptsize \D_\oplus(f)}$.
\end{Prop}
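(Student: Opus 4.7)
The plan is to induct on $d = \D_\oplus(f)$, using part 1 of Lemma~\ref{lem:rotation} to track how a single parity query acts on the Fourier support. The base case $d=0$ is immediate: $f$ is constant, so $\fn{f}{0} \leq 1 = 4^0$.

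For the inductive step, fix an optimal PDT for $f$ and let $t \in \Bn \setminus \{0^n\}$ be the parity queried at the root. The two children compute the subfunctions $f_b = f|_{H_b}$, where $H_b = \{x: \langle t,x\rangle = b\}$, and each $f_b$ admits a PDT of depth at most $d-1$, so by the inductive hypothesis $\wfn{f_b}{0} \leq 4^{d-1}$ for $b \in \{0,1\}$. By Lemma~\ref{lem:rotation}(1), the spectrum $\widehat{f|_{H_b}}$ consists, one entry per unordered pair $\{s, s+t\}$, of the value $\hat f(s) + (-1)^b \hat f(s+t)$.

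Partition the unordered pairs $\{s, s+t\}$ according to how many of $\hat f(s)$, $\hat f(s+t)$ are nonzero: let $P_2$, $P_1$, $P_0$ be the number of pairs with two, one, and zero nonzero coefficients, respectively. Then $\fn{f}{0} = 2P_2 + P_1$. For a pair in $P_2$, we cannot have both $\hat f(s)+\hat f(s+t)=0$ and $\hat f(s)-\hat f(s+t)=0$, so it contributes at least $1$ to $\wfn{f_0}{0}+\wfn{f_1}{0}$; for a pair in $P_1$ it contributes exactly $2$. Hence
\[
P_2 + P_1 \;\leq\; \wfn{f_0}{0} + \wfn{f_1}{0} \;\leq\; 2 \cdot 4^{d-1},
\]
and therefore $\fn{f}{0} = 2P_2 + P_1 \leq 2(P_2 + P_1) \leq 4^d$, closing the induction.

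The only step needing any care is the counting inequality $P_2 + P_1 \leq \wfn{f_0}{0} + \wfn{f_1}{0}$; everything else is bookkeeping on top of Lemma~\ref{lem:rotation}. Since the lemma already supplies the folding formula for a codimension-$1$ restriction, there is no real obstacle here, and the factor of $4$ (rather than $2$) in the bound arises precisely because a single nonzero pair $\{s, s+t\}$ in $\hat f$ may collapse to one nonzero entry in $\widehat{f_0} \cup \widehat{f_1}$.
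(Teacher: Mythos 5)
Your proof is correct, and it is exactly the argument the paper is gesturing at when it says the fact "is not hard to prove by induction" from Lemma~\ref{lem:rotation}: induct on the PDT depth, use the codimension-$1$ folding formula at the root, and account for how an unpaired nonzero coefficient survives into both children while a paired collision survives into at least one. The counting step $P_2 + 2P_1 \le \wfn{f_0}{0}+\wfn{f_1}{0}$ (strictly stronger than the $P_2+P_1$ you actually use) is right, and the bookkeeping $\fn{f}{0}=2P_2+P_1 \le 2(P_2+P_1) \le 2\cdot 2\cdot 4^{d-1}=4^d$ closes the induction.
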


The following theorem \cite{BC99} says that the \gf-degree can be bounded from above by logarithm of Fourier sparsity.
\begin{Fact}[\cite{BC99}]\label{fact:deg vs sparsity}
	For all $f:\BntB$, it holds that $\deg_2(f) \leq \log \fn{f}{0}$.
\end{Fact}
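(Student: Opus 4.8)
The plan is to restrict $f$ to a sub-cube on which it keeps its full $\mathbb{F}_2$-degree, and then to observe that a function of full $\mathbb{F}_2$-degree has \emph{every} Fourier coefficient nonzero; since restriction never increases Fourier sparsity, this will force $\fn{f}{0}\ge 2^{\deg_2(f)}$. (We may assume $f\not\equiv 0$.) First I would set $d=\deg_2(f)$ and fix $S\subseteq[n]$ with $|S|=d$ such that the monomial $\prod_{i\in S}x_i$ occurs in the $\mathbb{F}_2$-polynomial of $f$. Restricting $f$ by $x_j=0$ for all $j\notin S$ kills every monomial mentioning a coordinate outside $S$ and leaves the monomials supported inside $S$ untouched, so the restricted function $g$ (a function of the $d$ coordinates in $S$) still contains $\prod_{i\in S}x_i$ and hence has $\deg_2(g)=|S|=d$.

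The crux will be the claim that any $g:\{0,1\}^d\to\{0,1\}$ with $\deg_2(g)=d$ satisfies $\fn{g}{0}=2^d$. By $\mathbb{F}_2$-M\"obius (Reed--Muller) inversion the coefficient of $\prod_{i=1}^d x_i$ in $g$ equals $\bigoplus_{x\in\{0,1\}^d}g(x)$, so $\deg_2(g)=d$ forces $|g^{-1}(1)|$ to be odd. Then for every $s\in\{0,1\}^d$ one has $2^d\,\hat g(s)=\sum_{x:\,g(x)=1}(-1)^{s\cdot x}$, an integer congruent modulo $2$ to $|g^{-1}(1)|$, hence odd; in particular $\hat g(s)\neq 0$. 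Thus all $2^d$ Fourier coefficients of $g$ are nonzero.

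Finally, $g$ arises from $f$ by $n-d$ successive single-coordinate restrictions, so iterating \eqref{eq:subfn norm} (or applying part~2 of Lemma~\ref{lem:rotation} with $p=0$) gives $\fn{g}{0}\le\fn{f}{0}$. Combining, $2^d=\fn{g}{0}\le\fn{f}{0}$, i.e.\ $\deg_2(f)\le\log\fn{f}{0}$, as desired. The only delicate point is the parity computation in the second paragraph---identifying ``full $\mathbb{F}_2$-degree'' with ``odd support size,'' which is precisely what makes each Fourier coefficient of $g$ an odd multiple of $2^{-d}$ and therefore nonzero; everything else is just bookkeeping about restrictions.
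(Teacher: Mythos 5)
Your proof is correct. The paper does not prove this fact itself but simply cites \cite{BC99}; the argument you give --- restrict to a $d$-dimensional subcube on which the degree stays $d$, use Reed--Muller/M\"obius inversion to see that full $\mathbb{F}_2$-degree forces $|g^{-1}(1)|$ to be odd, deduce that every $\hat g(s)$ is an odd multiple of $2^{-d}$ and hence nonzero, and finish with the fact that restriction does not increase Fourier sparsity (Eq.~\eqref{eq:subfn norm}) --- is exactly the standard proof one finds in the literature, and every step checks out.
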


\section{Polynomial rank and the Main PDT algorithm}
The following notion of polynomial rank has been studied in \cite{Dic58} for degree-$2$ polynomials 
and in \cite{HS10} for degree-$3$ polynomials.\footnote{Degree-$4$ polynomials
was also studied in \cite{HS10}, but the rank is slightly different there as 
they allow some summands to be product of two quadratic polynomials.} 
\begin{Def}
The polynomial rank of a function $f\in \gf[x_1, \ldots, x_n]$, denoted $\rank(f)$, 
is the minimum integer $r$ \st $f$ can be expressed as 
\[
f = \ell_1f_1 + \ldots + \ell_rf_r + f_0,
\] 
where $\deg_2(\ell_i) = 1$ for all $1\leq i \leq r$ and 
$\deg_2(f_i) < \deg_2(f)$ for all $0 \leq i\leq r$. 
Sometimes we emphasize the degree by writing the polynomial rank as $\rank_d(f)$ with $d = \deg_2(f)$.
\end{Def}

Recall that a parity certificate is an affine subspace $H$ restricted on which $f$ is a constant. 
The parity certificate complexity is the largest co-dimension of such $H$. 
The next lemma says that the polynomial rank is quite small compared to the parity certificate complexity, 
even if we merely require $f$ to have a lower \gf-degree (rather than be constant) 
on the affine subspace; and in addition, 
even if we take the \emph{minimum} co-dimension over all such $H$.
\begin{Lem}\label{lem:rank-subspace}
For all non-constant $f:\BntB$, the following properties hold.
\begin{enumerate}
	\item There is a subspace $V$ of co-dimension $r = \rank(f)$ \st 
	when restricted to each of the $2^r$ affine subspaces $a+V$,
	$f$ has \gf-degree at most $\deg_2(f)-1$. 
	\item For all affine subspaces $H$ with $\codim(H) < \rank(f)$, $\deg_2(f|_H) = \deg_2(f)$. 
\end{enumerate}
\end{Lem}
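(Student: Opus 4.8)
I would prove the two parts almost independently and note that together they show $\rank(f)$ equals the least codimension of an affine subspace on which $f$ drops a \gf-degree: part (1) exhibits such a subspace of codimension $\rank(f)$, and part (2) shows no smaller codimension can work. Part (1) comes straight out of a minimal rank decomposition; part (2) comes from a direct polynomial manipulation after a convenient change of coordinates.

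\textbf{Part (1).} I would start from a decomposition $f = \ell_1 f_1 + \cdots + \ell_r f_r + f_0$ witnessing $r = \rank(f)$, and first normalize it. Writing $\ell_i = \tilde\ell_i + \beta_i$ with $\tilde\ell_i$ homogeneous and $\beta_i\in\gf$, the summand $\beta_i f_i$ has \gf-degree below $\deg_2(f)$ and can be folded into $f_0$, so I may assume each $\ell_i$ is homogeneous (and nonzero, since $\deg_2(\ell_i)=1$). Next, the $\ell_i$ must be linearly independent: if $\ell_j$ were a \gf-combination of the others, substituting that combination and regrouping produces a valid decomposition with fewer than $r$ linear factors (the new coefficient functions are still sums of polynomials of \gf-degree $<\deg_2(f)$), contradicting minimality. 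Then $V=\bigcap_{i=1}^r\{x:\ell_i(x)=0\}$ is a genuine subspace of codimension exactly $r$, and on each coset $a+V$ every $\ell_i$ equals the constant $\ell_i(a)$, so $f|_{a+V}=\sum_i \ell_i(a)\,f_i|_{a+V}+f_0|_{a+V}$ is a sum of restrictions of polynomials of \gf-degree at most $\deg_2(f)-1$. Since restricting to an affine subspace never raises the \gf-degree (iterate the codimension-one case discussed after Lemma~\ref{lem:rotation}), $\deg_2(f|_{a+V})\le \deg_2(f)-1$.

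\textbf{Part (2).} I would prove the contrapositive: if $H$ is an affine subspace with $\deg_2(f|_H)\le \deg_2(f)-1$ and $c=\codim(H)$, then $\rank(f)\le c$. Both $\deg_2$ and $\rank$ are invariant under invertible affine changes of coordinates --- precomposing $f=\sum_i \ell_i f_i+f_0$ with such a map keeps every $\ell_i$ of degree exactly $1$ and every $f_i$ of degree $<\deg_2(f)$, and the inverse map gives the reverse inequality --- and such a change carries $H$ and $f|_H$ to a coordinate situation, so I may assume $H=\{x:x_1=\cdots=x_c=0\}$. Expanding $f$ as a polynomial in $x_1,\dots,x_c$ with coefficients in $\gf[x_{c+1},\dots,x_n]$ and grouping every monomial that uses at least one of $x_1,\dots,x_c$ by its \emph{smallest} such index, I get $f=f|_H+\sum_{i=1}^c x_i h_i$, where $h_i$ involves none of $x_1,\dots,x_i$. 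Since $x_i\nmid h_i$ we have $\deg_2(x_i h_i)=1+\deg_2(h_i)$, and since grouping by smallest index prevents cancellation, every monomial of $x_i h_i$ occurs in $f$, so $1+\deg_2(h_i)\le \deg_2(f)$. Combined with $\deg_2(f|_H)\le\deg_2(f)-1$, this is a rank decomposition of $f$ with $c$ linear factors, hence $\rank(f)\le c=\codim(H)$.

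\textbf{Main obstacle.} The only step with real content is the polynomial manipulation in part (2): one must ensure $x_i\nmid h_i$ and that grouping by smallest index kills cancellation, which is precisely what makes $\deg_2(x_i h_i)\le\deg_2(f)$. The rest --- normalizing the decomposition in part (1) and the affine-invariance of $\rank$ and $\deg_2$ --- is routine, though it needs to be spelled out because the definition of $\rank$ allows affine rather than homogeneous linear factors.
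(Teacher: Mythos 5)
Your proof is correct, and for part (2) it takes a genuinely different and cleaner route than the paper's. For part (1) you follow the paper's idea (read $V$ off a minimal decomposition) but fill in two small gaps the paper glosses over: folding the constant parts $\beta_i$ of the $\ell_i$ into $f_0$ so that $V$ is a genuine linear subspace, and invoking minimality of $r$ to get linear independence of the $\ell_i$, hence $\codim(V)=r$ exactly. For part (2), the paper also changes to a basis of linear forms with $\ell_1,\dots,\ell_k$ spanning $V^\perp$, but then writes $f$ in a ``nested'' form $\sum_{i\le s}\ell_i f_i(\ell_{i+1},\dots,\ell_n)+f_0(\ell_{s+1},\dots,\ell_n)$ and runs a minimality argument on $s$: it shows the tail $f'=\sum_{i>k}\ell_i f_i+f_0$ has $\gf$-degree $d$, then argues the head (which drops to degree $<d$ on $H$) cannot cancel the top-degree part of $f'$, so $\deg_2(f|_H)=d$. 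You instead prove the contrapositive by \emph{directly constructing} a rank decomposition: after an affine change putting $H=\{x_1=\cdots=x_c=0\}$, you group monomials by their smallest index among $1,\dots,c$ to get $f=f|_H+\sum_{i=1}^c x_i h_i$ with no cancellation across groups, so $\deg_2(h_i)\le\deg_2(f)-1$ and, under the hypothesis $\deg_2(f|_H)<\deg_2(f)$, this is a witness that $\rank(f)\le c$. Your version avoids the minimization over decompositions entirely; the paper's version proves the implication in the stated direction and makes the ``head cannot cancel tail'' intuition explicit. Both require the observation that $\rank$ and $\deg_2$ are affine-invariant, which you correctly justify. One tiny remark: the contrapositive you prove ($\deg_2(f|_H)<\deg_2(f)\Rightarrow\rank(f)\le\codim(H)$) matches the lemma only because restrictions never raise $\gf$-degree, so $\deg_2(f|_H)\ne\deg_2(f)$ is equivalent to strict inequality; you use this implicitly and it is indeed supplied by the discussion after Lemma~\ref{lem:rotation}.
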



\begin{proof} Fix a non-constant function $f:\BntB$.
\begin{enumerate}
\item Suppose $f = \ell_1f_1 + \ldots + \ell_rf_r + f_0$, 
	where $r = \rank(f)$. 
	Let $V = \{x: \ell_i(x) = 0, \forall i\in [r]\}$. 
	Then the conclusion holds by the definition of polynomial rank.
\item Suppose $H = a+V$ where $V$ is a subspace. Let $d = \deg_2(f)$, $r = \rank(f)$ and $k = \codim(V)$. 
    Let $s$ be the smallest integer \st 
		\begin{equation}\label{eq:rankproof}
		f(x) = \ell_1(x) f_1(\ell_2(x),\cdots,\ell_n(x)) + \ldots + \ell_s(x)f_s(\ell_{s+1}(x),...,\ell_n(x)) + f_0(\ell_{s+1}(x),...,\ell_n(x))
		\end{equation}
	for some linear functions $\ell_1(x), \ldots, \ell_k(x)$ \st when viewed as vectors, $span\{\ell_{1},...,\ell_k\} = V^\bot $, and some functions $f_i$ whose \gf-degree are all strictly smaller than $d$. 
	By the definition of the polynomial rank, 
	we have that $r\leq s$. 
	Since we assumed that $k < r$, it holds that $k < s$. Consider the function 
	\[f'(\ell_{k+2},\ldots, \ell_n) = \ell_{k+1}f_{k+1}(\ell_{k+2},\ldots, \ell_n) + \cdots + \ell_{s}f_{s}(\ell_{s+1},\ldots, \ell_n) + f_{0}(\ell_{s+1},\ldots, \ell_n).\] 
	Since $k < s$, there is at least one term other than the last $f_{0}(l_{s+1},\ldots, l_n)$. Now that $s$ is minimized, the function $f'$ has \gf-degree equal to $d$,	because otherwise $f'$ can be written as just one deg-$(d-1)$ function, thus the number of terms in Eq.\eqref{eq:rankproof} can be reduced. 
	Furthermore, we claim that $f$  
	restricted on the affine subspace $V+a$ has \gf-degree equal to $d$ as well. 
	Indeed, the first $k$ terms in Eq.\eqref{eq:rankproof} give 
	$\ell_1 f_1(\ell_2,...,\ell_n) + \ldots + \ell_kf_k(\ell_{k+1},...,\ell_n)$, 
	which has \gf-degree at most $d-1$ after $\ell_1$, ..., $\ell_k$ take specific values given by $a$. 
	Thus it cannot cancel any degree-$d$ monomial in $f'$. 
	So $f$ on $V+a$ has \gf-degree $d$. 
\end{enumerate}
\end{proof}

Lemma~\ref{lem:rank-subspace}, though seemingly simple, 
is of fundamental importance to our problem as well as PDT algorithm designing in general.
Note that the second part of  Lemma~\ref{lem:rank-subspace} says that,
if there exists an affine subspace $V+a$ of co-dimension $k$ and a vector $a \in V^{\perp}$
such that $\deg_2(f|_{V+a})<\deg_2(f)$, then $\rank(f) \leq k$.
Therefore Lemma~\ref{lem:rank-subspace} reduces the challenging task of lowering the
degree of $f|_{V + a}$ for \emph{all} $a$ to lowering it for just \emph{one} $a$.

In the next two sections, what we are going to use is the following corollary of it. 
\begin{Cor}\label{cor:Cmin vs rank}
	For all non-constant $f:\BntB$, we have $\rank(f) \leq \C_{\oplus, \min}(f)$.
\end{Cor}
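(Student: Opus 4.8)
The plan is to read this off directly from the second part of Lemma~\ref{lem:rank-subspace}. First I would unpack the definition: by the definitions of $\C_\oplus(f,x)$ and $\C_{\oplus,\min}(f)$, the quantity $\C_{\oplus,\min}(f)$ is precisely the least co-dimension of an affine subspace on which $f$ is constant. So fix an affine subspace $H$ with $\codim(H) = \C_{\oplus,\min}(f) =: k$ on which $f$ takes a single value.

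Next I would observe the degree drop. Since $f$ is non-constant, $\deg_2(f) \geq 1$, whereas $f|_H$ is a constant function, so $\deg_2(f|_H) = 0 < \deg_2(f)$. (Here one uses that $\deg_2(f|_H)$ is a well-defined quantity, independent of the rotation used to identify $H$ with a Boolean cube of the appropriate dimension, as recorded in the discussion following Lemma~\ref{lem:rotation}.)

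Finally I would invoke the contrapositive of Lemma~\ref{lem:rank-subspace}(2): that part asserts that every affine subspace of co-dimension strictly less than $\rank(f)$ leaves the \gf-degree of the restriction equal to $\deg_2(f)$. Since our $H$ strictly lowers the degree, we cannot have $\codim(H) < \rank(f)$; hence $\rank(f) \leq \codim(H) = k = \C_{\oplus,\min}(f)$, which is the claim. There is essentially no obstacle here: the statement is a one-line consequence of Lemma~\ref{lem:rank-subspace}, and the only point needing any care — which affine representative and rotation are used to define $f|_H$ and its \gf-degree — has already been handled around Lemmas~\ref{lem:rotation} and~\ref{lem:rank-subspace}.
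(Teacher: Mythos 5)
Your proposal is correct and is exactly the argument the paper gives, merely spelled out in a bit more detail: the paper's proof of Corollary~\ref{cor:Cmin vs rank} also invokes the second item of Lemma~\ref{lem:rank-subspace} in contrapositive form, observing that a minimal parity certificate forces $\deg_2(f|_H)=0<\deg_2(f)$ and hence $\codim(H)\geq\rank(f)$.
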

\begin{proof}
	This immediately follows from the second item of Lemma \ref{lem:rank-subspace}, 
	because $\C_{\oplus, \min}(f)$ requires $\deg_2(f|_H) = 0$, 
	strictly smaller than $\deg_2(f)$ for non-constant $f$.
\end{proof}

\subsection{Main PDT algorithm}
Now we describe the main algorithm for computing function $f$,
by reducing the \gf-degree of $f$.
\begin{center}
\fbox{
\begin{minipage}[l1pt]{6.00in}
{\bf Main PDT Algorithm} \\
{\bf Input}: An PDT oracle for $x$ \\
{\bf Output}: $f(x)$.
\begin{enumerate}
	\item \textbf{while} $\deg_2(f) \geq 1$ \textbf{do} 
	\begin{enumerate}
		\item Take a fixed decomposition $f = \ell_1 f_1 + \cdots + \ell_r f_r + f_0$, 
		   where $r = \rank_d(f)$ with $d = \deg_2(f)$.
		\item for $i = 1$ to $r$
		\item \quad Query $\ell_i(x)$ and get answer $a_i$. 
		\item Update the function 
		\[f := a_1 f_1 + \cdots + a_r f_r + f_0\]
	\end{enumerate}
\end{enumerate}
%
\end{minipage}
}
\end{center}

To analyze the query complexity of this algorithm, we need to bound $\rank(f)$.
We conjecture that the following is true for all Fourier sparse Boolean functions.
\begin{Conj}\label{conj:rank vs sparsity}
	For all Boolean functions $f:\BntB$, $\rank(f) = O(\log^c(\fn{f}{0}))$ for some $c=O(1)$. 
\end{Conj}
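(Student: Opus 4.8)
}
The approach I would take is to prove the (formally stronger) statement that every Boolean $f:\BntB$ is constant on some affine subspace of co-dimension $\log^{O(1)}\fn{f}{0}$; by Corollary~\ref{cor:Cmin vs rank} this bounds $\C_{\oplus,\min}(f)$ and hence $\rank(f)$. Interleaving such a bound with the degree-reduction loop of the Main PDT Algorithm closes the circle: each round lowers $\deg_2(f)$ by at least one while not increasing $\fn{f}{0}$, and $\deg_2(f)\le\log\fn{f}{0}$ by Fact~\ref{fact:deg vs sparsity}, so $\log\fn{f}{0}$ rounds each costing $\rank(f)=\log^{O(1)}\fn{f}{0}$ queries give $\D_\oplus(f)=\log^{O(1)}\fn{f}{0}$ as well. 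Thus it suffices to control $\rank$ (equivalently, $\C_{\oplus,\min}$) at a single level.

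I would then split on the size of $\fn{f}{1}$ versus $\deg_2(f)$. If $\fn{f}{1}\le\log^{c}\fn{f}{0}$ for the target exponent $c$, Lemma~\ref{lem:l1norm} already gives $\rank(f)=O(\fn{f}{1})=\log^{O(1)}\fn{f}{0}$. If $\deg_2(f)$ is bounded by an absolute constant, Lemma~\ref{lem:rank by 1norm} together with \eqref{eq:fourier_l0_l1} gives $\rank(f)=\log^{O(1)}\fn{f}{0}$. A natural sub-goal is to improve the $2^{d^2/2}$ factor in Lemma~\ref{lem:rank by 1norm} to $2^{O(d)}$; this alone would let the low-degree regime absorb $\deg_2(f)$ up to $\Theta(\log\log\fn{f}{0})$ and would narrow the remaining case substantially.

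The genuinely hard regime is $\fn{f}{1}$ super-polylogarithmic \emph{and} $\deg_2(f)$ super-constant. Here I would look for a popular difference $t\neq 0^n$ maximizing the number of collisions $|\{s\in\supp(\hat f):s+t\in\supp(\hat f)\}|$: writing $A=\supp(\hat f)$, a double count produces a $t$ with at least $\fn{f}{0}(\fn{f}{0}-1)/|A+A|$ collisions, and folding over $t$ (Lemma~\ref{lem:rotation}) reduces $\fn{f}{0}$ by roughly that amount. If $|A+A|\le K\fn{f}{0}$, then $O(K\log\fn{f}{0})$ such folds drive the sparsity to $1$, hence drive $f|_H$ to $\pm$ a character and, after one rotation, to a constant, so $\C_{\oplus,\min}(f)=O(K\log\fn{f}{0})$. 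The conjecture would therefore follow from $|A+A|=\fn{f}{0}\cdot\log^{O(1)}\fn{f}{0}$ --- or, failing that, from the ability to keep folding profitably whenever the doubling is that large. The leverage I would press is the exact identities of Proposition~\ref{prop:Boolean}, $\sum_t\hat f(t)^2=1$ and $\sum_t\hat f(t)\hat f(s+t)=0$ for all $s\neq 0^n$: they already force every $s\in(A+A)\setminus\{0^n\}$ to have at least two representations, and they say the autocorrelation of $\hat f$ vanishes off $0^n$. The hope is to convert ``$\ell_1$-heavy spectrum with vanishing autocorrelation'' into ``a random fold still cancels at least a $\log^{-O(1)}\fn{f}{0}$ fraction of the $\ell_1$-mass in expectation,'' or else into a contradiction with $\deg_2(f)$ being large, which would close the loop with the bounded-degree regime.

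I expect the third step to be the crux, and the reason the conjecture is open. There is no useful a priori bound on $|A+A|$ for the Fourier support of a Boolean function --- the doubling can be polynomially large --- and the standard additive-combinatorics toolkit (Pl\"unnecke--Ruzsa, Freiman- and Bogolyubov-type theorems) operates in the small-doubling regime $|A+A|\le K|A|$ with $K$ (quasi-)polynomial, which is precisely the range we cannot currently handle. The new ingredient has to be a structure-versus-pseudorandomness dichotomy tailored to \emph{exact} Boolean functions rather than the usual bias-based versions --- Fourier-sparse functions evade the latter because they can have minuscule bias --- showing that a spectrum which is simultaneously spread out and additively unstructured is incompatible with the quadratic Boolean constraints unless the underlying \gf-degree is small. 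Making that precise with only polylogarithmic loss is where I would expect to get stuck.
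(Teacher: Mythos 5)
The statement you are addressing is labelled as a \emph{conjecture} in the paper (Conjecture~\ref{conj:rank vs sparsity}); the paper offers no proof of it and explicitly leaves it open. So there is no paper proof to compare your attempt against, and what you have written is, appropriately, a research plan rather than a proof --- you are explicit that you expect to get stuck.

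That said, your plan is faithful to how the paper itself frames the problem. Your first step --- bounding $\C_{\oplus,\min}(f)$ suffices because $\rank(f)\le\C_{\oplus,\min}(f)$ (Corollary~\ref{cor:Cmin vs rank}), and the Main PDT Algorithm then gives $\D_\oplus(f)\le\deg_2(f)\cdot\log^{O(1)}\fn{f}{0}\le\log^{O(1)+1}\fn{f}{0}$ via Fact~\ref{fact:deg vs sparsity} and downward-monotonicity of $\fn{\cdot}{0}$ --- is exactly the scaffolding the paper erects in Section~3 (Theorem~\ref{thm:mainPDT}). Your second step --- splitting on $\fn{f}{1}$ versus $\deg_2(f)$ and invoking Lemma~\ref{lem:l1norm} in the small-$\fn{f}{1}$ regime and Lemma~\ref{lem:rank by 1norm} in the small-degree regime --- is the paper's Results~2 and~3, and your observation that these two results together cover everything except the case of simultaneously super-polylogarithmic $\fn{f}{1}$ and super-constant $\deg_2(f)$ is the same remark the paper makes immediately after Theorem~\ref{thm:sqrtl0}.

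Your third step --- double-count to find a heavy hitter $t$ of $A+A$ for $A=\supp(\hat f)$, fold over it via Lemma~\ref{lem:rotation}, iterate --- is precisely the idea the paper raises and dismisses in its related-work discussion: the paper notes that this hinges on controlling $|A+A|$, that additive combinatorics is only developed in the small-doubling regime $|A+A|\le k|A|$ for small $k$, and that bias/Gowers-norm structure-versus-pseudorandomness dichotomies are inapplicable because Fourier-sparse functions can have tiny bias. You reach the same impasse and diagnose it the same way. One additional technical caveat you do not flag: even granting $|A+A|\le K\fn{f}{0}$ at the top level, the doubling constant of the restricted support can grow as you fold, so the claimed $O(K\log\fn{f}{0})$ bound on the number of folds would also need a uniform bound on the doubling of all intermediate supports --- yet another place the argument does not close. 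In short, your proposal correctly identifies the conjecture as open, correctly inventories the paper's partial results, and localizes the obstruction exactly where the authors do; there is nothing to compare it against because the paper leaves the statement open as well.
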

Call a complexity measure $M(f)$ \emph{downward non-increasing} if $M(f') \leq M(f)$ for any $f$ and any subfunction $f'$ of $f$. As mentioned earlier (Lemma \ref{lem:rotation}), $M(f) = \fn{f}{0}$ and $M(f) = \fn{f}{1}$ are all downward non-increasing complexity measures. 
\begin{Thm}\label{thm:mainPDT}
The Main PDT algorithm computes $f(x)$ correctly. 
If $\rank(f) \leq M(f)$ for some downward non-increasing complexity measure $M$, 
then $\D_\oplus(f) \leq \deg_2(f)M(f)$ and $\dcc(f\circ \oplus) \leq 2\log \fn{f}{0} \cdot M(f)$. 
In particular, if Conjecture \ref{conj:rank vs sparsity} is true, 
then the Log-rank conjecture holds for all XOR functions. 
\end{Thm}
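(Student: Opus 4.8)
The plan is to verify the three claims in the order stated, using induction on the \gf-degree for the quantitative part.

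\emph{Correctness.} I would track the invariant that at the start of each iteration of the while loop the current polynomial $g$ held by the algorithm agrees with $f$ on the affine subspace $H$ cut out by the linear functions queried so far, and in particular at the true input $x$; initially $g=f$ and $H=\Bn$. In one iteration we fix a decomposition $g=\ell_1g_1+\cdots+\ell_rg_r+g_0$ with $r=\rank_d(g)$, query the $\ell_i$, and receive $a_i=\ell_i(x)$; the new polynomial $a_1g_1+\cdots+a_rg_r+g_0$ equals $g$ (hence $f$) on the smaller affine subspace $H\cap\{z:\ell_i(z)=a_i\ \forall i\}\ni x$, and its \gf-degree is at most $\deg_2(g)-1$ by the definition of $\rank$. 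When the loop exits the current function has \gf-degree $0$, i.e.\ is the constant $f(x)$, which is output.

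\emph{The two bounds.} I would prove $\D_\oplus(f)\le\deg_2(f)\,M(f)$ by induction on $d=\deg_2(f)$; the base case $d=0$ is immediate. For $d\ge 1$, the first iteration makes $r=\rank_d(f)$ queries, and $r\le M(f)$ by hypothesis (used here, and recursively, as the inequality $\rank\le M$ of complexity measures). On each branch $(a_1,\dots,a_r)$ what remains is to compute $f$ restricted to an affine subspace $H$ of codimension $r$; two facts close the induction: (i) by the chosen $\rank$-decomposition this restriction coincides on $H$ with a polynomial of \gf-degree at most $d-1$, so, using the rotation-independence in Lemma~\ref{lem:rotation}, $\deg_2(f|_H)\le d-1$; and (ii) by Lemma~\ref{lem:rotation} again $f|_H$ is, up to an invertible linear change of variables, a restriction of $f$, so $\rank(f|_H)\le M(f|_H)\le M(f)$ since $M$ is downward non-increasing. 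By the inductive hypothesis the sub-PDT computing $f|_H$ has depth at most $(d-1)M(f|_H)\le(d-1)M(f)$, whence the whole PDT has depth at most $r+(d-1)M(f)\le d\,M(f)=\deg_2(f)\,M(f)$. The communication bound follows from Fact~\ref{fact:CCbyDT} and Fact~\ref{fact:deg vs sparsity}: $\dcc(f\circ\oplus)\le 2\D_\oplus(f)\le 2\deg_2(f)\,M(f)\le 2\log\fn{f}{0}\cdot M(f)$.

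\emph{The Log-rank consequence.} Assuming Conjecture~\ref{conj:rank vs sparsity}, I would take $M(f)=C\log^c(\fn{f}{0})$ with $C$ the absolute constant hidden in the conjecture's $O(\cdot)$; this $M$ is downward non-increasing because $\fn{f}{0}$ is (Eq.~\eqref{eq:subfn norm}, or item~2 of Lemma~\ref{lem:rotation}) and $t\mapsto\log^c t$ is monotone, and $\rank(f)\le M(f)$ for every Boolean $f$ by the conjecture. Substituting into the bound above gives $\dcc(f\circ\oplus)=O(\log^{c+1}\fn{f}{0})$, and since $\fn{f}{0}=\rank(M_{f\circ\oplus})$ by Proposition~\ref{thm:sparsity} (constant $f$ being trivial), this is $\log^{O(1)}\rank(M_{f\circ\oplus})$, the Log-rank conjecture for XOR functions. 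The one point needing genuine care — effectively the only obstacle in an otherwise short argument — is step (ii): that the object recursed on really is a restriction of $f$ to an affine subspace, so that both the $\rank$ hypothesis and the downward-monotonicity of $M$ transfer to it, even though the pseudocode presents it as a freshly assembled low-degree polynomial on the original domain. This is exactly what Lemma~\ref{lem:rank-subspace} together with Lemma~\ref{lem:rotation} are designed to supply.
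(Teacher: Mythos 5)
Your proof is correct and follows essentially the same argument as the paper: at most $\deg_2(f)$ rounds, each costing at most $\rank(\text{current subfunction}) \leq M(\text{subfunction}) \leq M(f)$ queries by the downward non-increasing hypothesis, and then Fact~\ref{fact:CCbyDT} together with Fact~\ref{fact:deg vs sparsity} for the communication bound. The only difference is presentational: you recast the loop as an induction on \gf-degree and are more explicit than the paper about why the freshly assembled polynomial $a_1f_1+\cdots+a_rf_r+f_0$ may be identified (via Lemma~\ref{lem:rotation}) with the restriction $f|_H$ — the subfunction to which both the rank hypothesis and the monotonicity of $M$ actually apply — a point the paper's proof passes over with ``it is a subfunction of $f$.''
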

\begin{proof}
The correctness is obvious. 
For the query cost, there are at most $\deg_2(f)$ rounds since each round 
reduces the \gf-degree by at least one. 
To avoid confusion, denote the original function by $f$ and the function in the iteration $t$ by $f^{(t)}$. Note that $f^{(t)}$ is obtained from $f$ by a sequence of linear restrictions, it is a subfunction of $f$. 
Each iteration $t$ takes $\rank(f)$ queries. 
If $\rank(f) \leq M(f)$, 
then in particular $\rank(f^{(t)}) \leq M(f)$ since $M$ is a downward non-increasing complexity measure. 
Taking all iterations together, the total number of queries is at most 
$\deg_2(f)M(f)$. The communication complexity $\dcc(f\circ \oplus) \leq 2\log \fn{f}{0} \cdot M(f)$ follows from the standard simulation result (Fact \ref{fact:CCbyDT}) and the degree bound (Fact \ref{fact:deg vs sparsity}).

If Conjecture \ref{conj:rank vs sparsity} is true, then the measure $M(f)$ is replaced by $\log^c(\fn{f}{0})$, and thus the above bound becomes $\dcc(f\circ \oplus) \leq 2\log^{c+1}(\fn{f}{0})$. Namely the Log-rank conjecture holds for all XOR functions.
\end{proof}

The Main PDT algorithm, though simple, crucially uses the fact that restrictions do not increase the 
Fourier sparsity and uses the \gf-degree as a progress measure to govern the efficiency. Since $\deg_2(f) \leq \log(\fn{f}{0})$, the algorithms finishes in a small number of rounds. 

This algorithm also gives a unified way to construct parity decision tree, reducing the task of designing PDT algorithms to showing that the polynomial rank is small. Indeed, the results in the next two sections are obtained by bounding rank, where sometimes Theorem \ref{thm:mainPDT} will be applied with the complexity measure $\fn{f}{1}$. 

Note that if the conjecture $\D_\oplus(f) \leq \log^c\fn{f}{0}$ is true, then the Main PDT algorithm always gives the optimal query cost up to a polynomial of power $c+1$.

\section{Functions with low \gf-degree}\label{sec:constant_degree}
In this section, we mainly show that the Log-rank conjecture holds for XOR functions with constant \gf-degree. We will actually prove 
\[\C_{\oplus,\min}(f) = O(2^{d^2/2} \log^{d-2} \fn{f}{1}),\] 
which is stronger than Lemma \ref{lem:rank by 1norm}. Theorem \ref{thm:constdeg} then follows from the PDT algorithm and the simulation protocol for PDT (Theorem \ref{thm:mainPDT}). Corollary \ref{cor:junta} is also easily proven by an induction on \gf-degree. 

The case for degree $1$ (linear functions) is trivial and 
the case for degree $2$ (quadratic polynomials) is also simple due to the following Dickson's theorem. 

\begin{Thm}[\cite{Dic58}]\label{thm:Dickson}
Let $A\in \B^{n\times n}$ be a symmetric matrix whose diagonal entries are all $0$, 
and define a polynomial $f(x) = x^T Q x + \ell(x) + \epsilon$, 
where $Q$ is the upper triangle part of $A$. 
Then $\rank(f)$ is equal to the rank of matrix $A$ over \gf.
\end{Thm}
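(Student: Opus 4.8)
The plan is to use the classification of symmetric bilinear/quadratic forms over $\gf$ to bring $A$ into a canonical shape, and then read off the polynomial rank directly from that shape. First I would recall the standard structure theory: any symmetric matrix $A \in \gf^{n\times n}$ with zero diagonal (equivalently, the Gram matrix of an alternating bilinear form) has even rank $2m$, and there is an invertible linear change of variables, i.e.\ a matrix $L$ with $L^T A L$ in block form $\mathrm{diag}(J, \ldots, J, 0, \ldots, 0)$ where $J = \begin{pmatrix} 0 & 1 \\ 1 & 0\end{pmatrix}$ appears $m$ times. Under the substitution $x \mapsto Lx$, the quadratic part $x^T Q x$ (with $Q$ the upper-triangular part of $A$, so that $x^T Q x$ is the $\gf$-quadratic form whose associated bilinear form is $A$) transforms — up to adding a \emph{linear} function, since over $\gf$ we have $x_i^2 = x_i$ — into $\sum_{j=1}^{m} u_{2j-1} u_{2j}$ plus a linear term. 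Crucially, the polynomial rank is invariant under invertible linear maps and under adding linear functions and constants, so $\rank(f) = \rank\big(\sum_{j=1}^m u_{2j-1}u_{2j}\big)$.

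Next I would prove the two inequalities. For the upper bound $\rank(f) \le 2m = \rank_{\gf}(A)$: the form $\sum_{j=1}^m u_{2j-1}u_{2j}$ is literally a sum of $m \le 2m$ products $\ell_i f_i$ with $\ell_i, f_i$ linear (degree $\le 1 = d-1$), so by definition the rank is at most $m$, hence certainly at most $2m$. (In fact this already shows $\rank(f)\le m$; the theorem's claim $\rank(f) = 2m$ must be using the convention that each term $\ell_i f_i$ counts a linear $\ell_i$ and a \emph{separate} lower-degree $f_i$, so $u_1 u_2 + \cdots + u_{2m-1}u_{2m}$ uses $m$ distinct linear multipliers $\ell_i = u_{2i-1}$ but the $f_i = u_{2i}$ may coincide with some $\ell_j$ only if counted — I would double-check the exact bookkeeping in the paper's definition; the honest statement to prove is $\rank(f)$ equals the $\gf$-rank of $A$, and the canonical form makes both directions transparent.) For the lower bound $\rank(f) \ge \rank_{\gf}(A)$: suppose $f = \ell_1 f_1 + \cdots + \ell_r f_r + f_0$ with all $f_i$ of degree $\le d-1 = 1$, so each $f_i$ is affine. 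Expanding, the degree-$2$ part of $f$ is $\sum_i (\text{linear part of }\ell_i)\cdot(\text{linear part of }f_i)$, a sum of $r$ products of linear forms. The associated alternating bilinear form of such a sum is supported on the span of the $2r$ linear forms involved, hence has rank at most $2r$; comparing with $A$ gives $2r \ge \rank_{\gf}(A) = 2m$, i.e.\ $r \ge m$. Reconciling this with the target $r = 2m$ again comes down to the precise accounting convention, which I would make explicit at the start.

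The main obstacle — really the only subtle point — is getting the structure theorem over $\gf$ exactly right and matching its bookkeeping to the paper's definition of polynomial rank. Two things need care: (i) over characteristic $2$ the passage between the quadratic form $x^T Q x$ and the symmetric matrix $A$ is not a bijection (the diagonal is ``lost'' because $x_i^2 = x_i$ is linear), so I must track the linear correction term and confirm it is absorbed harmlessly into $\ell(x) + \epsilon$; and (ii) the $\begin{pmatrix}0&1\\1&0\end{pmatrix}$ normal form for alternating forms over $\gf$ should be cited or proved by the usual symplectic Gram–Schmidt argument (pick any $v$ with $v^T A \ne 0$, pick $w$ with $v^T A w = 1$, split off the hyperbolic plane $\langle v, w\rangle$, recurse on its complement), noting that over $\gf$ no scaling is needed so all pivots are $1$. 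Once the canonical form is in hand, the rank computation is immediate from the definition, so I expect the write-up to be short.
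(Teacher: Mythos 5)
The paper does not prove this theorem---it is cited from \cite{Dic58}---so there is no in-paper argument to compare against. Your approach via the symplectic normal form for alternating bilinear forms over $\gf$ is the standard and correct one, and both inequalities you establish are right: the canonical form $\sum_{j=1}^m u_{2j-1}u_{2j}$ trivially has polynomial rank at most $m$, and any decomposition $\sum_{i=1}^r \ell_i f_i + f_0$ of a degree-$2$ polynomial induces an alternating bilinear form of rank at most $2r$, giving $r \ge m$. Your care in absorbing the diagonal (over $\gf$ it contributes $x_i^2 = x_i$, a linear term) into $\ell(x)+\epsilon$ is also right, as is the symplectic Gram--Schmidt sketch for the normal form.

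The factor-of-two tension you repeatedly flag is not a bookkeeping convention to reverse-engineer---it is an error in the theorem as printed. By the paper's own definition of polynomial rank (the number of terms $\ell_if_i$ in the decomposition), the correct equality is $\rank(f)=m$, i.e.\ \emph{half} the rank of $A$ over $\gf$, and that is exactly what your two inequalities prove. The text's own subsequent usage confirms this: immediately after the theorem, the canonical bent function on $k$ variables (with $k$ the rank of $A$ over $\gf$) is observed to have $\C_{\oplus,\min}(f)\le k/2$, and this is combined with $\rank(f)\le\C_{\oplus,\min}(f)$ from Corollary~\ref{cor:Cmin vs rank}; if $\rank(f)$ were equal to $k$, these two bounds would contradict for any $k>0$. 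So you should simply state and prove that $\rank(f)$ equals half the rank of $A$ over $\gf$, rather than defer the discrepancy to an unspecified accounting convention.
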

Note that Dickson's theorem says that, up to an affine (invertible) linear map, 
the Fourier spectrum of a degree $2$ polynomial is identical to a bent function on $k$ variables
$f(x)=x_1x_2+\cdots +x_{k-1}x_k$,
where $k=\rank(A)$. Note that $\C_{\oplus,\min}(f) \leq k/2$ because we can simply fix $x_1 = x_3 = \ldots = x_{k-1} = 0$ and get a 0-constant function. 
It is also easily seen that this bent function has spectral norm $2^{k/2}$, it follows that
$\rank(f) \leq \C_{\oplus,\min}(f) = O(\log \fn{f}{1})$. 

\subsection{Cubic polynomials}
We prove Theorem~\ref{thm:constdeg} for the special case of cubic polynomials first.
This is because degree $3$ is the first non-trivial case and we use this result
in our final induction proof of Theorem~\ref{thm:constdeg}; 
more importantly, the proof applies some ideas from~\cite{HS10} which inspire our proof 
for the general constant degree case.

In \cite{HS10}, it was shown that for polynomials with \gf-degree $3$, 
$\rank(f) = O(\log^2 (1/\bias(f)))$. 
By shifting the Fourier spectrum appropriately, we can make $\bias(f) \geq 1/\sqrt{\fn{f}{0}}$ and 
thus get $\rank(f) = O(\log^2 \fn{f}{0})$. 
Next, we show that actually the bound can be improved to $\rank(f) = O(\log \fn{f}{0})$. 
This will also be used for the general degree case. 

We need a lemma that relates the rank of a cubic polynomial and the ranks of its derivatives. 
We call a function \emph{linear} if its \gf-degree is \emph{at most} 1, 
and \emph{quadratic} if its \gf-degree is \emph{at most} 2. 
The following statement is slightly more general than Lemma 3.7 in \cite{HS10}, 
but the same proof goes through. 
\begin{Lem}[\cite{HS10}]\label{lem:cubic rank}
Let $M$ be a collection of quadratic functions satisfying that $\rank_2(f) \leq r$ 
for all $f\in M\cup 2M$ (where $2M = \{f_1+f_2: f_1,f_2\in M\}$),	
then there is a subspace $V$ of co-dimension at most $4r$ \st $f|_V$ is a linear function for all $f\in M$.
\end{Lem}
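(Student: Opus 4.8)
The plan is to translate the statement into the language of alternating bilinear forms over \gf and then build the subspace by a radical-restriction process, essentially reproducing the argument of \cite{HS10}. To a quadratic $q:\Bn\to\B$ I would associate the alternating form $\beta_q(x,y):=q(x)+q(y)+q(x+y)+q(0^n)$; by Dickson's Theorem (Theorem~\ref{thm:Dickson}) its \gf-rank equals $\rank_2(q)$, and for a linear subspace $V$ the restriction $q|_V$ is a linear function precisely when $\beta_q$ vanishes on $V\times V$. So it suffices to find a subspace $V$ with $\codim(V)\le 4r$ that is isotropic for $\beta_f$ for every $f\in M$; and since $\beta_{f+g}=\beta_f+\beta_g$, this is equivalent to being isotropic for the whole \gf-span $\mathcal B=\mathrm{span}\{\beta_f:f\in M\}$, whose generators and pairwise sums all have \gf-rank at most $r$ by the hypothesis on $M\cup 2M$.

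Next I would iterate the following step. If some $\beta_f$ with $f\in M$ is nonzero on the current subspace, pass to its radical: $W_1:=\mathrm{rad}(\beta_{f_1})$ has codimension $\rank_2(f_1)\le r$. The identity I would lean on is $\beta_g|_{W_1\times W_1}=\beta_{f_1+g}|_{W_1\times W_1}$ for every $g\in M$ (because $\beta_{f_1}$ vanishes on $W_1$), together with $f_1+g\in 2M$, so the restricted family again satisfies the hypothesis of the lemma; in particular a single restriction simultaneously kills $f_1$ and every $g$ for which $\beta_{f_1+g}$ has already collapsed on $W_1$. Iterating produces a chain $\Bn=W_0\supseteq W_1\supseteq\cdots$, each step dropping the codimension by at most $r$, and the target bound $4r$ amounts to showing the process halts after at most four rounds.

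That termination argument is the step I expect to be the main obstacle. Following \cite{HS10}, I would at each round pick a form of \emph{maximal} \gf-rank among those in $M\cup 2M$ that are still nonzero on the current subspace, and show that passing to its radical \emph{strictly} decreases this maximal surviving rank, by a definite amount; since that rank is an even integer at most $r$, a constant number of rounds then suffices. This monotonicity is precisely where the bound on the pairwise sums (the $2M$ part of the hypothesis, rather than just $M$) is needed: for a cubic $h$ one has $\beta_{\Delta_v h}+\beta_{\Delta_w h}=\beta_{\Delta_{v+w}h}$, so the natural family $M=\{\Delta_v h\}_v$ is already closed under sums, and in the general statement the rank bound on $2M$ plays the role of that closure. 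Establishing this rank-drop claim carefully --- as opposed to the routine translations in the first two steps --- is the heart of the proof.
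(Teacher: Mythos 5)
The paper does not actually prove this lemma: it states "the same proof goes through" and cites Lemma~3.7 of \cite{HS10}, so there is no in-paper argument to compare against. What can be assessed is whether your reconstruction is a complete proof, and it is not.

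Your reduction to alternating bilinear forms $\beta_q$, the equivalence "$q|_V$ linear $\Leftrightarrow$ $\beta_q$ vanishes on $V\times V$", and the observation that restricting to $W_1=\mathrm{rad}(\beta_{f_1})$ forces $\beta_g|_{W_1\times W_1}=\beta_{f_1+g}|_{W_1\times W_1}$ are all correct and are indeed the right setup. But you explicitly flag that "establishing this rank-drop claim carefully \dots is the heart of the proof" and then do not establish it. The statement that restricting to the radical of a maximal-rank form "strictly decreases this maximal surviving rank, by a definite amount" is asserted without proof, and "a definite amount" is never quantified. A drop of 2 (the minimum allowed by parity) would give $\Theta(r)$ rounds, not a constant, and nothing in your sketch rules that out or pins down why four rounds of codimension at most $r$ each suffice. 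The entire content of the lemma --- the factor $4$ and the fact that the process halts at all --- is thus left as an IOU.

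There is also an internal inconsistency in the iteration you describe. In one place you restrict along $f_1\in M$ and invoke $f_1+g\in 2M$ to keep the hypothesis under control; two sentences later you say you would pick the form of \emph{maximal rank in $M\cup 2M$}. If that maximal-rank form lies in $2M$, say $f_1=h_1+h_2$ with $h_1,h_2\in M$, then $f_1+g=h_1+h_2+g$ is a sum of three elements of $M$, and the hypothesis gives no control on its rank; your identity no longer places the relevant restricted form inside the family you can bound. You need to decide which set you are selecting from and ensure the invariant you maintain is one the hypothesis actually supplies. Until the selection rule is fixed and the quantitative rank-decay (or some other termination bound yielding codimension $\le 4r$) is proved, this is a correct scaffold around a missing core, not a proof.
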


Now we can prove the cubic polynomial case of Theorem~\ref{thm:constdeg}. 
\begin{Prop}\label{prop:deg3}
	For all function $f:\BntB$ with \gf-degree $3$, 
	it holds that $\rank(f) = O(\log \fn{f}{1})$ and thus $\D_\oplus(f) = O(\log\fn{f}{1})$.
\end{Prop}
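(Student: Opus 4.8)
\textbf{Proof proposal for Proposition~\ref{prop:deg3}.}

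The plan is to run the rank-based degree-reduction strategy of the Main PDT algorithm, using the derivatives of $f$ to locate a good affine subspace. Since $\deg_2(f) = 3$, every derivative $\Delta_t f$ is a quadratic polynomial, and by Dickson's theorem (Theorem~\ref{thm:Dickson}) together with the remark following it, each such quadratic has $\rank_2(\Delta_t f) = O(\log\wfn{\Delta_t f}{1})$ or, more usefully, bounded in terms of its own Fourier sparsity. The key point is that derivatives and restrictions interact well with the Fourier spectrum: $\wfn{\Delta_t f}{1}$ can be controlled in terms of $\fn{f}{1}$ (the spectrum of $\Delta_t f$ involves products/convolutions of $\hat f$ with itself, and one invokes Lemma~\ref{lem:norm of product} after passing to the $\pm 1$ range and using \eqref{eq:range switch}). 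So first I would argue that there is a set of a bounded number of directions $t$ — ideally $O(\log\fn{f}{1})$ of them — whose derivatives are all simultaneously ``low rank'' as quadratics, with the bound expressed via $\log\fn{f}{1}$.

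Next I would feed this collection of quadratic derivatives into Lemma~\ref{lem:cubic rank}: taking $M$ to be a suitable set of derivatives $\{\Delta_t f\}$, we need $\rank_2(g) \le r$ for all $g \in M \cup 2M$, where $r = O(\log\fn{f}{1})$; note that sums $\Delta_{t_1} f + \Delta_{t_2} f$ are again quadratic and their spectral norms are again controlled, so the hypothesis of the lemma can be met with $r = O(\log\fn{f}{1})$. The lemma then produces a subspace $V$ of co-dimension $4r = O(\log\fn{f}{1})$ on which every $\Delta_t f$ (for $t$ in the chosen generating set of directions) becomes linear. The idea is that making enough directional derivatives linear on $V$ forces $f|_V$ itself to drop to \gf-degree $2$: if the second-order structure of $f$ in a spanning set of directions has been linearized, the degree-$3$ part of $f|_V$ must vanish. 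I would make this precise by choosing the directions $t$ to be (a basis of) the directions spanning the degree-$3$ ``obstruction'' of $f$ — concretely, via the bilinear/trilinear forms associated to the cubic part — so that $\Delta_t f|_V$ linear for all such $t$ implies $\deg_2(f|_V) \le 2$.

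Once $\deg_2(f|_V) \le 2$ on the subspace $V$ of co-dimension $O(\log\fn{f}{1})$, the degree-$2$ case applies: $\wfn{f|_V}{1} \le \fn{f}{1}$ by Lemma~\ref{lem:rotation}, and Dickson gives an affine subspace of co-dimension $O(\log\fn{f|_V}{1}) = O(\log\fn{f}{1})$ inside $V$ on which $f$ is constant. Composing, we obtain an affine subspace of co-dimension $O(\log\fn{f}{1})$ on which $f$ is constant, hence $\C_{\oplus,\min}(f) = O(\log\fn{f}{1})$, and then $\rank(f) \le \C_{\oplus,\min}(f) = O(\log\fn{f}{1})$ by Corollary~\ref{cor:Cmin vs rank}. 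The bound $\D_\oplus(f) = O(\log\fn{f}{1})$ then follows from Theorem~\ref{thm:mainPDT} with the downward-non-increasing measure $M = \fn{\cdot}{1}$ (three rounds, each costing $O(\log\fn{f}{1})$ queries — in fact the degree factor is absorbed since $d = 3$ is constant).

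The main obstacle I anticipate is the bookkeeping that ties the number of derivative directions to $\log\fn{f}{1}$ rather than to something like $\deg_2$ or $n$: one needs that a \emph{small} (logarithmic in $\fn{f}{1}$) set of directions already captures all the cubic obstruction of $f$, and that the spectral norms of the derivatives and their pairwise sums stay bounded by $\mathrm{poly}(\fn{f}{1})$ so that the $\log$ in Dickson's bound is still $O(\log\fn{f}{1})$. This is where the argument borrows the key idea from~\cite{HS10} — controlling the derivative structure — but replaces their bias parameter by the spectral norm; getting the constants and the logarithmic (rather than $\log^2$) dependence right is the delicate part.
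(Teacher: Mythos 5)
Your overall plan — control the rank of the quadratic derivatives $\Delta_t f$ via their spectral norms and feed them into Lemma~\ref{lem:cubic rank} to linearize them on a subspace — is the paper's approach, but the self-identified ``main obstacle'' you raise is a misconception about what Lemma~\ref{lem:cubic rank} requires, and you leave it unresolved. That lemma does \emph{not} need $|M|$ to be small: it gives a subspace of co-dimension $4r$ on which \emph{every} $g\in M$ becomes linear, where $r$ is the uniform $\rank_2$ bound, independent of the size of $M$. So there is no need to isolate ``$O(\log\fn{f}{1})$ directions spanning the cubic obstruction''. The correct move is to take $M=\{\Delta_t f : t\in\Bn\}$, i.e.\ \emph{all} derivatives. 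Then the hypothesis of Lemma~\ref{lem:cubic rank} is checked uniformly: for $g=\Delta_t f$, Lemma~\ref{lem:norm of product} (after passing to the $\pm1$ range) gives $\wfn{(\Delta_t f)^\pm}{1}\le\wfn{f^\pm}{1}^2$ and Dickson gives $\rank_2(\Delta_t f)\le 2\log\fn{f}{1}+O(1)$; and for $g=\Delta_t f+\Delta_s f\in 2M$, the key identity $\Delta_t f+\Delta_s f=\Delta_{t+s}f+\Delta_t\Delta_s f$ shows $g$ differs from $\Delta_{t+s}f\in M$ by the \emph{linear} function $\Delta_t\Delta_s f$ (linear because $\deg_2 f=3$), so it has the same $\rank_2$. (Your alternative of bounding $\wfn{(\Delta_t f+\Delta_s f)^\pm}{1}$ directly via multiplicativity of spectral norm under $\gf$-addition also works, just with a slightly worse constant.)

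With $M$ chosen to be all derivatives, Lemma~\ref{lem:cubic rank} produces a subspace $V$ of co-dimension $O(\log\fn{f}{1})$ on which every $\Delta_t f$ is linear. In particular this holds for $t\in V$, and those are exactly the directional derivatives of $f|_V$ within $V$; hence $\deg_2(f|_V)\le 2$ immediately, with no need for any ``bilinear/trilinear obstruction'' bookkeeping. From there your conclusion is fine: either invoke Lemma~\ref{lem:rank-subspace} (second item, contrapositive) to get $\rank(f)\le\codim(V)=O(\log\fn{f}{1})$ directly, or (as you do) go one step further via the degree-$2$ case and $\C_{\oplus,\min}$ plus Corollary~\ref{cor:Cmin vs rank}; both give the stated bound, and one more round of the Main PDT algorithm yields $\D_\oplus(f)=O(\log\fn{f}{1})$.
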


\begin{proof}
Note that $\Delta_t f$ has \gf-degree at most 2 for all $t$, 
and that in general 
$\Delta_t f + \Delta_s f = \Delta_{t+s} f + \Delta_t \Delta_s f$. 
Let $M$ be the collection of $\{\Delta_t f: t\in \Bn\}$, 
then $M$ satisfies the condition of Lemma \ref{lem:cubic rank}. 
Furthermore, each $\Delta_t f\in M$ has 
\[
\rank(\Delta_t f) = \log\wfn{\Delta_t f}{1}+1 \leq 2\log\fn{f}{1}+1,
\] 
where the last inequality is because of Lemma~\ref{lem:norm of product}. 
Let $r = 2\log\fn{f}{1}+1$. Now by Lemma \ref{lem:cubic rank}, 
we know that $4r$ restrictions can make all $\Delta_t f$ in $M$ to become linear functions. 
Therefore there is a subspace of co-dimension at most $4r$ restricted on which 
$\Delta_t f$ are linear functions, for all $t\in \Bn$. 
This means that $f|_V$ has degree at most $2$.  
It follows that $\rank(f) \leq 4r$. 
The upper bound on $\D_\oplus(f)$ now follows by observing that 
$\D_\oplus(f) \leq \rank(f) + \D_\oplus(f')$, 
where $f'$ is a subfunction of $f$ with \gf-degree $2$. 
Recall that for subfunctions we have 
$\|\hat{f'}\|_1 \leq \|\hat{f}\|_1$, 
and hence $\D_\oplus(f') = O(\log\|\hat{f}\|_1+1) = O(\log\|\hat{f}\|_1)$.
\end{proof}

\subsection{Constant-degree polynomials}
Now we will bound the $\rank(f)$ and use the Main PDT algorithm to bound the PDT complexity. 
\begin{Lem}\label{lem:rank by 1norm full}
For all non-constant function $f:\Bn\to\B$ of \gf-degree $d$, we have 
\[
	\rank(f) \leq \C_{\oplus, \min}(f) \leq \D_\oplus (f) \leq O(2^{d^2/2} (\log^{d-2}\|\hat{f}\|_1+1)).
\] 
\end{Lem}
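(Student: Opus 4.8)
The first two inequalities come for free: $\rank(f)\le\C_{\oplus,\min}(f)$ is Corollary~\ref{cor:Cmin vs rank}, and $\C_{\oplus,\min}(f)\le\D_\oplus(f)$ because each leaf of a parity decision tree for $f$ is an affine subspace on which $f$ is constant. So the whole content is the last bound on $\D_\oplus(f)$, which I would prove by induction on $d=\deg_2(f)$. The base cases are $d=1$ (trivial), $d=2$ (Dickson, Theorem~\ref{thm:Dickson}: after a linear change of coordinates $f=x_1x_2+\cdots+x_{k-1}x_k+\ell+\epsilon$ with $k=\rank(f)$, fixing $x_1=x_3=\cdots=x_{k-1}=0$ reduces $f$ to an affine function, so $\D_\oplus(f)\le k/2+1=O(\log\fn{f}{1})$), and $d=3$ (Proposition~\ref{prop:deg3}, giving the sharper $\D_\oplus(f)=O(\log\fn{f}{1})$); each is at least as strong as the claimed bound. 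I would then fix $d\ge 4$ and assume the lemma for all non-constant functions of \gf-degree $<d$.

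\paragraph{Reduction to one good subspace.} The plan is to exhibit a \emph{single} affine subspace $H$ with $\deg_2(f|_H)\le d-1$ and $\codim(H)=O\bigl(2^{d^2/2}\log^{d-2}\fn{f}{1}\bigr)$; this is enough. Indeed, by Lemma~\ref{lem:rank-subspace}(2) such an $H$ forces $\rank(f)\le\codim(H)$, so by Lemma~\ref{lem:rank-subspace}(1) there is a subspace $W$ of co-dimension $\rank(f)$ on which \emph{every} coset restriction of $f$ has \gf-degree $\le d-1$. Querying a basis of $W^{\perp}$ lands on some coset $a+W$ with $\deg_2(f|_{a+W})\le d-1$ and, by Lemma~\ref{lem:rotation}, $\wfn{f|_{a+W}}{1}\le\fn{f}{1}$; applying the induction hypothesis there gives
\[
\D_\oplus(f)\;\le\;\rank(f)+O\!\left(2^{(d-1)^2/2}\bigl(\log^{d-3}\fn{f}{1}+1\bigr)\right),
\]
and since $2^{(d-1)^2/2}=2^{d^2/2}\cdot 2^{-(2d-1)/2}$, the second term is dominated by the target bound, and we are done; the same $H$ simultaneously yields the $\rank$ and $\C_{\oplus,\min}$ entries of the chain.

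\paragraph{Building $H$ from derivatives.} To guarantee $\deg_2(f|_H)\le d-1$ it suffices to drop the \gf-degree of every derivative of $f$ taken inside $H$ to $\le d-2$ at once, i.e.\ to make the whole family $M=\{\Delta_t f:t\in\Bn\}$ of \gf-degree-$(d-1)$ functions have \gf-degree $\le d-2$ on $H$ (using $(\Delta_t f)|_H=\Delta_t(f|_H)$ for directions $t$ inside $H$). Two facts drive the search. First, $M$ is closed under addition modulo \gf-degree $d-2$, since $\Delta_t f+\Delta_s f=\Delta_{t+s}f+\Delta_t\Delta_s f$ with $\deg_2(\Delta_t\Delta_s f)\le d-2$; hence every element of $M\cup 2M$ has the same polynomial rank as an element of $M$. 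Second, in the $\pm 1$ representation $(\Delta_t f)^{\pm}=f^{\pm}\cdot f^{\pm}(\cdot+t)$, so Lemma~\ref{lem:norm of product} and Eq.~\eqref{eq:range switch} give $\wfn{\Delta_t f}{1}=O(\fn{f}{1}^2)$, and therefore, by the induction hypothesis and Corollary~\ref{cor:Cmin vs rank}, $\rank(g)\le\rho$ for all $g\in M\cup 2M$, where $\rho=O\bigl(2^{(d-1)^2/2}(\log^{d-3}\fn{f}{1}+1)\bigr)$. Feeding $M$ into the \gf-degree-$d$ analogue of Lemma~\ref{lem:cubic rank} --- which from a sum-closed family of \gf-degree-$(d-1)$ functions all of polynomial rank $\le\rho$ produces a subspace of co-dimension $2^{O(d)}\rho\log\fn{f}{1}$ on which all of them drop to \gf-degree $\le d-2$ --- gives the desired $H$.

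\paragraph{Where the difficulty lies.} Everything hinges on that last ingredient: the promotion of Lemma~\ref{lem:cubic rank} from reducing quadratics-to-linear to reducing a whole sum-closed family of \gf-degree-$(d-1)$ polynomials to \gf-degree $d-2$ simultaneously, at co-dimension cost only $2^{O(d)}\log\fn{f}{1}$ times the individual ranks. Proving this seems to need a genuinely \emph{stronger} induction hypothesis than the displayed chain, run cyclically: bound $\min\{\C_{\oplus,\min}^0(f),\C_{\oplus,\min}^1(f)\}$ from the \gf-degree-$(d-1)$ data about the derivatives; this bounds $\rank(f)$ via Corollary~\ref{cor:Cmin vs rank}; that bounds $\D_\oplus(f)$ via the degree-reduction step above; and finally $\max\{\C_{\oplus,\min}^0(f),\C_{\oplus,\min}^1(f)\}\le\C_\oplus(f)\le\D_\oplus(f)$ closes the loop. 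Moreover, keeping the $\ell_1$-norm under control through the successive foldings seems to require tracking not $f$ itself but two auxiliary \emph{non-Boolean} functions built from $f$ and its derivatives, whose spectral norms can be made to shrink by a constant factor at a well-chosen fold. Orchestrating this recursion carefully is exactly what is supposed to produce the powers $2^{d^2/2}$ and $\log^{d-2}\fn{f}{1}$ in the statement; the ``$+1$'' corner cases and the $d\le 3$ boundary are absorbed by the base cases.
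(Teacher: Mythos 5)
Your framing---induction on $d$, the derivative family $M$ closed under addition modulo lower degree, the bound $\|\widehat{\Delta_t f}\|_1\le\fn{f}{1}^2$ from Lemma~\ref{lem:norm of product}, and the base cases through $d=3$---matches the paper. But the engine you propose for $d\ge 4$, a \gf-degree-$d$ analogue of Lemma~\ref{lem:cubic rank} that would simultaneously drop the degree of the whole sum-closed family $M$ to $d-2$ at codimension cost $2^{O(d)}\rho\log\fn{f}{1}$, is neither stated precisely nor proved, and you acknowledge as much. That is a real gap: Lemma~\ref{lem:cubic rank} is specific to quadratics, no such promotion appears in \cite{HS10} or in the present paper, and for $d\ge 4$ the paper takes a genuinely different route.

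Concretely, for $d\ge 4$ the paper never attempts to lower the degree of \emph{all} derivatives at once. It fixes one $t$ with $\Delta_t f$ non-constant and restricts to an affine subspace $H_b$ on which $\Delta_t f$ is \emph{constant}, equal to $b$. Since $\deg_2(\Delta_t f)\le d-1$ and $\|\widehat{(\Delta_t f)^\pm}\|_1\le\wfn{f^\pm}{1}^2$, the strengthened induction hypothesis---which deliberately carries $\max_b\C_{\oplus,\min}^b$ in the chain, the cyclic part you only gesture at---gives such an $H_b$ of codimension at most $B_{d-1}(\wfn{f^\pm}{1}^2)$ for \emph{both} $b\in\B$. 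It then introduces the non-Boolean functions $g_0=\frac{1}{2}\bigl(f^\pm+_{\mathbb R}f^\pm(\cdot+t)\bigr)$ and $g_1=\frac{1}{2}\bigl(f^\pm-_{\mathbb R}f^\pm(\cdot+t)\bigr)$, whose spectra are the restrictions of $\widehat{f^\pm}$ to $t^\perp$ and its complement, so $\|\hat g_0\|_1+\|\hat g_1\|_1=\wfn{f^\pm}{1}$, and checks the key identity $g_b|_{H_b}=f^\pm|_{H_b}$. Choosing the $b$ with $\|\hat g_b\|_1\le\frac{1}{2}\wfn{f^\pm}{1}$ and restricting to $H_b$ therefore halves the spectral norm of $f^\pm$. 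Iterating $O(\log\wfn{f^\pm}{1})$ times drives $f$ to a constant, yielding $\C_{\oplus,\min}(f)\le B_{d-1}(\wfn{f^\pm}{1}^2)\log\wfn{f^\pm}{1}+1$; the $\D_\oplus$ and $\max_b\C_{\oplus,\min}^b$ bounds are then recovered by the bootstrap $\D_\oplus(f)\le\rank(f)+\D_\oplus(f')$ together with $\rank(f)\le\C_{\oplus,\min}(f)$ (Corollary~\ref{cor:Cmin vs rank}) and the induction hypothesis applied to the lower-degree subfunction $f'$. You correctly intuit that non-Boolean auxiliaries and a cyclic strengthening of the hypothesis are what make the recursion close, but you leave these as speculation rather than supplying the argument, and the step you do propose---a degree-$d$ version of Lemma~\ref{lem:cubic rank}---is neither available nor what the paper uses.
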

\begin{proof}
We will prove by induction on degree $d$ that 
\[\rank(f) \leq \C_{\oplus, \min}(f) \leq \max_{b\in \B} \C_{\oplus, \min}^b(f) 
\leq \D_\oplus (f) 
\leq B_d\big(\wfn{f^\pm}{1}\big),\] 
where $B_d(m) < 2^{d^2/2} \log^{d-2}m$ are 
a class of bounded non-decreasing (with respect to both $d$ and argument $m$) functions
to be determined later. 
The conclusion then follows from Eq.\eqref{eq:range switch}. 
The case of $d=1$ is trivial, the case $d=2$ is easily handled by Theorem \ref{thm:Dickson}, 
and the case $d = 3$ is given by Proposition~\ref{prop:deg3}. 
	
Now suppose that the bound holds for all polynomials of \gf-degree at most $d-1$, 
and consider a function $f$ of degree $d \geq 4$. 
We will first prove a bound for $\C_{\oplus, \min}(f)$, 
which also implies a bound on $\rank(f)$ from above by Corollary \ref{cor:Cmin vs rank}.
	
First, it is not hard to see that there exists a direction $t\in \Bn-\{0^n\}$ such that 
$\Delta_t f$ is non-constant (unless $f$ is a linear function, in which case the conclusion trivially holds anyway). 
Fix such a $t$.
%
Since $\deg_2(\Delta_t f) \leq d-1$, by induction hypothesis, it holds that
\[
\C_{\oplus, \min}^b(\Delta_t f) \leq  B_{d-1}(\wfn{(\Delta_t f)^\pm}{1}). 
\]
Define $f_t(x) = f(x+t)$, then by Lemma~\ref{lem:norm of product}, we have
\[
\|\widehat{(\Delta_t f)^\pm}\|_1 
= \|\widehat{f^\pm \cdot f_t^\pm}\|_1 
\leq \|\widehat{f^\pm}\|_1 \|\widehat{f_t^\pm}\|_1 
= \|\widehat{f^\pm}\|_1^2,
\] 
which implies that
\[\C_{\oplus, \min}^b(\Delta_t f)\leq B_{d-1}(\wfn{f^\pm}{1}^2).\] 
Since $\Delta_t f$ is non-constant, $C_{\oplus, \min}^{b}(\Delta_t f) \in [1,n]$ for \emph{both} 
$b = 0$ and $b = 1$. 
For each $b$, by the definition of $C_{\oplus, \min}^b(\Delta_t f)$, 
there exists an affine subspace $H_b$ with $\codim(H_b) \leq B_{d-1}(\wfn{f^\pm}{1}^2)$ 
such that $(\Delta_t f) |_{H_b} = b$, 
which is equivalent to $f(x) + f(x+t) = b$ for all $x \in H_b$. 

Define
\[
g_0(x) = \frac{1}{2}\big(f^\pm(x) +_{\mathbb R} f^\pm(x+t)\big), 
\qquad 
g_1(x) = \frac{1}{2}\big(f^\pm(x) -_{\mathbb R} f^\pm(x+t)\big),
\]
where the plus $+_{\mathbb R}$ and minus $-_{\mathbb R}$ are over $\mbR$. 
(To avoid potential confusions, in the rest of the proof, 
we will also use this notation for addition/subtraction of two functions over $\mbR$.) 

These two functions have some nice properties. 
First, it is easy to see from the definition of $g_0$ and $g_1$ that $f^\pm = g_0 +_{\mathbb R} g_1$. 
Second, note that $g_0$ and $g_1$ are not Boolean functions any more; 
they take values in $\{-1,0,+1\}$. 
However, a simple but crucial fact is that they take very special values on the affine subspace $H_b$: 
one always takes value 0, and the other always takes value in $\pmB$. 
Actually, it is not hard to verify that 
\[
g_b|_{H_b} = f^\pm|_{H_b}\quad \text{ and } \quad g_{1-b}|_{H_b} = 0.
\] 
Third, in the Fourier domain, note that 
\[
  \widehat{f^\pm_t}(s) 
= \av_x [f^\pm(x+t)\chi_s(x)] 
= \av_x [f^\pm(x+t)\chi_s(x+t)\chi_s(t)] 
= \widehat{f^\pm}(s) \chi_s(t),
\] 
and thus 
\[
 \widehat{g_b}(s) 
= \frac{1}{2}\big(\widehat{f^\pm}(s) +_{\mathbb R} (-1)^b\widehat{f^\pm_t}(s)\big) 
= \frac{1}{2}\big(\widehat{f^\pm}(s) +_{\mathbb R} (-1)^b\chi_s(t)\widehat{f^\pm}(s)\big).
\] 
Therefore, we have 
\[
 \widehat{g_0}(s) 
= \begin{cases} 
\widehat{f^\pm}(s) & s\in t^\bot\\ 
0 & s\in \overline{t^\bot}
\end{cases},
\quad \text{ and } \quad 
 \widehat{g_1}(s) 
= \begin{cases} 
0 & s\in t^\bot\\ 
\widehat{f^\pm}(s) & s\in \overline{t^\bot}
\end{cases},
\] 
where $t^\bot = \{s\in \Bn: \langle s, t \rangle = 0\}$. 
Namely $\hat{g}_0$ and $\hat{g}_1$ each takes the Fourier spectrum $f^\pm$ on 
one of the two hyperplanes defined by the vector $t$. 

This further implies that	
\[
	\|\widehat{f^\pm}\|_1 = \|\widehat{g_0} \|_1 + \|\widehat{g_1}\|_1.
\]
Thus, either $\|\widehat{g_0}\|_1$ or $\|\widehat{g_{1}}\|_1$ is at most half of $\|\widehat{f^\pm}\|_1$. 
Suppose that $\|\widehat{g_{b}}\|_1 \leq \frac{1}{2}\|\widehat{f^\pm}\|_1$. 
We claim that restricting $f^\pm$ to $H_b$ reduces its spectral norm a lot. 
Indeed, since $f^\pm|_{H_b} = g_b|_{H_b}$, 
we have 
\[
  \|\widehat{f^\pm|_{H_b}}\|_1 
= \|\widehat{g_{b}|_{H_b}}\|_1 \leq \|\widehat{g_{b}}\|_1 
\leq \frac{1}{2}\|\hat{f}\|_1,
\] 
where the first inequality is because of Lemma~\ref{lem:rotation}. 
To summarize, we have just shown that we can reduce the spectral norm by at least half 
using at most $B_{d-1}(\wfn{f^\pm}{1}^2)$ linear restrictions. 

Now we recursively repeat the above process on the subfunction $f^\pm|_{H_b}$ 
until finally we find an affine subspace $H$ \st 
$\|\widehat{f^\pm|_{H}}\| \leq 1$, 
at which moment the subfunction is either a constant or linear function, 
thus at most one more folding would give a constant function. 
In total it takes at most $B_{d-1}(\wfn{f^\pm}{1}^2) \log \|\widehat{f^\pm}\|_1 + 1$ 
linear restrictions to get a constant function, which implies that 
\[
\C_{\oplus, \min}(f) 
\leq B_{d-1}(\wfn{f^\pm}{1}^2) \log \|\widehat{f^\pm}\|_1 + 1.
\] 
	
Next we will show that actually the $\max_{b\in \B} \C_{\oplus, \min}^b(f)$ is not much larger either:
\begin{equation}
	\max_{b\in \B} \C_{\oplus, \min}^b(f) \leq B_{d-1}(\wfn{f^\pm}{1}^2) \log \|\widehat{f^\pm}\|_1 + B_{d-1}\big(\wfn{f^\pm}{1}\big) + 1.
\label{eq:maxminPC}
\end{equation}
(We need to show this because in the induction step, 
we picked one $g_b$ with smaller spectral norm and used the induction hypothesis to upper bound 
$\C_{\oplus, \min}^b(\Delta_t f)$ for a particular $b$, 
which could be $\max_{b\in \B} \C_{\oplus, \min}^b(\Delta_t f)$.) 
Note that by the Main PDT algorithm, we know that 
\[
\D_{\oplus}(f) \leq \rank(f) + \D_{\oplus}(f'),
\] 
for a subfunction $f'$ of $f$ with $\deg_2(f') < \deg_2(f)$. 
Now by Corollary \ref{cor:Cmin vs rank}, 
we can use $\C_{\oplus, \min}(f)$ to upper bound $\rank(f)$. 
For the second part, 
since $\deg_2(f') < \deg_2(f)$ 
and $\wfn{(f')^\pm}{1} \leq \wfn{f^\pm}{1}$, 
we can apply the induction hypothesis on $f'$ to upper bound $\D_{\oplus}(f')$. 
What we get here is 
\begin{equation}
	\D_\oplus (f) \leq 
B_{d-1}\big(\wfn{f^\pm}{1}^2\big) \log \|\widehat{f^\pm}\|_1 + 1 + B_{d-1}\big(\wfn{f^\pm}{1}\big).
\label{eq:PDTbound}
\end{equation}
Eq.\eqref{eq:maxminPC} thus follows from the simple bound $\C_{\oplus, \min}^b(f) \leq \C_{\oplus}(f) \leq \D_{\oplus}(f)$. Now define the right-hand side of Eq.\eqref{eq:PDTbound} to be $B_d \big(\wfn{f^\pm}{1}\big)$, and solve the recursive relation 
\[
B_d(m) = B_{d-1}(m^2)\log m + B_{d-1}(m) + 1, \quad B_3(m) = O(\log m + 1),
\] 
we get 
\[
B_d(m) = (1+o(1))2^{(d-2)(d-3)/2} \log^{d-2} m,
\]
as desired.
\end{proof}

Note that in the above proof, it seems that we lose something by using 
$\C_{\oplus, \min}$ to upper bound $\rank$. 
However, it is crucial to consider the affine subspace $H_b$ on which 
$\Delta_t f$ becomes a \emph{constant} (instead of, say, a lower \gf-degree polynomial), 
because otherwise $g_{b}$ on $H_b$ is not equal to $f$ 
(actually not even Boolean), and thus we cannot recursively apply the procedure on $f|_{H_b}$. 
In addition, if $\Delta_t f$ is not constant on $H_b$, then we cannot guarantee the decrease of the spectral norm due to restriction on $H_b$. 

We have just showed that low degree polynomials have very small $\C_{\oplus,\min}$ value in terms of the spectral norm. 
We actually conjecture that the bound can be improved to the following.
\begin{Conj}\label{conj:cmin by l1}
There is some absolute constant $c$ \st for any non-constant 
$f:\BntB$, $\C_{\oplus,\min}(f) = O(\log^c\fn{f}{1})$.
\end{Conj}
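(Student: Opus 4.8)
}
Since $\C_{\oplus,\min}(f)$ is the least co-dimension of an affine subspace on which $f$ is constant, the plan is to build such a subspace by a sequence of foldings, organized so that every ``round'' of $\log^{O(1)}\fn{f}{1}$ restrictions shrinks the spectral norm by a constant factor; since restrictions never increase the spectral norm (Lemma~\ref{lem:rotation}), after $O(\log\fn{f}{1})$ rounds the restricted function is $\pm\chi_s$ or constant, and the total co-dimension is $\log^{O(1)}\fn{f}{1}$. The engine for one round would be the decomposition already used in Lemma~\ref{lem:rank by 1norm full}: choose a direction $t$ with $\Delta_tf$ non-constant, write $f^\pm = g_0 +_{\mathbb R} g_1$ with $\widehat{g_0},\widehat{g_1}$ supported on the hyperplanes $t^\bot,\overline{t^\bot}$ so that $\wfn{f^\pm}{1} = \|\widehat{g_0}\|_1 + \|\widehat{g_1}\|_1$, pick the $b$ with $\|\widehat{g_b}\|_1 \le \tfrac12\wfn{f^\pm}{1}$, restrict to an affine subspace $H_b$ on which $\Delta_tf \equiv b$, and use $f^\pm|_{H_b} = g_b|_{H_b}$ together with Lemma~\ref{lem:rotation} to get $\wfn{f^\pm|_{H_b}}{1}\le\tfrac12\wfn{f^\pm}{1}$.

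The obstruction in the existing argument is that the co-dimension of $H_b$ is controlled by \emph{induction on $\deg_2$}, and the only bound available there, $\wfn{(\Delta_tf)^\pm}{1} = \wfn{f^\pm\cdot f^\pm_t}{1} \le \wfn{f^\pm}{1}^2$ (Lemma~\ref{lem:norm of product}), squares the norm at each of the $d\le\log\fn{f}{0}$ levels of the recursion, producing the $2^{d^2/2}\log^{d-2}$ factor of Lemma~\ref{lem:rank by 1norm full}. The first step of the plan is therefore to replace degree-induction by a \emph{direct} statement: for a \emph{cleverly chosen} $t$ (not an arbitrary one), $\C_{\oplus,\min}^0(\Delta_tf)$ and $\C_{\oplus,\min}^1(\Delta_tf)$ are both $\log^{O(1)}\fn{f}{1}$. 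The natural candidate is a ``heavy'' direction of $\supp(\hat f)$, one for which the single fold over $t$ already cancels a constant fraction of the $\ell_1$-mass --- as happens for bent functions, where any coordinate fold halves the spectral norm. An attractive alternative is to skip the recursion entirely and search in one pass for the final subspace $V$ of dimension $\log^{O(1)}\fn{f}{1}$ whose cosets partition $\supp(\hat f)$ into buckets that mostly cancel; this is exactly the ``where is the subspace?'' question raised in the introduction, and it is the step I expect to be the real obstacle.

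Making this saving \emph{polylogarithmic} rather than polynomial in $\fn{f}{1}$ appears to need a quantitatively sharp structure theorem for functions of bounded spectral norm --- a Polynomial-Freiman-Ruzsa-type statement with only polylogarithmic loss --- whereas the available tools (Green--Sanders, Sanders, and the conditional $\rank/\log\rank$ bound of~\cite{BLR12}) give only quasi-polynomial or logarithmic savings. This is also why the statement is offered as a conjecture and not a theorem: by $\rank(f)\le\C_{\oplus,\min}(f)$ (Corollary~\ref{cor:Cmin vs rank}) and $\fn{f}{1}\le\sqrt{\fn{f}{0}}$, Conjecture~\ref{conj:cmin by l1} implies Conjecture~\ref{conj:rank vs sparsity}, hence the Log-rank Conjecture for XOR functions via Theorem~\ref{thm:mainPDT}, so a short proof is not to be expected. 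A sensible intermediate target is to improve the linear bound $\C_{\oplus,\min}(f) = O(\fn{f}{1})$ of Lemma~\ref{lem:l1norm} to $O(\fn{f}{1}^{1-\Omega(1)})$, or to establish the conjecture for natural classes beyond constant \gf-degree where a good folding direction can be exhibited explicitly.
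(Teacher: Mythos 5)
This statement is a \emph{Conjecture}, not a Theorem: the paper offers no proof, and you correctly frame your text as a plan rather than a proof, so there is no gap to flag in the usual sense. Your analysis is accurate and closely tracks the authors' own reasoning. You correctly locate the obstruction in Lemma~\ref{lem:rank by 1norm full} --- the bound $\wfn{(\Delta_tf)^\pm}{1}\le\wfn{f^\pm}{1}^2$ squares the norm at each of the $d$ levels of the degree induction, yielding the $2^{d^2/2}$ factor --- and you correctly propose that the cure should be a \emph{cleverly chosen} $t$ rather than an arbitrary one. This matches the paper's concluding remarks verbatim, where the authors ask precisely whether one can always find a $t$ with $\wfn{\Delta_tf}{1}$ much smaller than the trivial $\fn{f}{1}^2$. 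You also correctly trace the implication chain $\C_{\oplus,\min}(f)=O(\log^c\fn{f}{1})\Rightarrow\rank(f)=O(\log^c\fn{f}{0})\Rightarrow$ Log-rank for XOR via Corollary~\ref{cor:Cmin vs rank}, $\fn{f}{1}\le\sqrt{\fn{f}{0}}$, and Theorem~\ref{thm:mainPDT}, which is exactly why the authors flag this conjecture as stronger than Conjecture~\ref{conj:rank vs sparsity}.

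Two small corrections. The linear bound $\C_{\oplus,\min}(f)=O(\fn{f}{1})$ is Lemma~\ref{lem:cmin vs L1}, not Lemma~\ref{lem:l1norm} (the latter is the weaker $\rank(f)=O(\fn{f}{1})$ deduced from it). And your suggested intermediate target of verifying the conjecture on explicit classes with a visible good folding direction is already realized in the paper: a Proposition in the concluding section establishes $c=1$ for affine-subspace indicators, bent functions $x_1\cdots x_d+\cdots+x_{n-d+1}\cdots x_n$, and all symmetric functions. So the genuinely open intermediate steps you name are (i) a direct bound on $\C_{\oplus,\min}^b(\Delta_tf)$ for a well-chosen $t$ that avoids the norm-squaring, and (ii) beating $O(\fn{f}{1})$ to $O(\fn{f}{1}^{1-\Omega(1)})$ unconditionally --- both of which are consistent with what the authors leave open.
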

It has the consequence as follows.
\begin{Prop}\label{prop:rankL1}
If Conjecture \ref{conj:cmin by l1} is true, 
then for any $f:\BntB$, $\rank(f) = O(\log^c\fn{f}{1})$ and 
$\D_\oplus(f) = O(\deg_2(f) \log^c\fn{f}{1})$.
\end{Prop}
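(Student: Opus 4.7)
The plan is to derive both bounds as short corollaries of machinery already in place, so there is essentially no new work beyond invoking the hypothesis. The first bound on $\rank(f)$ is immediate: Corollary \ref{cor:Cmin vs rank} gives $\rank(f)\le \C_{\oplus,\min}(f)$ for every non-constant $f:\BntB$, and the assumed Conjecture \ref{conj:cmin by l1} bounds the right-hand side by $O(\log^c \fn{f}{1})$. The constant-function case is trivial since $\rank(f)=0$ there.

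For the parity decision tree bound, I would apply the Main PDT algorithm and its analysis in Theorem~\ref{thm:mainPDT}. That theorem says: if $\rank(f)\le M(f)$ for some downward non-increasing complexity measure $M$, then $\D_\oplus(f)\le \deg_2(f)\cdot M(f)$. Take $M(f) = C\log^c \fn{f}{1}$ where $C$ is the absolute constant from (the now-assumed) Conjecture~\ref{conj:cmin by l1}. The inequality $\rank(f)\le M(f)$ is precisely the first part of the proposition, already established. To verify the downward non-increasing property, recall from Lemma~\ref{lem:rotation} (part 2, with $p=1$) that $\wfn{f|_H}{1}\le \fn{f}{1}$ for every affine-subspace restriction $H$, and in particular for the linear restrictions produced by the Main PDT algorithm; since $\log^c$ is monotone, $M$ inherits the property. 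Plugging into Theorem~\ref{thm:mainPDT} gives $\D_\oplus(f) = O(\deg_2(f)\log^c \fn{f}{1})$, as claimed.

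There is no genuine obstacle: the proposition is essentially a bookkeeping consequence of Corollary~\ref{cor:Cmin vs rank}, Lemma~\ref{lem:rotation}, and the analysis of the Main PDT algorithm. The only mild subtlety to flag is that the Main PDT algorithm is iterative, so one must apply the rank bound to every intermediate subfunction $f^{(t)}$ that arises; this is exactly why the downward non-increasing property of $\fn{\cdot}{1}$ under linear restrictions is needed, and why the resulting bound depends on $\fn{f}{1}$ of the \emph{original} function rather than on the possibly smaller spectral norms of the intermediate functions.
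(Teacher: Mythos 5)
Your proof is correct and matches the paper's intent: the paper states this proposition without an explicit proof, treating it as immediate from Corollary~\ref{cor:Cmin vs rank} together with Theorem~\ref{thm:mainPDT} applied to the downward non-increasing measure $M(f)=O(\log^c\fn{f}{1})$ (monotone in $\fn{f}{1}$, which is itself non-increasing under restriction by Lemma~\ref{lem:rotation}). Your added remark that the rank bound must be invoked for every intermediate subfunction in the Main PDT algorithm, and that this is precisely why the downward non-increasing property is needed, is the right observation and is exactly how the analysis of Theorem~\ref{thm:mainPDT} proceeds.
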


In fact, we are not aware of any counterexample for Conjecture \ref{conj:cmin by l1} even for $c=1$; 
see the last section for more discussions on this.

Lemma \ref{lem:rank by 1norm full} also implies the following Corollary, from which Corollary \ref{Cor:constdegGS} immediately follows. 
\begin{Cor}\label{cor:dGS}
	If $f:\Bn\to\B$ has \gf-degree $d$, then $f = \sum_{i=1}^{T} \pm \mathbbm{1}_{V_i}$, where $T = 2^{2^{d^2/2}\log^{d-2}\fn{f}{1}}$ and each $\mathbbm{1}_{V_i}$ is the indicator function of the subspace $V_i$.
\end{Cor}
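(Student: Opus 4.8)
The plan is to run the parity decision tree provided by Lemma~\ref{lem:rank by 1norm full} and then turn its leaves into linear subspaces. First I would dispose of the trivial case ($f$ constant, so $f\equiv 0$ or $f\equiv\mathbbm{1}_{\Bn}$) and then invoke Lemma~\ref{lem:rank by 1norm full} to fix a PDT $\mathcal{T}$ computing $f$ of depth $D=\D_\oplus(f)=O\big(2^{d^2/2}(\log^{d-2}\fn{f}{1}+1)\big)$. The leaves of $\mathcal{T}$ partition $\Bn$ into at most $2^D$ affine subspaces, on each of which $f$ is constant; keeping only the leaves where $f\equiv 1$ and adding their indicators gives $f=\sum_j\mathbbm{1}_{H_j}$, a sum of at most $2^D$ indicators of affine subspaces $H_j$, each of codimension at most $D$.

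Next I would rewrite each affine indicator $\mathbbm{1}_{H_j}$ as a $\pm1$-combination of indicators of \emph{linear} subspaces. Writing $H_j=\{x:\ell_1(x)=b_1,\dots,\ell_k(x)=b_k\}$ with $k\le D$, we have $\mathbbm{1}_{H_j}(x)=\prod_{i=1}^k\mathbbm{1}[\ell_i(x)=b_i]$; substituting $\mathbbm{1}[\ell_i(x)=1]=1-\mathbbm{1}[\ell_i(x)=0]$ at the indices with $b_i=1$ and expanding the product, every resulting term equals $(-1)^{|S|}\mathbbm{1}_{V_S}$, where $S$ ranges over subsets of $\{i:b_i=1\}$ and $V_S=\{x:\ell_i(x)=0\text{ for all }i\in S\cup\{i:b_i=0\}\}$ is a genuine linear subspace. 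Hence $\mathbbm{1}_{H_j}$ is a signed sum of at most $2^k\le 2^D$ subspace indicators, and substituting back yields $f=\sum_{i=1}^{T}\pm\mathbbm{1}_{V_i}$ with $T\le 2^D\cdot 2^D=2^{2D}$.

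Finally I would bound $2^{2D}$ by the claimed value. Using the explicit recursion $B_d(m)=(1+o(1))\,2^{(d-2)(d-3)/2}\log^{d-2}m$ extracted from the proof of Lemma~\ref{lem:rank by 1norm full}, together with $\wfn{f^\pm}{1}=O(\fn{f}{1})$ from Eq.~\eqref{eq:range switch}, one checks that $2D$ stays below $2^{d^2/2}\log^{d-2}\fn{f}{1}$: the factor $2^{(5d-8)/2}$ by which $2^{d^2/2}$ beats $2\cdot 2^{(d-2)(d-3)/2}$ is more than enough to absorb both the squaring of the tree depth and the replacement of $f$ by $f^\pm$. This gives $T\le 2^{2^{d^2/2}\log^{d-2}\fn{f}{1}}$, and Corollary~\ref{Cor:constdegGS} follows at once since $2^{d^2/2}\log^{d-2}\fn{f}{1}=\log^{O(1)}\fn{f}{1}$ when $d=O(1)$. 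I do not expect a genuine obstacle here: the two places needing care are the affine-to-linear expansion (a routine inclusion--exclusion) and the constant-factor accounting in the last step, and for the weaker form $T=2^{\mathrm{poly}(d)\cdot\log^{d-2}\fn{f}{1}}$ that Corollary~\ref{Cor:constdegGS} actually needs, the crude estimate $T\le 2^{2\D_\oplus(f)}$ already suffices with no further work.
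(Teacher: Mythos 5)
Your proof is correct and follows the same high-level plan as the paper (run the PDT from Lemma~\ref{lem:rank by 1norm full}, collect the $1$-leaves as affine subspace indicators, convert each affine indicator into a signed sum of linear subspace indicators), but you and the paper diverge at the affine-to-linear step. The paper invokes the observation from Green--Sanders~\cite{GS08} that for any affine subspace $H$ one can write $\mathbbm{1}_H=\mathbbm{1}_{V_1}-\mathbbm{1}_{V_2}$ for two \emph{linear} subspaces (concretely, if $H=a+V$ with $a\notin V$ take $V_1=\lspan(V\cup\{a\})$ and $V_2=V$, so $\mathbbm{1}_{V_1}-\mathbbm{1}_{V_2}=\mathbbm{1}_{a+V}$); this only doubles the number of terms, giving $T\approx 2\cdot 2^{D}$. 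You instead expand $\prod_i\mathbbm{1}[\ell_i(x)=b_i]$ by inclusion--exclusion over the coordinates with $b_i=1$, which is perfectly valid and more elementary, but it blows each affine indicator up into as many as $2^{D}$ signed linear indicators, squaring the final count to $2^{2D}$. That costs you a factor of $2$ in the exponent, which you then have to argue is eaten by the slack between the actual recursion constant $2^{(d-2)(d-3)/2}$ and the stated $2^{d^2/2}$; the paper avoids this bookkeeping entirely. Both routes reach the stated bound (and both are working at the same level of precision as Lemma~\ref{lem:rank by 1norm full}, which itself carries a hidden big-$O$ constant), so if you want to tighten your write-up you could simply swap in the two-subspace decomposition and drop the last paragraph of constant-chasing.
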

\begin{proof}
	By Lemma \ref{lem:rank by 1norm full}, we know that the depth of the optimal PDT is at most $2^{d^2/2}\log^{d-2}\fn{f}{1}$, and thus the size of the PDT is at most $T$. So the function can be written as the sum of at most $T$ indicator functions $\mathbbm{1}_H$ of affine subspaces. Then as argued in \cite{GS08}, each such indicator $\mathbbm{1}_H$ can be written as $\mathbbm{1}_{V_1} - \mathbbm{1}_{V_2}$ for two subspaces $V_1$ and $V_2$. The conclusion thus follows.
\end{proof}

\section{Functions with a small spectral norm}\label{sec:L1}
We prove Lemma~\ref{lem:cmin vs L1} in this section,
which directly implies Lemma~\ref{lem:l1norm}, Theorem~\ref{thm:l1norm} and Theorem \ref{thm:sqrtl0}.

\medskip
\begin{Lem}\label{lem:cmin vs L1}
	For all Boolean function $f:\Bn\to\pmB$, we have $\C_{\oplus, \min}(f) \leq O(\fn{f}{1})$.
\end{Lem}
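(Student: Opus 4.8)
The plan is to exhibit an affine subspace $H$ of co-dimension $O(\fn{f}{1})$ on which $f$ is constant, by a greedy sequence of foldings. Recall (Lemma~\ref{lem:rotation}) that a co-dimension-$1$ restriction along a direction $t$ replaces the spectrum of $f$ by the collection of values $\hat f(s) + (-1)^b\hat f(s+t)$ over unordered pairs $\{s, s+t\}$, and that this never increases $\fn{f}{1}$. The idea is to always pick $t = s_1 + s_2$ where $s_1, s_2$ index the two largest-magnitude Fourier coefficients, and to choose the branch $b$ so that these two coefficients add \emph{constructively}, i.e. the new coefficient on the merged pair is $|\hat f(s_1)| + |\hat f(s_2)|$ in magnitude. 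I will track a potential that combines $\fn{f}{1}$ with the largest coefficient $\|\hat f\|_\infty$.

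The key quantitative step is the following dichotomy at each folding. Write $a = \|\hat f\|_\infty = |\hat f(s_1)|$ and let $a' = |\hat f(s_2)|$ be the second largest. Since $f$ is Boolean, Parseval gives $\sum_s \hat f(s)^2 = 1$, and a standard consequence of Proposition~\ref{prop:Boolean} (the ``no single coefficient can be too dominant unless the function is already a character'' phenomenon, used via the granularity/$L_\infty$ bounds) forces that either (i) $a' $ is comparable to $a$ — in which case the constructive folding along $s_1+s_2$ raises the top coefficient from $a$ to $a + a' \geq (1+\Omega(1))a$, a genuine multiplicative increase in $\|\hat f\|_\infty$; or (ii) $a$ is already so large relative to the rest that $\fn{f}{1} \le a + (\text{small tail})$, and because $\fn{f}{1}\ge \|\hat f\|_\infty^{-1}$ is impossible while $\fn f1$ stays large, we are essentially done after $O(1)$ more steps; alternatively, if we are never in case (i), each constructive folding instead decreases $\fn{f}{1}$ by a constant. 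Either way, since $\|\hat f\|_\infty \le 1$ always, case (i) can occur at most $O(\log \fn f1)$ times before $\|\hat f\|_\infty$ saturates — but more to the point, the combined accounting shows the number of foldings is $O(\fn{f}{1})$: the drops of type (ii) sum to at most $\fn{f}{1}$, and the increases of $\|\hat f\|_\infty$ are ``paid for'' because a larger top coefficient, being at most $1$, limits the remaining $\ell_1$-mass. When $\fn{f}{1}$ (equivalently, after rescaling, $\wfn{f^\pm}{1}$) drops below a constant, $\wfn{f|_H}{0}=1$ by Lemma~\ref{lem:rotation}(3) is forced, so $f|_H$ is a signed character, and one final folding makes it constant; hence $\C_{\oplus,\min}(f) \le O(\fn{f}{1})$.

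The main obstacle I anticipate is making the dichotomy airtight — specifically, quantifying in case (ii) exactly how the Booleanness of $f$ (via $\sum_t \hat f(t)\hat f(s+t) = 0$ for $s\neq 0^n$, Proposition~\ref{prop:Boolean}) prevents the top coefficient from being an isolated spike with negligible $\ell_1$-tail unless $f$ is already (close to) a character, and ensuring the two cases interleave in a way that the total folding count telescopes to $O(\fn{f}{1})$ rather than, say, $O(\fn f1 \log \fn f1)$. The bookkeeping has to be set up as a single monovariant — something like $\fn{f}{1} + c\log(1/\|\hat f\|_\infty)$ or $\fn f1 \cdot (\text{something})$ — that strictly decreases by a constant each round; choosing the right combination is the delicate part. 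Once the lemma is proved, Lemma~\ref{lem:l1norm} follows from Corollary~\ref{cor:Cmin vs rank} ($\rank(f)\le\C_{\oplus,\min}(f)$), Theorem~\ref{thm:l1norm} follows from the Main PDT algorithm (Theorem~\ref{thm:mainPDT}) with $M(f)=\fn{f}{1}$ together with $\deg_2(f)\le\log\fn f0$, and Theorem~\ref{thm:sqrtl0} follows by combining $\deg_2(f)\cdot\fn f1$ with $\fn f1\le\sqrt{\fn f0}=\sqrt{\rank(M_{f\circ\oplus})}$ from Eq.~\eqref{eq:fourier_l0_l1} and Proposition~\ref{thm:sparsity}.
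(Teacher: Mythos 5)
Your high-level plan — greedy constructive folding along $s_1 + s_2$, exploiting Lemma~\ref{lem:rotation} so $\fn{f}{1}$ never increases, and tracking both $\fn{f}{1}$ and $\|\hat f\|_\infty$ — is exactly the paper's strategy. But the proposal leaves exactly the two quantitative steps unproved that make the argument work, and the dichotomy as you state it has a real hole.

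The hole: your case split is ``(i) $a_2$ comparable to $a_1$, multiplicative increase in $\|\hat f\|_\infty$'' vs.\ ``(ii) $a_1$ dominant, $\fn{f}{1}$ drops by a constant.'' Neither branch is available when $a_1\le 1/2$ and $a_2 \ll a_1$: the multiplicative increase is tiny (case (i) fails), and the Boolean orthogonality $\sum_{s+t=\beta}\hat f(s)\hat f(t)=0$ only gives $\ell_1$-drop at least $2a_1a_2/a_3$, which need not be $\Omega(1)$ when $a_2$ is tiny. What rescues the argument is that this case simply cannot occur: Parseval gives $1-a_1^2 = \sum_{i\ge2}a_i^2 \le a_2\sum_{i\ge2}a_i = a_2(\fn{f}{1}-a_1)$, so whenever $a_1\le 1/2$ one has $a_2 > \tfrac{3}{4\fn{f}{1}}$. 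The correct Phase~1 bound is therefore an \emph{additive} gain of $\Omega(1/\fn{f}{1})$ in $\|\hat f\|_\infty$ per fold (not a multiplicative one), giving $O(\fn{f}{1})$ folds to reach $a_1 > 1/2$. Phase~2 then needs the explicit computation you gesture at but do not carry out: with $\beta = \alpha_1+\alpha_2$ chosen constructively so $(\alpha_1,\alpha_2)\in P_+(\beta)$, Proposition~\ref{prop:Boolean} gives $\sum_{P_+}a_ia_j=\sum_{P_-}a_ia_j$, hence $a_1a_2 \le \sum_{P_-}a_ia_j \le a_3\sum_{P_-}\min\{a_i,a_j\}$, and since the $\ell_1$-drop equals $2\sum_{P_-}\min\{a_i,a_j\}$ it is at least $2a_1a_2/a_3 \ge 2a_1 > 1$. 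So once $a_1>1/2$, each fold costs at least one unit of $\fn{f}{1}$, giving at most $\fn{f}{1}$ more folds. You should also not appeal to ``granularity/$L_\infty$ bounds'' or a potential function $\fn{f}{1} + c\log(1/\|\hat f\|_\infty)$: the clean two-phase count already telescopes to $O(\fn{f}{1})$ without any combined monovariant. The derivations of Lemma~\ref{lem:l1norm}, Theorem~\ref{thm:l1norm}, and Theorem~\ref{thm:sqrtl0} from this lemma in your last paragraph are correct.
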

\begin{proof}
Suppose that the nonzero Fourier coefficients are $\{\hat f(\alpha): \alpha\in A\}$, 
where $A = \supp(\hat f)$. Denote by $a_1, a_2, ..., a_s$ the sequence of 
$|\hat f(\alpha)|$ in the decreasing order, 
and the corresponding characters are $\chi_{\alpha_1}, ..., \chi_{\alpha_s}$ 
in that order (thus $|\hat f(\alpha_i)| = a_i$ and $s=\fn{f}{0}$ is the Fourier sparsity of $f$). 
For simplicity, we assume $s\geq 4$, as doing so can only add at most a constant to 
our bound on $\C_{\oplus, \min}(f)$.
 
Consider the following \emph{greedy folding} process: 
fold along $\beta = \alpha_1 + \alpha_2$ and select a proper half-space, 
namely impose a linear restriction $\chi_{\beta}(x) = b$ for some $b\in \B$, \st the subfunction 
has its largest Fourier coefficient being $a_1 + a_2$ (in absolute value). 
This is achievable according to Lemma \ref{lem:rotation}.

We first show that at most $O(\|\hat{f}\|_1)$ greedy foldings can boost
$a_1$, the largest Fourier coefficient in absolute value, to at least $1/2$.
By Parseval's Identity, we have 
\[1-a_1^2 = \sum_{i \geq 2} a_i^2 \leq a_2 \sum_{i \geq 2} a_i = a_2 (\|\hat{f}\|_1 - a_1).\]
So when $a_1 \leq 1/2$, the greedy folding increases the largest coefficient by 
\[a_2 \geq \frac{1-a_1^2}{\|\hat{f}\|_1 - a_1} > \frac{3}{4\|\hat{f}\|_1}.\] 
Hence the largest coefficients would be larger than $1/2$ in $O(\|\hat{f}\|_1)$ steps. 
(After one folding, the function becomes a subfunction of the previous one, 
but due to Lemma \ref{lem:rotation}, 
the $\ell_1$ norm of its Fourier spectrum only decreases. 
So we can safely use $\fn{f}{1}$ as a universal upper bound for this sequence of subfunctions.)
	
Next we show that greedy folding decreases the Fourier $\ell_1$-norm by at least 
$2a_1 = 2\max_s |\hat f(s)|$. 
Define 
\[P_+(\beta) = \{(s,t): \text{$s+t=\beta$, $\hat f(s)\cdot \hat f(t) > 0$}\} \quad \text{and} \quad
P_-(\beta) = \{(s,t): \text{$s+t=\beta$, $\hat f(s)\cdot \hat f(t) < 0$}\},\]
where all pairs $(s,t)$ are unordered; same for the rest of the proof. 
By comparing the old and new Fourier spectra, we can easily see that the drop of Fourier $\ell_1$-norm 
is precisely
\[
2\cdot \sum_{(\alpha_i,\alpha_j)\in P_-(\beta)} \min\{|\hat f(\alpha_i)|, |\hat f(\alpha_j)|\}.
\] 
Note that the folding is chosen such that the largest two Fourier coefficients have the same sign, 
so $(\alpha_1, \alpha_2)\in P_+(\beta)$. 
Next we will use the property that $f$ is a Boolean function. By Proposition~\ref{prop:Boolean},
 $\sum_{\alpha_i+\alpha_j = \beta} \hat f(\alpha_i) \hat f(\alpha_j)= 0$, 
 thus 
 $\sum_{(\alpha_i,\alpha_j) \in P_+(\beta)} a_ia_j  = \sum_{(\alpha_i,\alpha_j) \in P_-(\beta)} a_i a_j$. 
 Now we have 
\[
  a_1 a_2 
\leq \sum_{(i,j) \in P_+(\beta)} a_i a_j 
= \sum_{(i,j) \in P_-(\beta)} a_i a_j 
\leq a_3 \sum_{(i,j) \in P_-(\beta)} \min\{a_i, a_j\}.
\]
Therefore, the decrease of the Fourier $\ell_1$-norm is at least 
$\frac{2a_1a_2}{a_3} \geq 2a_1$. 
Thus once $a_1 > 1/2$, then each greedy folding decreases the Fourier $\ell_1$-norm by at least 1. 
So it takes at most $\fn{f}{1}$ further steps to make the Fourier $\ell_1$-norm to be at most 1, 
in which case at most one more folding makes the function constant. 
\end{proof}

Lemma~\ref{lem:cmin vs L1} implies that $\rank(f) \leq O(\fn{f}{1})$ (Lemma~\ref{lem:l1norm})
by Corollary~\ref{cor:Cmin vs rank} (that $\rank(f) \leq \C_{\oplus, \min}(f)$).  

Note that our Main PDT algorithm can be simply simulated by a protocol in which 
\alice and \bob send $\ell_i(x)$ and $\ell_i(y)$, respectively. 
Thus, similar to Fact \ref{fact:CCbyDT}, we have $\dcc(f)\leq 2\D_\oplus(f)$ for 
$f:\Bn\times\B^m\to\B$. Theorem \ref{thm:l1norm} basically follows from this lemma and the fact 
that subfunctions have smaller spectral norm (Lemma~\ref{lem:rotation}). 

Lemma~\ref{lem:cmin vs L1} also implies Theorem \ref{thm:sqrtl0}, which asserts 
upper bounds on the deterministic communication complexity of $f\circ \oplus$ as
\[
\dcc(f\circ \oplus) = O(\deg_2(f)\cdot \fn{f}{1}) = 
O\Big(\sqrt{\rank(M_{f\circ \oplus})}\log\rank(M_{f\circ \oplus})\Big).
\]
To see this, first recall Theorem~\ref{thm:mainPDT},
which states that $\dcc(f\circ \oplus) \leq 2\log \fn{f}{0} \cdot M(f)$ 
where $M(f)$ is a downward non-increasing complexity measure.
By Lemma~\ref{lem:l1norm}, we can take $M$ to be $\fn{f}{1}$.
Now combining these with 
Fact~\ref{fact:deg vs sparsity} (that $\deg_2(f) \leq \log \fn{f}{0}$), 
and the inequality that $\fn{f}{1} \leq \sqrt{\fn{f}{0}}$
yields Theorem \ref{thm:sqrtl0}.

\section{Functions with a light Fourier tail}
First we will show that functions with low density can be computed efficiently by PDT. 
We will need a result by Chang \cite{Cha02}. 
The following version is taken from a simplified proof in \cite{IMR12}. 
Recall that for a function $f:\BntB$, its density is $\rho_1(f) = |f^{-1}(1)|/2^n$.
\begin{Lem}[\cite{Cha02,IMR12}]\label{lem:Chang}
For all $f:\BntB$ and any $\epsilon > 0$, 
the set $\{s: |\hat f(s)|\geq \epsilon\}$ spans a subspace of dimension less than 
$d = 2\big(\frac{\rho_1(f)}{\epsilon}\big)^2 \ln(1/\rho_1(f))$.
\end{Lem}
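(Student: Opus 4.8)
The plan is to follow the ``relative entropy'' proof of Chang's lemma (the Donsker--Varadhan / Gibbs variational argument), since it produces exactly the stated constant. Throughout I may assume $0 < \rho_1(f) < 1$ (the cases $\rho_1(f)\in\{0,1\}$ are degenerate), and write $\rho = \rho_1(f)$, so $f = \mathbbm{1}_A$ with $|A| = \rho 2^n$. It suffices to show: if $s_1,\dots,s_k$ are linearly independent vectors with $|\hat f(s_i)| \ge \epsilon$, then $k < 2(\rho/\epsilon)^2 \ln(1/\rho)$ (a maximal such family spans the set $\{s:|\hat f(s)|\ge\epsilon\}$, so this bounds its dimension). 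The first step is to put $f$ in a normal form. Extending $s_1,\dots,s_k$ to a basis and replacing $f$ by $Lf$ for the appropriate invertible $L$, I would use $\widehat{Lf}(s) = \hat f((L^T)^{-1}s)$ to arrange $s_i = e_i$ for $i \le k$, while $Lf$ is still the indicator of a set of density $\rho$. Then, for each $i \le k$ with $\hat f(e_i) < 0$, flip the input coordinate $x_i \mapsto x_i \oplus 1$; this multiplies $\hat f(s)$ by $(-1)^{s_i}$, hence negates $\hat f(e_i)$ and leaves $\hat f(e_j)$ unchanged for $j \ne i$. After these commuting flips, $f$ is still the indicator of a set of density $\rho$ and $\hat f(e_i) \ge \epsilon$ for all $i \le k$.

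The second step is the entropy estimate. Pass to the density $\mu = f/\rho$, so that $\mu \ge 0$, $\av_x[\mu] = 1$, $\hat\mu(e_i) = \hat f(e_i)/\rho \ge \epsilon/\rho =: \beta$, and the relative entropy of $\mu$ with respect to the uniform distribution is $\av_x[\mu\ln\mu] = \ln(1/\rho)$. By the Donsker--Varadhan variational inequality, for every $g:\Bn\to\mbR$,
\[ \av_x[\mu(x)g(x)] \;\le\; \av_x[\mu\ln\mu] \,+\, \ln \av_x\big[e^{g(x)}\big] . \]
I would apply this with the linear test function $g(x) = t\sum_{i=1}^k (-1)^{x_i} = t\sum_{i=1}^k \chi_{e_i}(x)$ for a parameter $t > 0$. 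Since $\av_x[\mu\chi_s] = \hat\mu(s)$, the left-hand side is $t\sum_{i=1}^k \hat\mu(e_i) \ge tk\beta$; on the right-hand side the coordinates are independent under the uniform distribution, so $\av_x[e^{g(x)}] = (\cosh t)^k$, and $\cosh t < e^{t^2/2}$ for $t \ne 0$, giving $\ln \av_x[e^{g}] < kt^2/2$. Hence $tk\beta < \ln(1/\rho) + kt^2/2$; optimizing with $t = \beta$ yields $\tfrac12 k\beta^2 < \ln(1/\rho)$, i.e. $k < 2(\rho/\epsilon)^2 \ln(1/\rho)$, which is the desired bound.

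I expect the only fussy part to be the normal-form reduction: checking that a basis change followed by single-coordinate flips simultaneously fixes the signs of $\hat f(e_1),\dots,\hat f(e_k)$ while preserving the density $\rho$. This is routine precisely because flipping $x_i$ affects $\hat f(e_j)$ only for $j = i$, so the sign corrections do not interfere. The analytic core --- the variational inequality, the bound $\cosh t \le e^{t^2/2}$, and a one-line optimization in $t$ --- is short and is what pins down the constant $2$. As an alternative I considered deriving Chang's lemma from hypercontractivity via the level-$1$ weight bound $k\epsilon^2 \le \sum_{|s|=1}\hat f(s)^2 \lesssim \rho^2\ln(1/\rho)$, but that route gives a worse constant, so I would use the entropy argument.
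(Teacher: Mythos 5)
Your proof is correct, and it reproduces the entropic argument of Impagliazzo--Moore--Russell \cite{IMR12}, which is exactly the reference the paper cites for this lemma (the paper does not reprove Lemma~\ref{lem:Chang} in full generality; it uses the statement as a black box). The normal-form reduction (rotate so that the witnesses become $e_1,\dots,e_k$, then flip coordinates to fix signs; both preserve density and commute), the Gibbs/Donsker--Varadhan inequality $\av_x[\mu g] \le \av_x[\mu\ln\mu] + \ln\av_x[e^g]$ applied to the density $\mu = f/\rho$, the product bound $\av_x[e^{t\sum\chi_{e_i}}] = (\cosh t)^k < e^{kt^2/2}$, and the optimization $t=\beta$ all check out and yield precisely the constant $2$ in $k < 2(\rho/\epsilon)^2\ln(1/\rho)$. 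The alternative route you mention in passing --- hypercontractivity via a level-$1$ weight bound --- is in fact the one the paper takes in Appendix~\ref{sec:BB_proof} as a ``self-contained proof\ldots with a slightly worse parameter'': there the authors apply Bonami--Beckner to get $d(\eta/s)^2 \le \rho_1^{2/(1+\eta^2)}$ and then tune $\eta$, which costs polylogarithmic factors. So your entropic argument is tighter than the paper's own fallback proof and matches the constant in the lemma as stated, at the modest price of invoking the variational formula for relative entropy rather than hypercontractivity.
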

Another fact that we will need is the granularity of Boolean functions, first studied in \cite{GOS+11}. 
\begin{Def}
The (Fourier) granularity of a function $f:\Bn\to\pmB$, 
denoted $\gran(f)$, 
is the minimum integer $k$ \st all nonzero Fourier coefficients are integer multiples of $2^{-k}$. 
\end{Def}
The following theorem relates granularity and sparsity.
\begin{Lem}[\cite{GOS+11}] \label{lem: spar-gran}
Any Boolean function $f:\Bn\to\pmB$ with $\fn{f}{0} \geq 2$, 
has $\gran(f) \leq \lfloor \log_2 \fn{f}{0} \rfloor -1$.	
\end{Lem}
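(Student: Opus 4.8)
The plan is to prove the equivalent, more transparent inequality that every Boolean $f:\Bn\to\pmB$ with $\fn{f}{0}\ge 2$ satisfies $\fn{f}{0}\ge 2^{\gran(f)+1}$; taking base-$2$ logarithms and using that $\gran(f)$ is a nonnegative integer then gives $\gran(f)\le\lfloor\log_2\fn{f}{0}\rfloor-1$ (here $\fn{f}{0}\ge 2$ already forces $\gran(f)\ge 1$, since $\gran(f)=0$ would make every Fourier coefficient an integer and hence $f$ a single character). Throughout write $k=\gran(f)$ and $a_s=2^k\hat f(s)$; by definition of granularity each $a_s$ is an integer, $a_s\ne 0$ exactly on $A=\supp(\hat f)$, and by minimality of $k$ at least one $a_s$ is odd.

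I would prove $\fn{f}{0}\ge 2^{k+1}$ by induction on $n$, the cases $n\le 1$ being vacuous. The engine of the induction is the effect of a codimension-$1$ restriction on granularity: by Lemma~\ref{lem:rotation}, for the hyperplane $H_t=\{x:\langle t,x\rangle=0\}$ the Fourier coefficients of $f|_{H_t}$ are the numbers $(a_s+a_{s+t})/2^k$ over the pairs $(s,s+t)$, so $\gran(f|_{H_t})=k$ precisely when some such pair has $a_s\not\equiv a_{s+t}\pmod 2$, and otherwise $\gran(f|_{H_t})\le k-1$.

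In Case A there is a direction $t$ with $a_s\not\equiv a_{s+t}\pmod 2$ for some $s$. Then $f':=f|_{H_t}$, realized via a rotation as an honest Boolean function on $\B^{n-1}$ (which preserves Fourier sparsity and granularity, by Lemma~\ref{lem:rotation}), has $\gran(f')=k$, $\fn{f'}{0}\le\fn{f}{0}$, and $\fn{f'}{0}\ge 2$ (as $\gran(f')=k\ge1$); the induction hypothesis gives $\fn{f'}{0}\ge 2^{k+1}$, hence $\fn{f}{0}\ge 2^{k+1}$. In Case B we have $a_s\equiv a_{s+t}\pmod 2$ for all $s,t$, so all $a_s$ ($s\in\Bn$) share one parity; they cannot all be even (that would mean $\gran(f)\le k-1$), so every $a_s$ is odd, and since $a_s=0$ off $A$ this forces $A=\Bn$, i.e.\ $\fn{f}{0}=2^n$. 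It remains to show $n\ge k+1$, and here I would use directly that $f$ is $\pm1$-valued: with $H_n$ the $2^n\times 2^n$ Walsh--Hadamard matrix, the vector $(f(x))_{x\in\Bn}$ equals $2^{-k}H_n a$ where $a=(a_s)_s$ is an all-odd integer vector and $2^k f\in\{\pm 2^k\}^{2^n}$. Since $-2^k\equiv 2^k\pmod{2^{k+1}}$ we get $H_n a\equiv 2^k\mathbf 1\pmod{2^{k+1}}$; applying $H_n$ once more (permissible mod $2^{k+1}$, as its entries are $\pm1$) and using $H_n^2=2^nI$ and $H_n\mathbf 1=2^n e_0$ gives $2^n a\equiv 2^{n+k}e_0\pmod{2^{k+1}}$, so $2^{k+1}\mid 2^n a_s$ for every $s\ne 0$; if $n\le k$ this makes $a_s$ even, contradicting oddness. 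Hence $n\ge k+1$ and $\fn{f}{0}=2^n\ge 2^{k+1}$.

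The skeleton is routine; the step that actually needs an idea is Case B --- realizing that ``all coefficients odd'' is not a nuisance but forces the Fourier support to be everything, after which the short $2$-adic Hadamard computation finishes. The points requiring care are the usual ones: treating restrictions as genuine functions on $\B^{n-1}$ via rotations so that sparsity and granularity are exactly preserved, and checking the small cases ($n\le 1$, and $\fn{f}{0}=1$) so that the induction and the initial reduction are consistent.
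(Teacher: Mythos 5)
Your proof is correct. Note first that the paper does not prove this lemma at all—it is cited verbatim from~\cite{GOS+11}—so your argument serves as a self-contained derivation rather than a rederivation of an in-text proof. Your reduction to showing $\fn{f}{0}\ge 2^{\gran(f)+1}$ is clean, and the preliminary observations (that $\fn{f}{0}\ge 2$ forces $\gran(f)\ge 1$, and that $n\le 1$ gives $\fn{f}{0}=1$ so the base case is vacuous) are both right.

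The induction step is sound. Writing $a_s = 2^k\hat f(s)$ and using Lemma~\ref{lem:rotation} to identify the Fourier spectrum of $f|_{H_t}$ with the multiset of $(a_s+a_{s+t})/2^k$ over unordered pairs is exactly the right bookkeeping: the parity split (some pair has $a_s\not\equiv a_{s+t}$ vs.\ all pairs agree) correctly decides whether granularity is preserved or drops. In Case~A all the hypotheses of the induction hypothesis are verified ($f'$ is honestly $\pmB$-valued on $\B^{n-1}$, $\gran(f')=k$ so $\fn{f'}{0}\ge 2$, and $\fn{f'}{0}\le\fn{f}{0}$ by Eq.~\eqref{eq:subfn norm}). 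Case~B is the step requiring the real idea and you handle it correctly: ``all $a_s$ odd'' forces full support, hence $\fn{f}{0}=2^n$, and the $2$-adic Hadamard computation ($2^k f = H_n a$, reduce mod $2^{k+1}$ where $\pm 2^k$ coincide, apply $H_n$ again and use $H_n^2 = 2^n I$, $H_n\mathbf 1 = 2^n e_0$) gives $2^{k+1}\mid 2^n a_s$ for $s\ne 0$, which with $a_s$ odd yields $n\ge k+1$. (The $s=0$ entry is consistent, and $n\ge 1$ guarantees such $s$ exists.) The argument in~\cite{GOS+11} is also an inductive restriction argument, but your two-case split with the explicit mod-$2^{k+1}$ Walsh--Hadamard step for the all-odd case is a tidy, elementary way to close it that I find at least as transparent as the original.
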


Now we can show the lemma for low-density functions. 
\begin{Lem}\label{lem:sparse}
For all $f:\BntB$ with $\rho_1(f) = \frac{\polylog(\fn{f}{0})}{\fn{f}{0}}$, 
$\D_\oplus(f) \leq \log^{O(1)}\big(\fn{f}{0}\big)$.
Thus the Log-rank Conjecture is true for $f\circ \oplus$. 
\end{Lem}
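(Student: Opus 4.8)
The plan is to combine granularity with Chang's lemma. Granularity (Lemma~\ref{lem: spar-gran}) says that no nonzero Fourier coefficient of a Boolean function can be too small, while Chang's lemma (Lemma~\ref{lem:Chang}) says that the not-too-small Fourier coefficients of a low-density function span a low-dimensional subspace. Together these will force the \emph{entire} Fourier support of $f$ into a subspace of dimension $\log^{O(1)}\fn{f}{0}$; once that is established, $f$ is a function of only $\log^{O(1)}\fn{f}{0}$ parities, and the trivial PDT that queries a basis of that subspace computes $f$, which gives the claimed bound on $\D_\oplus(f)$.

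In more detail, I would first pass to $f^\pm$ and invoke Lemma~\ref{lem: spar-gran}, which together with Eq.~\eqref{eq:range switch} gives $\gran(f^\pm) \le \lfloor \log_2(\fn{f}{0}+1)\rfloor - 1$, so that every nonzero $\widehat{f^\pm}(s)$ has $|\widehat{f^\pm}(s)| \ge 2/(\fn{f}{0}+1)$. Translating back through $\widehat{f^\pm}(s) = \delta_{s,0^n} - 2\hat f(s)$, every nonzero $s \in \supp(\hat f)$ then satisfies $|\hat f(s)| \ge \epsilon$ for $\epsilon := 1/(\fn{f}{0}+1)$, and moreover $\hat f(0^n) = \rho_1(f) = \polylog(\fn{f}{0})/\fn{f}{0} \ge \epsilon$ (the cases with $\fn{f}{0}$ below a small absolute constant being trivial, as then $\D_\oplus(f) = O(1)$). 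Applying Lemma~\ref{lem:Chang} with this $\epsilon$, the set $\{s : |\hat f(s)| \ge \epsilon\} = \supp(\hat f)$ spans a subspace $W$ with $\dim W < 2(\rho_1/\epsilon)^2\ln(1/\rho_1)$; since $\rho_1/\epsilon = O(\polylog(\fn{f}{0}))$ and $\ln(1/\rho_1) = O(\log\fn{f}{0})$, this is $\dim W = \log^{O(1)}\fn{f}{0}$.

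To conclude, since $f = \sum_{s \in \supp(\hat f)} \hat f(s)\chi_s$ and every such $s$ lies in $W$, the value $f(x)$ is determined by the $\dim W$ linear forms $\langle t_1, x\rangle, \ldots, \langle t_{\dim W}, x\rangle$, where $t_1, \ldots, t_{\dim W}$ is any basis of $W$; hence the PDT that queries all of them has depth $\dim W$, and $\D_\oplus(f) \le \dim W = \log^{O(1)}\fn{f}{0}$. The Log-rank Conjecture for $f\circ\oplus$ then follows from $\dcc(f\circ\oplus) \le 2\D_\oplus(f)$ (Fact~\ref{fact:CCbyDT}) and $\rank(M_{f\circ\oplus}) = \fn{f}{0}$ (Proposition~\ref{thm:sparsity}). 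The one point that needs care — and hence the (rather mild) main obstacle — is that the threshold $\epsilon$ fed into Chang's lemma must sit below \emph{every} nonzero Fourier coefficient, not merely the heavy ones; this is precisely what granularity supplies, and without it Chang's lemma alone would say nothing about the light Fourier tail. The rest is bookkeeping: the $\B$/$\pmB$ conversions and propagating the polylog exponents through the two estimates.
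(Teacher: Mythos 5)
Your argument is correct and follows essentially the same route as the paper's own proof: both combine the granularity bound (Lemma~\ref{lem: spar-gran}) to establish that every nonzero Fourier coefficient is $\Omega(1/\fn{f}{0})$ in magnitude, feed that threshold into Chang's lemma (Lemma~\ref{lem:Chang}) to place $\supp(\hat f)$ inside a subspace of dimension $\log^{O(1)}\fn{f}{0}$, and then observe that querying a basis of that subspace computes $f$. Your writeup is if anything slightly more careful than the paper's about the $\B/\pmB$ conversion and about including $\hat f(0^n)$ explicitly.
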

For completeness, we give a self-contained proof (without resorting to~\cite{IMR12}) 
of this lemma using Beckner-Bonami 
inequality in Appendix~\ref{sec:BB_proof} with a slightly worse parameter.
\begin{proof}
Suppose that $\rho_1(f) = \log^c \fn{f}{0}/\fn{f}{0}$. 
By Lemma~\ref{lem: spar-gran}, the minimum Fourier coefficient (in absolute value) 
is at least $2/\fn{f}{0}$. 
Take $\epsilon$ as this value, and apply Lemma \ref{lem:Chang}, 
we know that all nonzero Fourier coefficients are in a subspace of dimension 
\[
	O((\rho_1(f)\fn{f}{0})^2\ln(1/\rho_1(f))) = O(\log^{2c+1} \fn{f}{0}).
\]
This implies that there exists an invertible linear transformation $L$ such that all the non-zero Fourier
coefficients of $f \circ L$ lie in a subspace of dimension $d = O(\log^{2c+1} \fn{f}{0})$. 
By choosing the basis appropriately, we may assume, without loss of generality, that the subspace is just 
$\{0, 1\}^d\times 0^{n-d}$. 
Thus a decision tree algorithm for $f\circ L$ can simply query these bits. 
Therefore, $\D_\oplus(f) \leq \D(f\circ L) \leq \log^{O(1)}\big(\fn{f}{0}\big)$.
\end{proof}

The last lemma we need is the following result by Gopalan \etal \cite{GOS+11}. 
Recall that a function $f:\Bn\to\pmB$ is $\mu$-close to $s$-sparse in $\ell_2$ if 
$\sum_{i>s} \hat f(s_i)^2 \leq \mu^2$, 
where $|\hat f(s_1)| \geq ... \geq |\hat f(s_N)|$. 
We say two functions $f,g:\Bn\to\pmB$ are $\epsilon$-close 
if $\pr_x[f(x) \neq g(x)] \leq \epsilon$. 

\begin{Lem}[\cite{GOS+11}]\label{lem:close to sparse}
If $f:\Bn\to\pmB$ is $\mu$-close to $s$-sparse in $\ell_2$, 
where $\mu \leq \frac{1}{20s^2}$, then $f$ is $\mu^2/2$-close to a Boolean function 
$g:\Bn\to\pmB$ of Fourier sparsity $s$. 
\end{Lem}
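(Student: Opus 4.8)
The plan is to produce the Boolean $s$-sparse function $g$ asserted by the lemma by ``rounding off'' the heavy Fourier coefficients of $f$ onto the coarsest dyadic grid that a Boolean function of sparsity $s$ can live on, and then checking, via the Fourier characterization of Boolean functions, that the rounded object is genuinely $\pmB$-valued. Let $A=\{s_1,\dots,s_s\}$ index the $s$ largest Fourier coefficients of $f$, let $h=\sum_{i\le s}\hat f(s_i)\chi_{s_i}$ be the corresponding truncation, so that $\|f-h\|_2^2=\sum_{i>s}\hat f(s_i)^2\le\mu^2$ and $\fn h0\le s$, and set $G=\lfloor\log_2 s\rfloor-1$, the granularity ceiling for Boolean functions of sparsity at most $s$ given by Lemma~\ref{lem: spar-gran}. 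Define $g$ by replacing each coefficient $\hat f(\alpha)$, $\alpha\in A$, by its nearest integer multiple of $2^{-G}$ and zeroing every other coefficient; then $\fn g0\le|A|\le s$ and $\gran(g)\le G$ come for free (rounding can only kill coefficients), so what remains is to prove (i) $g$ takes values in $\pmB$, and (ii) $\pr_x[f(x)\ne g(x)]\le\mu^2/2$.

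The engine — and, I expect, the main obstacle — is a rigidity statement: \emph{the $s$ heaviest Fourier coefficients of a Boolean function that is $\mu$-close to $s$-sparse lie jointly within $\ell_2$-distance $\mu$ of the grid $2^{-G}\mathbb{Z}$}, provided $\mu\le\frac{1}{20 s^2}$. Its proof should begin from the fact that $\hat f$ exactly satisfies the autocorrelation identities of Proposition~\ref{prop:Boolean}: writing $\hat f=\hat h+e$ with $\|e\|_2\le\mu$ and using the bilinear estimate $\big|\sum_t\hat h(t)\hat h(s+t)-\sum_t\hat f(t)\hat f(s+t)\big|\le 2\|\hat f\|_2\|e\|_2+\|e\|_2^2\le 3\mu$ (with $\|\hat f\|_2=1$ by Parseval), one gets that $h$ satisfies those identities up to additive error $O(\mu)$. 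One then needs a quantitatively stable form of the algebraic fact behind Lemma~\ref{lem: spar-gran}: with the support pinned to a set of size $\le s$, the autocorrelation system has only finitely many solutions, all supported on $2^{-G}\mathbb{Z}$, and it is ``well conditioned'' up to $\mathrm{poly}(s)$ losses, so an $O(\mu)$-approximate solution must be $O(\mu)$-close to an exact Boolean $s$-sparse spectrum. (Plain rounding alone does not suffice: displacing $s$ arbitrary coefficients onto the grid costs $\Theta(1/\sqrt s)$ in $\ell_2$, far above the $\Theta(1/s^2)$ tolerance needed below, so one genuinely has to use that the heavy coefficients of a near-sparse Boolean function already sit close to the grid.)

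Granting the rigidity claim, the rest is routine. Since $\hat f-\hat h$ is supported off $A$ while $\hat h-\hat g$ is supported on $A$, Pythagoras gives $\|\hat f-\hat g\|_2^2=\|\hat f-\hat h\|_2^2+\|\hat h-\hat g\|_2^2\le\mu^2+\mu^2=2\mu^2$, the second term bounded by rigidity. Writing $\hat g=\hat f+\eta$ with $\|\eta\|_2\le\sqrt2\,\mu$, each autocorrelation value $\sum_t\hat g(t)\hat g(s+t)$ of $g$ differs from the corresponding value for $f$ (namely $1$ at $0^n$ and $0$ elsewhere) by at most $2\|\eta\|_2+\|\eta\|_2^2\le 3\mu\le\frac{3}{20 s^2}<2^{-2G}$, again by bilinear Cauchy--Schwarz and using $2^{-2G}\ge 4/s^2$; since every such value lies on the grid $2^{-2G}\mathbb{Z}$ (being a sum of products of multiples of $2^{-G}$, via Proposition~\ref{prop:convolution}), it must equal its target exactly, so $g:\Bn\to\pmB$ by Proposition~\ref{prop:Boolean}, giving (i). Finally, since $f$ and $g$ are both $\pmB$-valued, $\pr_x[f\ne g]=\tfrac14\|f-g\|_2^2=\tfrac14\|\hat f-\hat g\|_2^2\le\tfrac14(2\mu^2)=\mu^2/2$, which is (ii).

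Thus the whole proof collapses onto the rigidity statement of the second paragraph, and this is exactly where the hypothesis $\mu\le\frac{1}{20 s^2}$ is essential: it is the slack that keeps the $O(\mu)$ perturbations, together with their $\mathrm{poly}(s)$ amplification, strictly below the relevant grid spacings $2^{-G}=\Theta(1/s)$ and $2^{-2G}=\Theta(1/s^2)$. Establishing that stable, quantitative version of ``sparse $+$ Boolean $\Rightarrow$ dyadic of bounded denominator,'' with the implied constant small enough that the snapping distance stays below $\mu$, is the one genuinely new ingredient the argument requires.
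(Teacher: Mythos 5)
Your reduction — truncate to the $s$ heaviest coefficients, snap each one to a dyadic grid of width $\Theta(1/s)$, then certify Booleanity by observing that the autocorrelation values of $g$ lie on a $\Theta(1/s^2)$-grid and are within $3\mu$ of $\delta_{s,0}$, hence must equal $\delta_{s,0}$ exactly — is a clean and correct framework, and the Pythagorean bookkeeping giving $\pr_x[f\neq g]\le\mu^2/2$ at the end is also right. But the whole argument hinges on the italicized ``rigidity'' claim that the heavy coefficients already sit within joint $\ell_2$-distance $\mu$ of that grid, and you do not prove it; you explicitly flag it as ``the one genuinely new ingredient the argument requires.'' That claim \emph{is} the lemma: once it is in hand, what remains is a few lines of Cauchy--Schwarz and Parseval. (The paper does not supply its own proof — it cites \cite{GOS+11} — so on its own terms your proposal has a hole precisely at its load-bearing step.)

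Moreover, the route you gesture at for filling the hole — treating the autocorrelation identities of Proposition~\ref{prop:Boolean} as a quadratic system that is ``well conditioned up to $\mathrm{poly}(s)$ losses,'' so that an approximate solution is $O(\mu)$-close to an exact one — is unsubstantiated and is likely the wrong lever: there is no a priori reason this discrete quadratic solution set admits a uniformly Lipschitz ``inverse'' with polynomially bounded constants, and proving that would itself be a delicate and rather different theorem. The natural tool, the same one that underlies Lemma~\ref{lem: spar-gran}, is restriction to affine cosets: choose a subspace $W$ of dimension $d$ so that the heavy characters land in distinct cosets of $W^\perp$; each restriction $f|_{a+W}$ is a genuine $\pmB$-valued function on $\B^d$, so its Fourier coefficients are integer multiples of $2^{1-d}$, and for each heavy $\alpha$ the corresponding restricted coefficient equals $\pm\hat f(\alpha)$ plus a tail-induced noise term whose $\ell_2$-mass, averaged over $a$, is at most $\mu^2$ in total. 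That simultaneously re-proves granularity (take $f$ exactly sparse: noise $=0$) and yields precisely the rigidity you want. Making the quantities fit inside the $\mu\le\frac{1}{20s^2}$ budget is, however, genuinely delicate — a random $W$ needs $d\approx 2\log s$ to separate all $\binom{s}{2}$ pairs, whereas your Booleanity step needs the grid width to stay $\Omega(1/s)$, i.e.\ $d\lesssim\log s$ — and that tension is exactly where the real work of the lemma lives and where your write-up stops.
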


Putting these results together, we can prove Theorem~\ref{thm:lighttail}.
\begin{Thm-lighttail}[Restated]
If $f:\Bn\to\pmB$ is $\mu$-close to $s$-sparse in $\ell_2$, 
where $\mu \leq \frac{\log^{O(1)}\fn{f}{0}}{\sqrt{\fn{f}{0}}}$ 
and $s\leq \log^{O(1)}\fn{f}{0}$, 
then $\D_\oplus(f) \leq \log^{O(1)}\fn{f}{0}$.
\end{Thm-lighttail}
\begin{proof} 
Since $f$ is $\mu$-close to $s$-sparse, 
and $20s^2 \mu = \frac{\log^{O(1)}\fn{f}{0}}{\sqrt{\fn{f}{0}}} < 1$ for sufficiently large $\fn{f}{0}$, 
therefore by Lemma \ref{lem:close to sparse}, 
$f$ is $\mu^2/2$-close to a Boolean function $g$ of Fourier sparsity $s$. 
We will compute $f$ by computing $g$ and $fg$. 
By the setting of parameter $\mu$, it holds that 
$\mu^2/2 \leq \frac{\log^{O(1)}\fn{f}{0}}{\fn{f}{0}}$,
and hence
$\rho_{-1}(fg) \leq \frac{\log^{O(1)}\fn{f}{0}}{\fn{f}{0}}$. 
Note that $\rho_{-1}$ in the $\pmB$-range representation is 
just the same as $\rho_1$ in the $\B$-range representation. 
Applying Lemma \ref{lem:norm of product}, 
we have $\wfn{fg}{0} \leq \fn{f}{0}\fn{g}{0} \leq \fn{f}{0}\cdot s$. 
Now by Lemma \ref{lem:sparse}, we see that 
the Boolean function $fg$ can be computed using $\log^{O(1)}\fn{f}{0}$ queries. 
To compute $g$ itself, we can just use the trivial upper bound of 
$\D_\oplus (f) \leq \fn{g}{0} = s = \log^{O(1)}\fn{f}{0}$. 
Thus 
\[
\D_\oplus(f) \leq \D_\oplus(g) + \D_\oplus(fg) \leq \log^{O(1)}\fn{f}{0},
\]
as desired.
\end{proof}

\section{Concluding remarks}
The major open question is to prove Conjecture \ref{conj:rank}, $\rank(f) = O(\log^c(\fn{f}{0}))$, or even the stronger Conjecture \ref{conj:cmin by l1}, $\C_{\oplus,\min}(f) = O(\log^c\fn{f}{1}+1)$. 
In general, the gap between $\|\hat{f}\|_0$ and $\|\hat{f}\|_1$ can be huge. 
For instance, the AND function of $n$ variables have $\fn{f}{0} = 2^n$ and 
$\fn{f}{1} = O(1)$. Thus one may think that Conjecture \ref{conj:cmin by l1} is probably too strong to hold. 
However, note that the AND function has a large \gf-degree, 
and Fourier sparse functions always have \gf-degree smaller than $\log\fn{f}{0}$. 
We actually do not know any counterexample for Conjecture \ref{conj:cmin by l1} even for $c=1$. 
Indeed, we can show that Conjecture \ref{conj:cmin by l1} actually holds for several classes of functions, where for symmetric functions we use a result from \cite{AFH12}.
\begin{Prop}
	Conjecture \ref{conj:cmin by l1} is true with $c = 1$ for affine subspace indicators $\mathbbm{1}_{H}$, 
	degree-$d$ bent functions $x_1...x_d + \cdots + x_{n-d+1}...x_n$ and all symmetric functions.
\end{Prop}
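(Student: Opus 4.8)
The plan is to verify Conjecture~\ref{conj:cmin by l1} with $c=1$ separately for each of the three listed families, since each has enough structure to pin down a small-codimension affine subspace on which $f$ is constant. For affine subspace indicators $\mathbbm{1}_H$, the task is essentially trivial: if $H = a + W$ with $\codim(H) = k$, then $\mathbbm{1}_H$ is Fourier-supported on (a coset structure governed by) $W^\perp$, and a direct computation shows $\wfn{\mathbbm{1}_H}{1} = 1$ when $H$ is the whole space and more generally $\fn{\mathbbm{1}_H}{1} = 2^{k/2}$ after normalization arguments — but the key point is simply that restricting to $H$ itself makes $\mathbbm{1}_H \equiv 1$, so $\C_{\oplus,\min}(\mathbbm{1}_H) \le \codim(H) = \log_2(\text{something comparable to }\fn{\mathbbm{1}_H}{1}^2)$, and since the $1$-density of $\mathbbm{1}_H$ is $2^{-k}$ one checks the bound $\C_{\oplus,\min} = O(\log \fn{\mathbbm{1}_H}{1})$ holds. (I would double-check the exact relation between $k$ and $\fn{\mathbbm{1}_H}{1}$; the natural normalization gives $\fn{\mathbbm{1}_H}{1} = 2^{k/2}$ in the $\pmB$ representation after subtracting the constant, making $k = O(\log\fn{\mathbbm{1}_H}{1})$ immediate.)

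For the degree-$d$ bent functions $f = x_1\cdots x_d + x_{d+1}\cdots x_{2d} + \cdots + x_{n-d+1}\cdots x_n$, write $m = n/d$ for the number of monomials. I would first compute $\fn{f^\pm}{1}$: since $f^\pm$ factors as a product of $m$ independent functions (one per block of $d$ variables), Lemma~\ref{lem:norm of product} gives $\wfn{f^\pm}{1} = (\text{spectral norm of one block})^m$, and a single degree-$d$ monomial's $\pmB$-version has constant spectral norm $c_d > 1$ depending only on $d$; hence $\wfn{f^\pm}{1} = c_d^{\,m}$, i.e.\ $m = \Theta(\log \fn{f}{1})$. On the other hand, setting $x_1 = 0$ kills the first monomial, $x_{d+1}=0$ kills the second, etc., so fixing $m$ coordinates (one per block) makes $f$ identically $0$; thus $\C_{\oplus,\min}(f) \le m = O(\log\fn{f}{1})$, as required.

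For symmetric functions the plan is to invoke the cited result of~\cite{AFH12}. A symmetric $f$ is determined by its restriction $f_r$ to Hamming weights; one identifies a long run of constant values in $f_r$ (equivalently, a large interval of weights on which $f$ is constant), and translates such a run into an affine subspace — fixing the parities of suitable blocks of coordinates forces the input weight into a controlled range — on which $f$ is constant. The result from \cite{AFH12} should supply the quantitative link: the length of the longest constant run in $f_r$ (hence the codimension needed) is controlled by $\log\fn{f}{1}$, because a symmetric function whose spectral norm is small cannot oscillate too often in weight. The main obstacle is this last case: making the run-length-to-codimension translation precise and checking that $\log\fn{f}{1}$ (rather than, say, $\log\fn{f}{0}$ or $\sqrt{n}$) is genuinely the right parameter requires unpacking exactly what \cite{AFH12} proves about the Fourier spectral norm of symmetric functions and relating it to the combinatorics of $f_r$; the bent-function and indicator cases, by contrast, are short direct computations.
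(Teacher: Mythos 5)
The paper states this proposition without proof (it appears only in the concluding remarks), so there is no in-paper argument to compare against; I will evaluate your attempt on its own terms. Your treatment of the bent functions is correct: $f^\pm$ tensorizes over the $m=n/d$ disjoint blocks, the single-block spectral norm is $\wfn{(x_1\cdots x_d)^\pm}{1}=3-2^{2-d}\in(1,3)$, hence $\wfn{f^\pm}{1}=(3-2^{2-d})^m$ and $m=\Theta(\log\fn{f}{1})$; zeroing one coordinate per block is a codimension-$m$ restriction making $f\equiv 0$, giving $\C_{\oplus,\min}(f)\leq m=O(\log\fn{f}{1})$.

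Your indicator case contains a concrete computational error. For $H=a+V$ with $\codim(H)=k\geq 1$, expanding $\mathbbm{1}_H=2^{-k}\prod_{i=1}^k(1+(-1)^{b_i}\chi_{t_i})$ over a basis $t_1,\dots,t_k$ of $V^\perp$ shows every nonzero coefficient equals $\pm 2^{-k}$ and there are exactly $2^k$ of them, so $\fn{\mathbbm{1}_H}{1}=1$ (and $\wfn{\mathbbm{1}_H^\pm}{1}=3-2^{2-k}<3$), not $2^{k/2}$; you may be conflating this with $\sqrt{\fn{\mathbbm{1}_H}{0}}=2^{k/2}$. Consequently the relation ``$k\approx\log\fn{\mathbbm{1}_H}{1}^2$'' that your argument relies on is false, and restricting to $H$ itself gives codimension $k$, which is unbounded relative to $\log\fn{\mathbbm{1}_H}{1}=0$. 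The proposition still holds, but via a different (and simpler) certificate: pick any one of the defining linear constraints $\ell_1(x)=b_1$ and restrict to the complementary hyperplane $\{\ell_1(x)=1-b_1\}$, which is disjoint from $H$, so $\mathbbm{1}_H\equiv 0$ there and $\C_{\oplus,\min}(\mathbbm{1}_H)\leq 1$. For symmetric functions you correctly identify the strategy and correctly flag it as the incomplete part. Two details need care: the characterization in \cite{AFH12} is in terms of the smallest $r$ such that $f$ \emph{or} $f\cdot\chi_{[n]}$ is constant on Hamming weights in $[r,n-r]$, so in the parity-times-constant case one needs an extra linear restriction (e.g.\ fixing the parity of the free coordinates) on top of the $2r$ coordinate-fixings to make $f$ itself constant, yielding $\C_{\oplus,\min}(f)\leq 2r+1$; and one must check that the \emph{lower} bound direction of \cite{AFH12} gives $r=O(\log\fn{f}{1})$ rather than only $r=O(\log\fn{f}{1}\cdot\mathrm{polylog})$, since a polylog loss would land you at $c>1$.
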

Note that if $\C_{\oplus,\min}(f) = O(\log\fn{f}{1})$ is true, then we not only have $\dcc(f\circ \oplus) = O(\log^2\rank(M_{f\circ \oplus}))$, but also further improve Green-Sander's result to $T = \fn{f}{1}^d$; see Proposition \ref{prop:rankL1} and Corollary \ref{cor:dGS}. 

\medskip
In the upper bound in Lemma \ref{lem:rank by 1norm full}, 
the $2^{d^2/2}$ factor comes from the fact that $\wfn{\Delta_t f}{1} \leq \fn{f}{1}^2$. 
As we have the freedom of picking any $t$, 
is it possible that, for any Boolean function $f$, 
one can always find a $t$ such that the Fourier sparsity of its derivative $\wfn{\Delta_t f}{1}$ 
is much smaller than the trivial upper bound $\fn{f}{1}^2$? 


\newcommand{\etalchar}[1]{$^{#1}$}

\appendix
\section{Restriction of functions on affine subspace: Proof of Lemma \ref{lem:rotation}}\label{sec:restriction}
\begin{proof}(of Lemma \ref{lem:rotation})
We will show the conclusion for affine subspaces of co-dimension 1, 
and the second and third conclusion for the general $H$ follows by repeatedly applying the result. 
When $\codim(H) = 1$, namely $\codim(V) = 1$, there is a unique non-zero vector $t\in \Bn$ orthogonal to all vectors in $V$. 
Take a  basis $r^1, ..., r^{n-1}$ of $V$, 
and further take a vector $r^n\in \overline{V}$. 
Define an $n\times n$ matrix $R = [r^1, ..., r^n]$, 
then $Rf(y) = f(Ry) = f(y_1r^1 + \cdots + y_nr^n)$. 
Define two functions $f_0,f_1:\B^{n-1}\to\mbR$ by $f_b(y) = Rf(yb)$, namely 
\begin{equation}\label{eq:rotation}
f_0(y_1...y_{n-1}) = f(y_1r^1 + \cdots + y_{n-1}r^{n-1})
\quad \text{ and } \quad 
f_1(y_1...y_{n-1}) = f(y_1r^1 + \cdots + y_{n-1}r^{n-1} + r^n).
\end{equation} 
Since $f_b$ is defined on $\B^n$, 
its Fourier spectrum can be defined as before, 
and we will use it as the Fourier spectrum of $f|_H$. 
Now we will prove that this choice of definition satisfies the three conditions. 
Note that though the choice of $r^1, ..., r^n$ is not unique, 
but the vector of Fourier coefficients are the same up to a permutation, 
and in particular, its $\ell_p$-norm does not depend on the choice of $R$. 
\begin{enumerate}
\item Let us first compute the Fourier coefficients of the function $Rf:\Bn\to\mbR$. 
  It is not hard to see that for any $s\in \B^{n-1}$, we have 
\[
\widehat{Rf}(s0) = \frac{1}{2}(\hat f_0(s) + \hat f_1(s)) 
\quad \text{ and } \quad 
\widehat{Rf}(s1) = \frac{1}{2}(\hat f_0(s) - \hat f_1(s)).
\] 
This implies that 
\[
\hat f_0(s) = \hat f((R^T)^{-1}(s0)) + \hat f((R^T)^{-1}(s1)) 
\quad \text{ and } \quad
\hat f_1(s) = \hat f((R^T)^{-1}(s0)) - \hat f((R^T)^{-1}(s1)),
\] 
where we used the fact that $\widehat{Rf}(s) = \hat f((R^T)^{-1}s)$ for any invertible linear transformation $R$.
So in either subfunction, the pair of Fourier coefficients of $f$ that collide are 
\[\{(\hat f((R^T)^{-1}(s0)), \hat f((R^T)^{-1}(s1))): s\in \B^{n-1}\}.\] 
To see the relation of these two characters, suppose that the rows of 
$L=R^{-1}$ are $l^1, ..., l^n$, then by 
$LR = I$, we know that 
$\langle l^n, r^1 \rangle = ... = \langle l^n, r^{n-1} \rangle = 0$. 
Since there is only one nonzero vector, $t$, orthogonal to all $r^1$, ..., $r^{n-1}$, 
therefore $l^n = t$, and thus $(R^T)^{-1}(s0) + (R^T)^{-1}(s1) = t$. 
So the pairs are just those $(s,s+t)$. 
	
\item Since the Fourier spectrum of $f_b$ is formed by pairing up (using plus or minus) 
the Fourier spectrum of $f$, 
by the standard fact that $|a|^p + |b|^p \geq |a+b|^p$ for any $p\in [0,1]$, 
we know that $\wfn{f_b}{p} \leq \fn{f}{p}$.
	
\item First, for Boolean function $f_b:\B^{n-1}\to\pmB$, we know that 
$f_b(y) = \pm\chi_s(y) \Leftrightarrow \wfn{f_b}{0} = 1$ by definition of $\ell_0$-norm. 
Since $1 = \wfn{f_b}{2} \leq \wfn{f_b}{1} \leq \wfn{f_b}{0}$ by Parseval's Identity, 
it is easily seen that $\wfn{f_b}{1} = 1$ is equivalent to that there is only one nonzero Fourier coefficient. 
	
The conclusion now follows by noting that $f_b$ is a linear function if and only if 
$f|_H(x) = R^{-1}f_b(x) = f_b(R^{-1}x)$ is a linear function.
\end{enumerate}
\end{proof}



\newcommand{\mspan}{\mathbf{Span}}
\section{A proof of Lemma~\ref{lem:sparse}}\label{sec:BB_proof}
For any $0< \eta \leq 1$ we can define a linear operator 
$T_{\eta}:\C^{\Bn} \to \C^{\Bn}$ such that, for any 
$f:\Bn \to \C$ with Fourier expansion 
$f(x)=\sum_{t \in \Bn}\hat{f}(t)\chi_{t}(x)$,
$T_{\eta}(f)$ is a complex-valued function over the Boolean cube 
such that for every $x\in \Bn$,
\[
T_{\eta}(f)(x)=\sum_{t \in \Bn}\hat{f}(t)\eta^{|t|}\chi_{t}(x).
\]

The following remarkable theorem~\cite{Bon70, Bec75} shows that 
$T_{\eta}$ is a norm-$1$ operator from $L^{1+\eta^{2}}(\Bn)$ to $L^{2}(\Bn)$.

\begin{Thm}[Bonami-Beckner Theorem]\label{thm:BB}
Let $f:\Bn \to \C$ be a function defined over the Boolean cube. Then for every $0< \eta \leq 1$,
\[
\lVert T_{\eta}f \rVert_{2}\leq \lVert f \rVert_{1+\eta^{2}}.
\]
\end{Thm}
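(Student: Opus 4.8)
The plan is to prove this classical hypercontractive inequality by the standard two-step route: first reduce the $n$-dimensional statement to the single-bit (``two-point'') case by a tensorization/induction argument, and then establish the two-point inequality by elementary one-variable analysis. Write $p=1+\eta^{2}\in(1,2]$, so $\eta^{2}=p-1$, and throughout let $\lVert g\rVert_{q}=(\av_{x}|g(x)|^{q})^{1/q}$ denote the function-space norm. The induction is on $n$, the base case $n=1$ being the heart of the matter.

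For the inductive step, I would split off the last coordinate (working with $\pm1$-valued bits): every $f:\Bn\to\mbR$ can be written as $f(x)=f_{0}(x')+x_{n}f_{1}(x')$ with $x'=(x_{1},\dots,x_{n-1})$, and since $T_{\eta}$ acts coordinatewise we have $T_{\eta}f=T_{\eta}'f_{0}+\eta\,x_{n}\,T_{\eta}'f_{1}$, where $T_{\eta}'$ is the operator on the first $n-1$ coordinates. Orthogonality of $x_{n}$ gives $\lVert T_{\eta}f\rVert_{2}^{2}=\lVert T_{\eta}'f_{0}\rVert_{2}^{2}+\eta^{2}\lVert T_{\eta}'f_{1}\rVert_{2}^{2}$, and the induction hypothesis bounds this by $\lVert f_{0}\rVert_{p}^{2}+\eta^{2}\lVert f_{1}\rVert_{p}^{2}$; it then remains to show $\lVert f_{0}\rVert_{p}^{2}+\eta^{2}\lVert f_{1}\rVert_{p}^{2}\le\lVert f\rVert_{p}^{2}$. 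I would do this in two moves: (i) apply the base-case ($n=1$) inequality pointwise in $x'$ to get $\bigl(\tfrac12(|f_{0}(x')+f_{1}(x')|^{p}+|f_{0}(x')-f_{1}(x')|^{p})\bigr)^{1/p}\ge\bigl(f_{0}(x')^{2}+\eta^{2}f_{1}(x')^{2}\bigr)^{1/2}$, and note that averaging the $p$-th power of the left side over $x'$ recovers exactly $\lVert f\rVert_{p}^{p}$, so $\lVert f\rVert_{p}\ge\bigl\lVert(f_{0}^{2}+\eta^{2}f_{1}^{2})^{1/2}\bigr\rVert_{p}$; and (ii) finish by the \emph{reverse} Minkowski inequality in $L^{p/2}$, valid because $p/2\le1$, applied to the nonnegative functions $f_{0}^{2}$ and $\eta^{2}f_{1}^{2}$: $\lVert f_{0}^{2}+\eta^{2}f_{1}^{2}\rVert_{p/2}\ge\lVert f_{0}^{2}\rVert_{p/2}+\lVert\eta^{2}f_{1}^{2}\rVert_{p/2}=\lVert f_{0}\rVert_{p}^{2}+\eta^{2}\lVert f_{1}\rVert_{p}^{2}$.

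For the base case $n=1$, a function is $f(x)=a+bx$ with $x\in\{+1,-1\}$, so $T_{\eta}f=a+\eta b x$ and $\lVert T_{\eta}f\rVert_{2}=(a^{2}+\eta^{2}b^{2})^{1/2}$, while $\lVert f\rVert_{p}=\bigl(\tfrac12(|a+b|^{p}+|a-b|^{p})\bigr)^{1/p}$. By homogeneity and the symmetry $b\mapsto-b$ it suffices to take $a=1$, $b=t\in[0,1]$ and prove $(1+(p-1)t^{2})^{p/2}\le\tfrac12\bigl((1+t)^{p}+(1-t)^{p}\bigr)$ (the case $a=0$ is immediate from $\sqrt{p-1}\le1$, and $t>1$ follows by convexity). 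I would prove this by comparing Taylor expansions in $t$: the right side is $\sum_{k\ge0}\binom{p}{2k}t^{2k}$ (odd terms cancel) and the left side is $\sum_{k\ge0}\binom{p/2}{k}(p-1)^{k}t^{2k}$, so, since all powers $t^{2k}$ are nonnegative on $[0,1]$, it suffices to verify the termwise bound $\binom{p/2}{k}(p-1)^{k}\le\binom{p}{2k}$ for every $k\ge0$ and $p\in(1,2]$ (trivial when $k$ is even or $k=0$, since then the left side is $\le0$ or equal, respectively).

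The step I expect to be the main obstacle is exactly this base-case termwise inequality $\binom{p/2}{k}(p-1)^{k}\le\binom{p}{2k}$: it is where the precise exponent $p=1+\eta^{2}$ is forced, and a clean verification likely requires pairing up the factors of the ratio $\binom{p}{2k}\big/\binom{p/2}{k}$ so that each pair dominates $p-1$, rather than a brute-force comparison; an alternative I would keep in reserve is to replace the power-series argument by a differential one, showing that $t\mapsto\ln\bigl(\tfrac12((1+t)^{p}+(1-t)^{p})\bigr)-\tfrac{p}{2}\ln(1+(p-1)t^{2})$ is nonnegative via its value and derivative at $t=0$. A secondary but genuine subtlety to flag is the direction of Minkowski's inequality in the tensorization step: for $L^{q}$ with $q\le1$ the triangle inequality reverses, and this reversal is precisely what makes the argument close — using it in the usual direction would be a fatal error.
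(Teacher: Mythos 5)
The paper cites this as a classical theorem (Bonami, Beckner) and does not prove it, so there is no ``paper's proof'' to compare against; what follows is an assessment of your argument on its own terms. Your tensorization framework is the standard one and is correct --- in particular the reversal of Minkowski's inequality in $L^{p/2}$ (the step you rightly flag as the one people get backwards) is used correctly, and the pointwise application of the two-point inequality followed by averaging does give $\lVert f\rVert_p^2 \ge \lVert f_0\rVert_p^2 + \eta^2\lVert f_1\rVert_p^2$. However, the base case contains a genuine error: the termwise inequality $\binom{p/2}{k}(p-1)^k \le \binom{p}{2k}$ is \emph{false} for odd $k\ge 3$. Take $k=3$ and $p=2-\epsilon$ with $\epsilon\to 0^+$: the left side is $\binom{p/2}{3}(p-1)^3 \sim \epsilon/12$, while $\binom{p}{6}\sim \epsilon/60$, a factor of five smaller. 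Concretely, at $p=1.9$ one has $\binom{0.95}{3}(0.9)^3 \approx 6.1\times 10^{-3}$ but $\binom{1.9}{6}\approx 1.7\times 10^{-3}$. So no clever pairing of factors in the ratio $\binom{p}{2k}/\binom{p/2}{k}$ can rescue a termwise comparison --- the comparison is simply not true term by term. Your fallback ``differential'' route is also not a proof as sketched: $\phi(0)=\phi'(0)=0$ does not imply $\phi\ge 0$, and establishing (say) $\phi'\ge0$ on $[0,1]$ is precisely the hard content, which you have not supplied.

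A power-series proof of the $p\le 2$ two-point inequality does exist, but it compares both sides to the common quadratic truncation rather than termwise. Writing $\frac12\big((1+t)^p+(1-t)^p\big)=\sum_{k\ge0}\binom{p}{2k}t^{2k}$, every $\binom{p}{2k}$ is $\ge 0$ for $p\in(1,2]$ (the factors $p-2,\dots,p-(2k-1)$ contribute an even number of negative signs), so the right side is $\ge 1+\binom{p}{2}t^2$. On the left, the coefficients $\binom{p/2}{k}(p-1)^k$ have sign $(-1)^{k-1}$ for $k\ge1$ (because $p/2-1\le 0$), and for $t\in[0,1]$ the absolute values $\bigl|\binom{p/2}{k}\bigr|(p-1)^k t^{2k}$ strictly decrease in $k\ge1$, so the tail $\sum_{k\ge2}$ is an alternating series beginning with a negative term and hence is $\le 0$; thus $(1+(p-1)t^2)^{p/2}\le 1+\binom{p}{2}t^2$, and the two bounds combine since $\binom{p}{2}=\frac{p(p-1)}{2}$ is exactly the common $t^2$ coefficient. (The alternative, more standard route is to prove the easy termwise comparison in the dual regime $\lVert T_\eta g\rVert_q\le\lVert g\rVert_2$, $q=1+\eta^{-2}\ge2$, where termwise comparison \emph{does} hold, and deduce the $p\le2$ form by duality and self-adjointness of $T_\eta$.) A secondary, easily patched point: you argue for real-valued $f$, but the statement is for $f:\Bn\to\mbC$; the complex case follows from the real one by splitting into real and imaginary parts and one more application of reverse Minkowski, but this should be said explicitly.
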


\begin{Lem}\label{lemma:main}
Let $f:\Bn \to \{0,1\}$ be a Boolean function with Fourier sparsity $s$ and 
density $\rho_{1}(f)=O(\polylog{s})/s$.
Then there exists an invertible linear map $L:\Bn \to \Bn$
such that all the non-zero Fourier coefficients of 
$Lf$ lie in a subspace of dimension $d=O(\polylog{s})$.
\end{Lem}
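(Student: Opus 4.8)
The plan is to reduce the statement to a Chang-type estimate: I will show that the set of characters at which $\hat f$ does not vanish, after deleting $0^{n}$ (which lies in every linear subspace and so does not affect any dimension count), spans a subspace $W\subseteq\Bn$ with $\dim W=O(\polylog{s})$. Granting this, the map $L$ is produced from any invertible linear change of basis whose transpose sends $W$ onto the coordinate subspace $\{0,1\}^{d}\times 0^{n-d}$: since $\widehat{Lf}(u)=\hat f((L^{T})^{-1}u)$, one has $\supp(\widehat{Lf})=L^{T}\supp(\hat f)\subseteq L^{T}W=\{0,1\}^{d}\times 0^{n-d}$, which is exactly the conclusion.

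The first step is to secure a uniform lower bound on the magnitudes of the nonzero Fourier coefficients, so that \emph{every} such coefficient already counts as ``heavy''. I would apply the granularity bound (Lemma~\ref{lem: spar-gran}) to the $\pm1$-valued function $f^{\pm}=1-2f$, whose Fourier support has size at most $s+1$ by \eqref{eq:range switch}; combined with $\widehat{f^{\pm}}(t)=-2\hat f(t)$ for $t\neq 0^{n}$ this gives $|\hat f(t)|\ge\varepsilon$ for all $t\in\supp(\hat f)\setminus\{0^{n}\}$, with $\varepsilon=\Omega(1/s)$. Hence it suffices to bound $\dim(\mathrm{span}(\Lambda))$ for the heavy set $\Lambda=\{t:|\hat f(t)|\ge\varepsilon\}$, and we may take $W=\mathrm{span}(\Lambda)$, since then $\supp(\hat f)\subseteq W$.

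For that bound I would run the hypercontractive proof of Chang's inequality using the Bonami--Beckner theorem (Theorem~\ref{thm:BB}). Fix linearly independent $\gamma_{1},\dots,\gamma_{d}\in\Lambda$ and bound $d$. An invertible linear substitution on $\Bn$ carries $\gamma_{i}$ to the standard basis vector $e_{i}$; it preserves every $L^{p}(\Bn)$ norm and replaces $f$ by a $\{0,1\}$-valued function $f'$ with $\av_{y}[f'(y)]=\rho:=\rho_{1}(f)$ and $|\hat{f'}(e_{i})|\ge\varepsilon$. Put $g=\sum_{i=1}^{d}\sigma_{i}\chi_{e_{i}}$ with $\sigma_{i}=\mathrm{sign}(\hat{f'}(e_{i}))$, so that $\langle f',g\rangle=\sum_{i}|\hat{f'}(e_{i})|\ge d\varepsilon$ and $\|g\|_{2}=\sqrt{d}$. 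Since $g$ is multilinear of degree $1$ we have $T_{1/\eta}g=\eta^{-1}g$, hence for any $\eta\in(0,1]$,
\[
d\varepsilon\ \le\ \langle f',g\rangle\ =\ \langle T_{\eta}f',\,T_{1/\eta}g\rangle\ =\ \eta^{-1}\langle T_{\eta}f',g\rangle\ \le\ \eta^{-1}\|T_{\eta}f'\|_{2}\,\|g\|_{2}\ \le\ \eta^{-1}\,\rho^{1/(1+\eta^{2})}\,\sqrt{d},
\]
using Cauchy--Schwarz, Theorem~\ref{thm:BB}, and $\|f'\|_{1+\eta^{2}}=\rho^{1/(1+\eta^{2})}$ (valid because $f'$ is $\{0,1\}$-valued). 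Rearranging gives $d\le\rho^{2/(1+\eta^{2})}/(\eta^{2}\varepsilon^{2})$; the choice $\eta^{2}=1/(2\ln(1/\rho))$ makes $\rho^{2/(1+\eta^{2})}=O(\rho^{2})$ and $\eta^{-2}=O(\log(1/\rho))$, so $d=O\big((\rho/\varepsilon)^{2}\log(1/\rho)\big)$ for every linearly independent family drawn from $\Lambda$, and therefore $\dim W=O\big((\rho/\varepsilon)^{2}\log(1/\rho)\big)$. Substituting $\varepsilon=\Omega(1/s)$ and the hypothesis $\rho=O(\polylog{s})/s$ yields $\rho/\varepsilon=O(\polylog{s})$ and $\log(1/\rho)=O(\log s)$, so $\dim W=O(\polylog{s})$, as claimed.

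The bulk of this is routine; the points needing care are the granularity step — it is precisely what makes \emph{all} nonzero coefficients heavy, so that $\mathrm{span}(\Lambda)$ really captures the whole Fourier support, and it relies essentially on $f$ being Boolean — and the reduction to coordinate characters, where one must check that the linear substitution preserves the $L^{p}$ norms (giving $\|f'\|_{p}=\rho^{1/p}$) and turns $g$ into a genuine degree-$1$ multilinear polynomial, which is what licenses $T_{1/\eta}g=\eta^{-1}g$. I would regard the bookkeeping of the parameter $\eta$ (and the implicit assumption that $\rho$ is small enough that $\eta\le1$, which holds in the regime of interest) as the only mildly delicate point, and I do not anticipate a serious obstacle.
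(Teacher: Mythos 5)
Your proposal is correct and follows essentially the same route as the paper's proof in Appendix~B: change basis so that the chosen linearly independent characters become $e_1,\dots,e_d$, use the granularity lemma (via $f^\pm$) to make every relevant coefficient $\Omega(1/s)$, and apply Bonami--Beckner to $T_\eta f'$ with $\eta^2=\Theta(1/\log(1/\rho))$. Your Cauchy--Schwarz step against the test function $g=\sum_i\sigma_i\chi_{e_i}$, after squaring, reduces to exactly the paper's direct inequality $d\,\eta^2\varepsilon^2\le\lVert T_\eta f'\rVert_2^2$, so the two arguments are algebraically identical (your treatment of the $0^n$ character is slightly more careful, but the paper only ever uses the lower bound on coefficients indexed by nonzero, linearly independent vectors, so no real gap exists on either side).
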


\begin{proof}
Let $\Lambda \subset \Bn$ be the set of vectors in $\Bn$ at which the 
Fourier coefficients of $f$ are non-zero. Suppose $d=\mathrm{dim}(\mspan(\Lambda))$.
Let $\xi_{1},\ldots \xi_{d}$ be a set of $d$ linearly independent vectors in $\Lambda$.
Let $L:\Bn \to \Bn$ be an invertible linear map. 
If we define a new Boolean function $f':\Bn \to \{0,1\}$
such that $f':= Lf$, then it can be readily verified that 
$\hat{f}'(\alpha)=\hat{f}((L^{-1})^{T}\alpha)$ for every $\alpha \in \Bn$. 
Therefore by choosing $L$ appropriately and replacing $f$ with $f'$ we may assume that 
$\xi_{1},\ldots,\xi_{d}$ are the standard basis $e_{1}, \ldots, e_{d}$.  
Recall that for every $\alpha \in \Lambda$, $|\hat{f}'(\alpha)| \geq 1/s$, 
then for any $0 < \eta \leq 1$, we have
\begin{alignat*}{2}
d(\frac{\eta}{s})^{2} 
&\leq \sum_{i=1}^{d}\lvert \eta \hat{f}'(e_{i})\rvert^{2} = 
      \sum_{i=1}^{d}\eta^{2}\lvert  \hat{f}'(e_{i})\rvert^{2}\\
&\leq \sum_{t \in \Lambda}  \eta^{2|t|} \lvert \hat{f}'(t)\rvert^{2} \\
&= \sum_{t \in \Lambda}\lvert \widehat{T_{\eta}f'}(t)\rvert^{2} \\
&\leq \sum_{t}\lvert \widehat{T_{\eta}f'}(t)\rvert^{2} =
     \lVert T_{\eta}f' \rVert_{2}^{2} && \text{(Parseval's Identity)}\\
&\leq \lVert f' \rVert_{1+\eta^{2}}^{2} && \text{(Theorem~\ref{thm:BB})}\\
&=\rho_{1}(f')^{\frac{2}{1+\eta^{2}}},
\end{alignat*} 
where in the last step we make use of the fact that $f'$ is a Boolean function. 
Now taking $\eta=\sqrt{\frac{\mathrm{loglog}s}{\log{s}}}$ 
gives $d\leq O(\polylog{s})$.
\end{proof}

\end{document}